\newtheorem{theorem}{Theorem}
\newtheorem{lemma}[theorem]{Lemma}
\newtheorem{definition}[theorem]{Definition}
\newtheorem{proposition}[theorem]{Proposition}
\newtheorem{remark}[theorem]{Remark}
\renewcommand{\baselinestretch}{1.3}
\begin{document}
\title{Gossip and Distributed Kalman Filtering: Weak Consensus under Weak Detectability}
\author{Soummya Kar$^\dagger$\thanks{$^\dagger$The authors are with the Department of Electrical and Computer Engineering, Carnegie Mellon University, 5000 Forbes Ave, Pittsburgh, PA 15213 (Tel: (412)268-6341, Fax:(412)268-3890). {\tt\small\{soummyak, moura\}@ece.cmu.edu}. This work was partially supported by NSF under grants
\#~ECS-0225449 and~\#~CNS-0428404, and by ONR under grant \#~MURI-N000140710747.}
~and~Jos\'e~M.~F.~Moura$^\dagger$
\\
}

\maketitle \thispagestyle{empty} \maketitle

\begin{abstract}The paper presents the gossip interactive Kalman filter (GIKF) for distributed Kalman filtering for networked systems and sensor networks, where inter-sensor communication and observations occur at the same time-scale. The communication among sensors is random; each sensor occasionally exchanges its filtering state information with a neighbor depending on the availability of the appropriate network link. We show that under a weak distributed detectability condition:
\begin{inparaenum}[1)]
\item the GIKF error process remains stochastically bounded, irrespective of the instability properties of the random process dynamics; and
\item the network achieves \emph{weak consensus}, i.e., the conditional estimation error covariance at a (uniformly) randomly selected sensor converges in distribution to a unique invariant measure on the space of positive semi-definite matrices (independent of the initial state.)
\end{inparaenum}
To prove these results, we interpret the filtered states (estimates and error covariances) at each node in the GIKF as stochastic particles with local interactions. We analyze the asymptotic properties  of the error process by studying as a random dynamical system the associated switched (random) Riccati equation, the switching being dictated by a non-stationary Markov chain on the network graph.
\end{abstract}
\textbf{Keywords:} Gossip, Kalman filter, consensus, random dynamical systems, random algebraic Riccati equation.
\newpage
\section{Introduction}
\label{introduction}
\subsection{Background and Motivation}
\label{backmot}
This paper presents the Gossip Interactive Kalman Filtering (GIKF). GIKF is a linear distributed estimator that filters noisy observations of a random process measured by a sparsely connected sensor network. Each sensor observes only a portion of the process, such that, acting alone, no sensor can resolve the signal.  GIKF is fundamentally different from other distributed implementations of the Kalman filter (\cite{raowhyte-91,olfati-saber:2005,Khan-Moura,Schenato-Kalman})
 that employ some form of linear consensus on the sensor observations or estimates; in contrast, GIKF involves communication and observation sampling at the same time scale. GIKF runs at each sensor a local copy of the Kalman filter and achieves collaboration through occasional asynchronous state swaps between sensors at random time instants. At the random times when a sensor communicates with one of its randomly selected neighbors, the sensor swaps its previous state (its local Kalman filter state estimate and conditional error covariance) with the state of its neighbor, before processing the current observation. In other words, when communication is established, a sensor updates the state it receives from its neighbor with its present observation; otherwise, it updates its own previous state. Such collaboration or information exchange through state swapping is asynchronous over the network and occurs occasionally, dictated by the random network topology. Indeed, due to inherent environmental randomness, the underlying medium access control (MAC) protocol is randomized and often not known at the local sensor level. We assume that the sensor network uses a generic random communication protocol, see Section~\ref{prob_form}, that subsumes the widely used gossiping protocol for real time embedded architectures, \cite{Boyd-GossipInfTheory}, and the graph matching based communication protocols for internet architectures, \cite{Mckeown}.

The paper establishes GIKF and studies its error properties. We define a weak distributed detectability condition\footnote{This condition is required even by a centralized estimator (having access to all sensor observations over all time) to yield an estimate with bounded error (for unstable systems.)} under which we show:
\begin{inparaenum}[1)]
\item the GIKF error process remains stochastically bounded, irrespective of the instability properties of the random process dynamics; and
\item the network achieves weak consensus, i.e., the conditional estimation error covariance at a (uniformly) randomly selected sensor converges in distribution to a unique invariant measure on the space of positive semi-definite matrices (independent of the initial state.)
\end{inparaenum}
To prove these results, we interpret the filtered states at each node in the GIKF as stochastic particles with local interactions and analyze the asymptotic properties  of the error process by studying as a random dynamical system the switched (random) Riccati equation, the switching being dictated by a non-stationary Markov chain on the network graph.

To study the information flows in the GIKF, we interpret the filtering states at each node as stochastic particles with controlled interactions. To prove the stochastic boundedness of the error process and the network weak consensus, we focus on these traveling states, which we refer to as tokens or particles, and not on the sequence of conditional error covariances at each sensor, which is not Markov. This particle point of view is reminiscent of the approach taken in fluid dynamics of studying the transport of a particle as it travels in the fluid (Lagrangian coordinates) rather than studying the transport at a fixed coordinate in space (Eulerian coordinates), \cite{batchelor-73}. We show that the sequence of traveling states or particles evolves according to a switched system of random Riccati operators, the switching being dictated by a non-stationary Markov chain on the graph. A key contribution is the analysis of the resulting random Riccati equation (RRE). In this context, we note that the RRE arises in the literature in several practical filtering and control formulations with non-classical information. Prior work (\cite{ckvs89,wg99,ms03,Bruno,Liu:04,Gupta:05,Xu:05,Minyi:07,kp-fb:07j,Craig:07,xx07}) mostly address qualitative properties of the RRE in terms of moment stability, whereas recent approaches focus on understanding the limit behavior in terms of weak convergence (\cite{Riccati-weakconv,Riccati-moddev,censi09,vh08}, see also~\cite{Bougerol}). In this paper, we utilize a random dynamical systems formulation of the RRE; however, in contrast with our work in~\cite{Riccati-weakconv,Riccati-moddev}, the switching sequence is no longer stationary. Several approximation arguments of independent interest are developed to tackle this non-stationary behavior and to establish the asymptotic distributional properties of the RRE.

To summarize, the paper addresses two fundamental concerns in collaborative estimation in random environments. It introduces \emph{distributed observability} for linear dynamical estimation and addresses the question of \emph{minimal} observation pattern (i.e., what should be the minimal number of sensors and what should they observe,) so that there exists a \emph{successful} filtering scheme. The weak detectability condition (introduced in Section~\ref{setup}) resolves this question through the existence of a full rank network Grammian. We show that satisfaction of the weak detectability condition leads to stochastic boundedness of the conditional filtering error at each sensor, irrespective of observability of individual sensors. The second concern addressed in the paper is that of robust information flow, which seeks to address the minimal communication required to maintain consistent (asymptotically) information dissemination in the network. The weak connectedness assumption formulated in Section~\ref{setup} quantifies the rate of information flow (in random communication environments) as the mixing time of a particle undergoing a random walk in the network with appropriate statistics. The positive recurrence of this Markov chain translates to information dissemination at a sufficient rate to cope with the (possible) instability in signal dynamics and leads to weak consensus of the filtering errors. The notion of weak consensus introduced in the paper is the best form of \emph{consensus} possible in such a setup because, as opposed to familiar scenarios (average computation/static parameter estimation,) in a dynamic situation it is not possible to accomplish almost sure (pathwise) consensus of the estimate or error processes. On the contrary, the weak consensus we establish shows that the error processes at different sensors converge in distribution to the same invariant measure. We do not characterize here this invariant measure as a function of the communication and observation policies; instead, we resolve the minimal conditions for the existence of such an invariant measure and hence conditions for the stability of the filtering error processes.

We briefly summarize the organization of the rest of the paper. Subsection~\ref{notprel} sets up notation and background material
to be used in the paper. Section~\ref{prob_form} sets-up the problem and  introduces the GIKF algorithm together with the observability and connectivity assumptions in Subsection~\ref{setup}. An interactive particle interpretation and important preliminary results are in Subsection~\ref{inter_part}.  The main results regarding the asymptotic properties of the GIKF are stated (without proof) and interpreted in Section~\ref{main_res}. To prove these results, we provide first in Section~\ref{aux} a random dynamic system~(RDS) formulation of the switching iterates of the random Riccati equation arising in the GIKF. Appendix~\ref{RDS} recalls facts and results on random dynamical systems~(RDS) needed in this Section. The main results of the paper are proved in Section~\ref{proof_main_res}. Two technical Lemmas are proven in Appendix~\ref{RDS_123}. Finally, Section~\ref{conclusion} concludes the paper.

\subsection{Notation and Preliminaries}
\label{notprel}
Let $\mathbb{R}$ be the reals; $\mathbb{R}^{M}$,
the $M$-dimensional Euclidean space; $\mathbb{T}$, the integers; $\mathbb{T}_{+}$, the non-negative integers; $\mathbb{N}$, the natural numbers; and $\mathcal{X}$, a generic space.
For  $B\subset \mathcal{X}$,  $\mathbb{I}_{B}:\mathcal{X}\longmapsto\{0,1\}$ is the indicator
function, i.e., $1$ when its argument is in~$B$ and zero otherwise; and $\mbox{id}_{\mathcal{X}}$ is the identity function on
$\mathcal{X}$.


\textbf{Cones in partially ordered Banach spaces.}
 We summarize facts and definitions on the structure of
cones in partially ordered Banach spaces. Let $V$ be a Banach
space (over the field of the reals) with a closed (w.r.t. the
Banach space norm) convex cone $V_{+}$ and assume $V_{+}\cap
(-V_{+})=\{0\}$. The cone $V_{+}$ induces a partial order in $V$,
namely, for $X,Y\in V$, we write $X\preceq Y$, if $Y-X\in V_{+}$.
In case $X\preceq Y$ and $X\neq Y$, we write $X\prec Y$. The cone
$V_{+}$ is called solid, if it has a non-empty interior
$\mbox{int}\,V_{+}$; in that case, $V_{+}$ defines a strong
ordering in $V$, and we write $X\ll Y$, if
$Y-X\in\mbox{int}\,V_{+}$. The cone $V_{+}$ is normal if the norm
$\|\cdot\|$ of $V$ is semi-monotone, i.e., $\exists\, c>0$, s.t.
$0\preceq X\preceq Y \Rightarrow \|X\|\leq c\|Y\|$. There are
various equivalent characterizations of normality, of which we
note that the normality of $V_{+}$ ensures that the topology in
$V$ induced by the Banach space norm is compatible with the
ordering induced by $V_{+}$, in the sense that any norm-bounded
set $B\subset V$ is contained in a conic interval of the form
$[X,Y]$, where $X,Y\in V$. Finally, a cone is said to be
minihedral, if every order-bounded (both upper and lower bounded)
finite set $B\subset V$ has a supremum (here bounds are w.r.t. the
partial order.)

We focus on the separable Banach
space of symmetric $n\times n$ matrices, $\mathbb{S}^{n}$,
equipped with the induced 2-norm. The subset $\mathbb{S}^{N}_{+}$
of positive semidefinite matrices is a closed, convex, solid,
normal, minihedral cone in $\mathbb{S}^{n}$, with non-empty
interior $\mathbb{S}^{N}_{++}$, the set of positive definite
matrices. The conventions above denote the
partial and strong ordering in $\mathbb{S}^{n}$ induced by
$\mathbb{S}_{+}^{N}$.

\textbf{Probability measures on metric spaces:} Let: $(\mathcal{X},d_{\mathcal{X}})$ a
complete separable metric space $\mathcal{X}$ with
metric~$d_{\mathcal{X}}$;  $\mathbb{B}(\mathcal{X})$ its Borel algebra;
 $B(\mathcal{X})$ the Banach space of real-valued bounded
functions on $\mathcal{X}$, equipped with the sup-norm, i.e.,
$f\in B(\mathcal{X}), \:\|f\|=\sup_{x\in\mathcal{X}}|f(x)|$; and $C_{b}(\mathcal{X})$  the subspace of $B(\mathcal{X})$ of continuous
functions. For $x\in\mathcal{X}$, the open ball of radius $\varepsilon>0$ centered at $x$ is denoted by $B_{\varepsilon}(x)$, i.e., $B_{\varepsilon}(x)=\left\{y\in\mathcal{X}~|~d_{\mathcal{X}}(y,x)<\varepsilon\right\}$.
For any set $\Gamma\subset\mathcal{X}$, the open $\varepsilon$-neighborhood of $\Gamma$ is given by $\Gamma_{\varepsilon}=\left\{y\in\mathcal{X}~|~\inf_{x\in\Gamma}d_{\mathcal{X}}(y,x)<\varepsilon\right\}$.
It can be shown that $\Gamma_{\varepsilon}$ is an open set.

Let $\mathcal{P}(\mathcal{X})$ be the
set of probability measures on $\mathcal{X}$.
A sequence $\{\mu_{t}\}_{t\in\mathbb{T}_{+}}$ of probability measures in
$\mathcal{P}(\mathcal{X})$ converges weakly to $\mu\in \mathcal{P}(\mathcal{X})$ if $\lim_{t\rightarrow\infty}<f,\mu_{t}>\,=\,<f,\mu>,~~\forall~f\in
C_{b}(\mathcal{X})$.
By Portmanteau's theorem, the above is equivalent to any one of
the following:
\begin{itemize}
\item[]\mbox[i] For all closed $F\in\mathbb{B}(\mathcal{X})$\hspace{1cm}
$\limsup_{t\rightarrow\infty}\mathbb{\mu}_{t}(F)\leq\mu(F)$
\item[]\mbox[ii] For all open $O\in\mathbb{B}(\mathcal{X})$ \hspace{1cm}
$\liminf_{t\rightarrow\infty}\mu_{t}(O)\geq\mu(O)$
\end{itemize}
Weak convergence is denoted by $\mu_{t}\Longrightarrow\mu$ and is
also referred to as convergence in distribution. The weak topology
on $\mathcal{P}(\mathcal{X})$ generated by weak convergence can be
metrized. In particular, e.g., \cite{Jacod-Shiryaev}, one has the
Prohorov metric $d_{p}$ on $\mathcal{P}(\mathcal{X})$, such that
the metric space $\left(\mathcal{P}(\mathcal{X}),d_{p}\right)$ is
complete, separable, and a sequence
$\{\mu_{t}\}_{t\in\mathbb{T}_{+}}$ in $\mathcal{P}(\mathcal{X})$
converges weakly to $\mu$ in $\mathcal{P}(\mathcal{X})$ \emph{iff}
$\lim_{t\rightarrow\infty}d_{p}(\mu_{t},\mu)=0$.
The distance between two probability measures $\mathbb{\mu}_{1},\mathbb{\mu}_{2}$ in $\mathcal{P}(\mathcal{X})$ is computed as:
\begin{equation}
\label{WC23}
d_{P}\left(\mathbb{\mu}_{1},\mathbb{\mu}_{2}\right)=\inf\left\{\varepsilon>0~|~\mathbb{\mu}_{1}(\mathcal{F})\leq \mathbb{\mu}_{2}(\mathcal{F}_{\varepsilon})+\varepsilon,~~~\forall~\mbox{closed set}~\mathcal{F}\right\}
\end{equation}

\section{Gossip Interactive Kalman Filter (GIKF)}
\label{prob_form}

\subsection{Problem setup}
\label{setup}

\textbf{Signal/Observation Model}
We consider a discrete-time linear dynamical system observed by a network of $N$ sensors. The signal model is:
\begin{equation}
\label{sys_model}
\mathbf{x}_{t+1}=\mathcal{F}\mathbf{x}_{t}+\mathbf{w}_{t}
\end{equation}
where $\mathbf{x}_{t}\in\mathbb{R}^{M}$ is the signal (state)
vector with initial state $\mathbf{x}_{0}$ distributed as a
zero mean Gaussian vector with covariance $\widehat{P}_{0}$ and
the system noise $\{\mathbf{w}_{t}\}$ is an uncorrelated zero mean
Gaussian sequence independent of $\mathbf{x}_{0}$ with covariance
$\mathcal{Q}$. The observation at the $n$-th sensor
$\mathbf{y}^{n}_{t}\in\mathbb{R}^{m_{n}}$ at time $t$ is:
\begin{equation}
\label{obs_n}
\mathbf{y}^{n}_{t}=\mathcal{C}_{n}\mathbf{x}_{t}+\mathbf{v}^{n}_{t}
\end{equation}
where $\mathcal{C}_{n}\in\mathbb{R}^{m_{n}\times M}$ and
$\{\mathbf{v}^{n}_{t}\}$ is an uncorrelated zero mean Gaussian
observation noise sequence with covariance $\mathcal{R}_{n}\gg
\mathbf{0}$. Also, the noise sequences at different sensors are
independent of each other, the system noise process and the
initial system state. Because of the limited capability of the
sensors, typically the dimension of $\mathbf{y}^{n}_{t}$ is much
smaller than that of the signal process and the observation
process at each sensor is not sufficient to make the pair
$\{\mathbf{x}_{t},\mathbf{y}^{n}_{t}\}$ observable\footnote{It is possible that some of the sensors have no observation
capabilities, i.e., the corresponding $C_{n}$ is a zero matrix.
Thus the formulation easily carries over to networks of
heterogeneous agents, consisting of `sensors' which actually sense
the field of interest and actuators, which implement local control
actions based on the estimated field.}. We envision a totally
distributed application where a reliable estimate of the signal
process is required at each sensor.\footnote{The term sensor
network here refers to a network of \emph{agents} (possibly
distributed over a geographical region) with varied
functionalities. For example, some agents may be physical sensors
while others may be remote actuators, in which case, the
corresponding observation matrix $C_{n}$ is identically zero. In
this paper, we use the term sensor to denote a generic network
agent.} The sensors achieve collaboration with each other by means
of occasional communication with their \emph{neighbors}, whereby
they exchange their filtering states (to be defined precisely.) We
assume that time is slotted and inter-sensor communication and
sensing (observation) take place at the same time-scale.

\textbf{Communication Model}
Communication among sensors is constrained by several factors including proximity, transmit power, and receiving capabilities. We model the underlying communication structure of the network in terms of an undirected graph $(V,\mathcal{E})$ where $V$ denotes the set of $N$ sensors and $\mathcal{E}$ is the set of edges or allowable communication links between the sensors. The notation $n\sim l$ indicates that sensors $n$ and $l$ can communicate, i.e., $\mathcal{E}$ contains the undirected edge $(n,l)$. The graph can be represented in terms of its $N\times N$ symmetric adjacency matrix $\mathcal{A}$:
\begin{equation}
\label{def_mathcalA}
\mathcal{A}_{nl}=\left\{ \begin{array}{ll}
                    1 & \mbox{if $(n,l)\in \mathcal{E}$}\\
                    0 & \mbox{otherwise}
                   \end{array}
          \right.
\end{equation}
We assume that the diagonal elements of $\mathcal{A}$ are
identically 1, indicating that a sensor $n$ can always communicate
to itself. Note, that $\mathcal{E}$ is the maximal allowable set
of links in the network at any time, however, at a particular
instant, each sensor may choose to communicate only to a fraction
of its neighbors. The exact communication protocol is not so
important for the analysis, as long as some \emph{weak}
connectivity assumptions are satisfied. For definiteness, we
assume the following generic communication model, which subsumes
the widely used gossiping protocol for real time embedded
architectures (\cite{Boyd-GossipInfTheory}) and the graph matching
based communication protocols for internet architectures
(\cite{Mckeown}.) We make this precise in the following, which we generalize later. Define the set $\mathcal{M}$
of symmetric 0-1 $N\times N$ matrices:
\begin{equation}
\label{def_mathcalM}\mathcal{M}=\left\{A~\left|~\mathbf{1}^{T}A=\mathbf{1}^{T},~~A\mathbf{1}=\mathbf{1},~~A\leq \mathcal{E}\right.\right\}
\end{equation}
In other words, $\mathcal{M}$ is the set of adjacency matrices,
such that, every node is incident to exactly one edge (including
self edges) and allowable edges are only those included in
$E$.\footnote{The set $\mathcal{M}$ is always non-empty, in
particular the $N\times N$ identity matrix $I_{N}\in\mathcal{M}$.}
Let $\mathcal{D}$ be a probability distribution on the space
$\mathcal{M}$. The sequence of time-varying adjacency matrices,
$\{A(t)\}_{t\in\mathbb{N}}$, governing the inter-sensor
communication, is then an i.i.d. sequence in $\mathcal{M}$ with
distribution $\mathcal{D}$ and independent of the signal and
observation processes.\footnote{For convenience of presentation,
we assume that $A(0)=I_{N}$, although communication starts at slot
$t=1$.} We make the following assumption of connectivity on the
average:

\textbf{Assumption C.1}: Define the symmetric stochastic matrix
\label{assumptionc.1}
$\overline{A}$ as
\begin{equation}
\label{def_barA}\overline{A}=\mathbb{E}\left[A(t)\right]=\int_{\mathcal{M}}Ad\mathcal{D}(A)
\end{equation}
The matrix $\overline{A}$ is assumed to be irreducible and aperiodic.

\begin{remark}
\label{rem1} The stochasticity of $\overline{A}$ is inherited from
that of the elements of $\mathcal{M}$. We are not concerned
with the properties of the distribution $\mathcal{D}$ as long as
the \emph{weak} connectivity assumption above is satisfied. The
issue of $\overline{A}$ being irreducible depends both on the set
of allowable edges $\mathcal{E}$ and the distribution
$\mathcal{D}$. We do not pursue that question in detail here.
However, to show the applicability of Assumption~\textbf{C.1} and
justify the notion of weak connectivity, we note that such a
distribution $\mathcal{D}$ always exists if the graph
$(V,\mathcal{E})$ is connected. We give a Markov chain
interpretation of the mean adjacency matrix $\overline{A}$, which
will be helpful for the analysis to follow. The matrix
$\overline{A}$ can be associated to the transition kernel of a
time-homogeneous Markov chain on the state space $V$. Since the
state space $V$ is finite, the irreducibility of $\overline{A}$
suggests that the resulting Markov chain is positive recurrent.
Due to symmetricity, the Markov chain is reversible with unique
invariant distribution $\pi$ on $V$, where $\pi$ is the discrete
uniform distribution on $V$.
\end{remark}

\textbf{Observability Conditions: Weak Detectability}
Successful filtering even in the centralized setting (assuming
all the sensors can forward their observations at all time to a
fusion center) requires some form of detectability and
stabilizability. In the present distributed setting we impose the
following weak assumptions on the signal/observation model:

\textbf{Stabilizability: Assumption S.1} The pair
\label{assumptions.1}
$(\mathcal{F},\mathcal{Q}^{1/2})$ is stabilizable. The
non-degeneracy of $\mathcal{Q}$ ensures this.

For distributed detectability, we assume the following:

%

\textbf{Weak Detectability: Assumption D.1} There exists  a
\label{assumptiond.2}
walk\footnote{A walk in this context is defined w.r.t.~the graph
induced by the non-zero entries of the matrix $\overline{A}$.} of
length $\ell\geq 1$, $\left(n_{1},n_{2},\cdots, n_{\ell}\right)$, covering
the $N$ nodes, such that, the matrix
$\sum_{i=1}^{\ell}\left(\mathcal{F}^{i-1}\right)^{T}\mathcal{C}_{n_{i}}^{T}\mathcal{C}_{n_{i}}\mathcal{F}^{i-1}$
is invertible.
\begin{remark}
\label{rem2} Note, as permitted by the general definition of a
walk, the sequence $\left(n_{1},n_{2},\cdots, n_{\ell}\right)$ may consist of
repeated vertices and, in particular, self-loops (if permitted by
$\overline{A}$.)
\end{remark}
\begin{remark}
When
$\mathcal{F}$ is invertible, \textbf{D.1} may be replaced by the full rank of
\begin{equation}
\label{def_G}
\mathcal{G}=\sum_{n=1}^{N}\mathcal{C}_{n}^{T}\mathcal{C}_{n}
\end{equation}
Indeed, by the irreducibility of $\overline{A}$
(equivalently, by the connectivity of the graph induced by
$\overline{A}$,) we can find a walk $\left(n_{1},n_{2},\cdots, n_{\ell}\right)$ of length $\ell\geq N$, which covers the network, i.e.,
visits each node at least once. Hence, if $\mathcal{F}$ is
invertible and~(\ref{def_G}) holds, it follows that the
matrix
$\sum_{i=1}^{\ell}\left(\mathcal{F}^{i-1}\right)^{T}\mathcal{C}_{n_{i}}^{T}\mathcal{C}_{n_{i}}\mathcal{F}^{i-1}$
corresponding to this walk is invertible leading to
Assumption~\textbf{D.1}.
\end{remark}
\begin{remark}
From the positive definiteness of the measurement noise
matrices $\mathcal{R}_{n}$, it follows that under~\textbf{D.1},
the matrix
$\sum_{i=1}^{\ell}\left(\mathcal{F}^{i-1}\right)^{T}\mathcal{C}_{n_{i}}^{T}\mathcal{R}_{n_{i}}^{-1}
\mathcal{C}_{n_{i}}\mathcal{F}^{i-1}$
is invertible.
\end{remark}
\begin{remark}
Assumption~\textbf{D.1} is minimal, in the
sense, that, even in a centralized setting (a center has access to
all the sensor observations over all time,) it is required to
ensure detectability for arbitrary choice of the matrix $F$
governing the signal dynamics. This justifies the nomenclature
weak detectability.
\end{remark}

\textbf{Algorithm GIKF}
We now present the algorithm GIKF (gossip based interacting Kalman
filter) for distributed estimation of the signal process
$\mathbf{x}_{t}$ over time. We start by introducing notation. Let the filter at sensor $n$ be initialized with the pair
$\left(\widehat{\mathbf{x}}_{0|-1},\widehat{P}_{0}\right)$, where
$\widehat{\mathbf{x}}_{0|-1}$ denotes the prior estimate of
$\mathbf{x}_{0}$ (with no observation information) and
$\widehat{P}_{0}$ the corresponding error covariance. Also, by
$(\widehat{\mathbf{x}}_{t|t-1}^{n},\widehat{P}_{t}^{n})$ denote
the estimate at sensor $n$ of $\mathbf{x}_{t}$ based on
information\footnote{The information at sensor $n$ till (and
including) time $t$ corresponds to the sequence of observations
$\{\mathbf{y}^{n}_{s}\}_{0\leq s\leq t}$ obtained at the sensor
and the information received by data exchange between its
neighboring senors.} till time $t-1$ and the corresponding
conditional error covariance, respectively. The pair
$\left(\widehat{\mathbf{x}}_{t|t-1}^{n},\widehat{P}_{t}^{n}\right)$ is also
referred to as the state of sensor $n$ at time $t-1$. To define
the estimate update rule for the GIKF, let $\rightarrow(n,t)$ be
the neighbor of sensor $n$ at time $t$ w.r.t.~the adjacency matrix \footnote{Note that $n(t)$ is unambiguously defined as $A(t)$ is a matching matrix, and also by symmetry we have $\rightarrow(\rightarrow(n,t),t)=n$. It is possible that
$\rightarrow(n,t)=n$, in which case the graph corresponding to $A(t)$ has a self-loop at node $n$.} $A(t)$. We assume that all inter-sensor communication for time $t$ occurs at the beginning of the slot, whereby communicating sensors swap their previous states, i.e., if at time $t$, $\rightarrow(n,t)=l$, sensor $n$ replaces its previous state $\left(\widehat{\mathbf{x}}_{t|t-1}^{n},\widehat{P}_{t}^{n}\right)$ by $\left(\widehat{\mathbf{x}}_{t|t-1}^{l},\widehat{P}_{t}^{l}\right)$ and sensor $l$ replaces its previous state $\left(\widehat{\mathbf{x}}_{t|t-1}^{l},\widehat{P}_{t}^{l}\right)$ by $\left(\widehat{\mathbf{x}}_{t|t-1}^{n},\widehat{P}_{t}^{n}\right)$. The estimate update at sensor $n$ at the end of the slot (after the communication and observation tasks have been completed) is:
{\small
\begin{eqnarray}
\label{est_up}
\widehat{\mathbf{x}}_{t+1|t}^{n}&=&\mathbb{E}\left[\mathbf{x}_{t+1}~\left|~\widehat{\mathbf{x}}_{t|t-1}^{\rightarrow(n,t)},
\widehat{P}_{t}^{\rightarrow(n,t)},\mathbf{y}^{n}_{t}\right.\right]\\
\label{est_up1}
\widehat{P}_{t+1}^{n}&=&\mathbb{E}\left[\left(\mathbf{x}_{t+1}-\widehat{\mathbf{x}}_{t+1|t}^{n}\right)
\left(\mathbf{x}_{t+1}-\widehat{\mathbf{x}}_{t+1|t}^{n}\right)^{T}~\left|~\widehat{\mathbf{x}}_{t|t-1}^{\rightarrow(n,t)},
\widehat{P}_{t}^{\rightarrow(n,t)},\mathbf{y}^{n}_{t}\right.\right]
\end{eqnarray}
}
Due to conditional Gaussianity, the filtering steps above can be implemented through the time-varying Kalman filter recursions, and it follows that the sequence $\left\{\widehat{P}_{t}^{n}\right\}$ of conditional predicted error covariance matrices at sensor $n$ satisfies the Riccati recursion:
\begin{equation}
\label{est_up2}
\widehat{P}_{t+1}^{n}=\mathcal{F}\widehat{P}_{t}^{\rightarrow(n,t)}\mathcal{F}^{T}+
\mathcal{Q}-\mathcal{F}\widehat{P}_{t}^{\rightarrow(n,t)}\mathcal{C}_{n}^{T}
\left(\mathcal{C}_{n}\widehat{P}_{t}^{\rightarrow(n,t)}\mathcal{C}_{n}^{T}+
\mathcal{R}_{n}\right)^{-1}\mathcal{C}_{n}\widehat{P}_{t}^{\rightarrow(n,t)}\mathcal{F}^{T}
\end{equation}
Note that the sequence $\left\{\widehat{P}_{t}^{n}\right\}$ is random, due to the random neighborhood selection function $\rightarrow(n,t)$. The goal of the paper is to study asymptotic properties of the sequence of random conditional error covariance matrices $\left\{\widehat{P}_{t}^{n}\right\}$ at every sensor $n$ and show in what sense they reach \emph{consensus}, so that, in the limit of large time, every sensor provides an equally good (stable in the sense of estimation error) estimate of the signal process.

\subsection{An Interacting Particle Representation}
\label{inter_part} To compactify the notation in eqn.~(\ref{est_up2}), we define the functions $f_{n}:\mathbb{S}_{+}^{N}\longmapsto\mathbb{S}_{+}^{N}$ for $n=1,\cdots,N$ defining the respective Riccati operators\footnote{In case a sensor does not observe, i.e., $C_{n}=0$, then the corresponding Riccati operator $f_{n}$ in eqn.~(\ref{def_Riccati}) reduces to the Lyapunov operator.}:
\begin{equation}
\label{def_Riccati}
f_{n}(X)=\mathcal{F}X\mathcal{F}^{T}+\mathcal{Q}-
\mathcal{F}X\mathcal{C}_{n}^{T}\left(\mathcal{C}_{n}X\mathcal{C}_{n}^{T}+
\mathcal{R}_{n}\right)^{-1}\mathcal{C}_{n}X\mathcal{F}^{T}
\end{equation}
Recall the sequence $\{\rightarrow(n,t)\}_{t\in\mathbb{T}_{+}}$ of neighborhoods of sensor $n$. The sequence of conditional error covariance matrices $\left\{P^{n}_{t}\right\}_{t\in\mathbb{T}_{+}}$ at sensor $n$ then evolves according to
\begin{equation}
\label{evolve_P}\widehat{P}_{t+1}^{n}=f_{n}\left(\widehat{P}_{t}^{\rightarrow(n,t)}\right)
\end{equation}
The sequence $\left\{\widehat{P}^{n}_{t}\right\}$ is non-Markov (not even semi-Markov given the random adjacency matrix sequence $\{A(t)\}$,) as $\widehat{P}_{t+1}^{n}$ at time $t$ is a random functional of the conditional error covariance at time $t-1$ of the sensor $\rightarrow(n,t)$, which, in general, is different from sensor $n$. This makes the evolution of the sequence $\left\{\widehat{P}^{n}_{t}\right\}$ difficult to track. To overcome this, we give the following interacting particle interpretation of the conditional error covariance evolution, which naturally leads us to track  semi-Markov sequences of conditional error covariance matrices from which we can completely characterize the evolution of the desired covariance sequences $\left\{\widehat{P}^{n}_{t}\right\}$ for $n=1,\cdots,N$.

To this end, we note that the link formation process given by the sequence $\{A(t)\}$ can be represented in terms of $N$ particles moving on the graph as identical Markov chains. The state of the $n$-th particle is denoted by $p_{n}(t)$, and the sequence $\left\{p_{n}(t)\right\}_{t\in\mathbb{T}_{+}}$ takes values in $[1,\cdots,N]$. The evolution of the $n$-th particle is given as follows:
\begin{equation}
\label{part_n}p_{n}(t)=\rightarrow(p_{n}(t-1),t),~~~p_{n}(0)=n
\end{equation}
Recall the (random) neighborhood selection $\rightarrow(n,t)$. Thus, the $n$-th particle can be viewed as originating from node $n$ at time 0 and then traveling on the graph (possibly changing its location at each time) according to the link formation process $\{A(t)\}$. The following proposition establishes important statistical properties of the sequence $\left\{p_{n}(t)\right\}_{t\in\mathbb{T}_{+}}$:
\begin{proposition}
\label{prop_int}
\item \mbox{[i]} For each $n$, the process $\left\{p_{n}(t)\right\}_{t\in\mathbb{T}_{+}}$ is a Markov chain on $V=[1,\cdots,N]$ with transition probability matrix~$\overline{A}$.
\item \mbox{[ii]} The Markov chain $\left\{p_{n}(t)\right\}_{t\in\mathbb{T}_{+}}$ is ergodic with the uniform distribution on $V$ being the attracting invariant measure.
\end{proposition}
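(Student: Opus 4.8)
The plan is to prove [i] by directly verifying the Markov property and identifying the one-step transition kernel, and to prove [ii] by invoking the standard convergence theorem for finite irreducible aperiodic chains together with the double stochasticity of $\overline{A}$.

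For [i], I would first introduce the filtration $\mathcal{G}_{t}=\sigma(A(1),\ldots,A(t))$ generated by the link-formation process, and observe from the recursion $p_{n}(s)=\rightarrow(p_{n}(s-1),s)$ with $p_{n}(0)=n$ that each $p_{n}(s)$ for $0\le s\le t-1$ is a deterministic function of $A(1),\ldots,A(t-1)$, hence $\mathcal{G}_{t-1}$-measurable. The key structural input is that $\{A(t)\}$ is i.i.d., so $A(t)$ is independent of $\mathcal{G}_{t-1}$. Consequently, conditioning on the entire trajectory $(p_{n}(0),\ldots,p_{n}(t-1))$ reveals information only about $A(1),\ldots,A(t-1)$, which carries no predictive content for the next transition, the latter being driven solely by $A(t)$. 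Replacing $p_{n}(t)$ by $\rightarrow(p_{n}(t-1),t)$ and using that $p_{n}(t-1)$ is $\mathcal{G}_{t-1}$-measurable while $A(t)\perp\mathcal{G}_{t-1}$, I would obtain
\[
P\big(p_{n}(t)=l \mid \mathcal{G}_{t-1}\big)=\sum_{k=1}^{N}\mathbb{I}_{\{p_{n}(t-1)=k\}}\,P\big(\rightarrow(k,t)=l\big).
\]
Since the right-hand side depends on $p_{n}(t-1)$ alone, the Markov property follows. To identify the transition matrix, I would note that $A(t)$ is a matching (every node incident to exactly one edge, including self-edges), so $\rightarrow(k,t)$ is well defined and $\rightarrow(k,t)=l$ holds precisely when $A(t)_{kl}=1$; as $A(t)_{kl}\in\{0,1\}$, this gives $P(\rightarrow(k,t)=l)=\mathbb{E}[A(t)_{kl}]=\overline{A}_{kl}$ by the definition $\overline{A}=\mathbb{E}[A(t)]$. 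Hence the one-step transition probabilities are exactly the entries of $\overline{A}$, proving [i].

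For [ii], Assumption~\textbf{C.1} supplies that $\overline{A}$ is irreducible and aperiodic, and since the state space $V$ is finite, the standard ergodic theorem for finite Markov chains guarantees a unique stationary distribution to which the law of $p_{n}(t)$ converges from any initial state. It remains only to identify this distribution as uniform: because each $A\in\mathcal{M}$ satisfies $\mathbf{1}^{T}A=\mathbf{1}^{T}$ and $A\mathbf{1}=\mathbf{1}$, the average $\overline{A}$ is symmetric and doubly stochastic, whence $\mathbf{1}^{T}\overline{A}=\mathbf{1}^{T}$ and the uniform distribution $\pi=\frac{1}{N}\mathbf{1}^{T}$ is invariant. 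Uniqueness then forces $\pi$ to be the attracting invariant measure, consistent with Remark~\ref{rem1}.

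I expect the ergodicity in [ii] to be routine, being essentially off-the-shelf once double stochasticity is observed. The only point requiring genuine care is the Markov-property argument in [i]: one must argue cleanly that the full path history is conditionally independent of $A(t)$ given the current position, which relies simultaneously on the i.i.d.\ structure of $\{A(t)\}$ (independence from the past) and on the well-definedness of $\rightarrow(\cdot,t)$ as a consequence of $A(t)$ being a matching matrix. Once this is set up through the filtration $\mathcal{G}_{t-1}$, the remaining computations are immediate.
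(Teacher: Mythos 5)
Your proposal is correct and follows essentially the same route as the paper: the Markov property and the identification $\mathbb{P}(\rightarrow(k,t)=l)=\mathbb{E}[A_{kl}(t)]=\overline{A}_{kl}$ via the binary entries of the matching matrices, followed by the standard finite-state ergodic theorem with double stochasticity yielding the uniform invariant measure. Your explicit filtration $\mathcal{G}_{t-1}$ merely formalizes what the paper compresses into "by the independence of $\{A(t)\}$," so there is no substantive difference.
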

\begin{proof}
For part [i], we note that, by the independence of $\{A(t)\}$, for any $t\in\mathbb{T}_{+}$ and $l_{t},\cdots, l_{0}\in V$,
{\small
\begin{eqnarray}
\label{prop_int1}
\mathbb{P}\left[p_{n}(t)=l_{t}\left|p_{n}(t-1)=l_{t-1},\cdots,p_{n}(1)=l_{1},p_{n}(0)=l_{0}\right.\right] & = & \mathbb{P}\left[p_{n}(t)=l_{t}\left|p_{n}(t-1)=l_{t-1}\right.\right]
\nonumber \\
& = &
\mathbb{P}\left[\rightarrow(l_{t-1},t-1)=l_{t}\right]
\nonumber \\
& = &
\mathbb{P}\left[A_{l_{t-1},l_{t}}(t-1)=1\right]
\nonumber \\
& = & \overline{A}_{l_{t-1},l_{t}}
\end{eqnarray}
}
\noindent
where the last step follows from the fact, that the entries of $A(t-1)$ are binary. This establishes the desired Markovianity of the sequence $\left\{p_{n}(t)\right\}_{t\in\mathbb{T}_{+}}$.

For part [ii], since the state space $V$ is finite, the irreducibility of $\overline{A}$ implies its positive recurrence and hence the invariant measure (the uniform distribution on $V$) is unique. That this measure is attracting follows from the aperiodicity of $\overline{A}$.
\end{proof}

For each of the Markov chains $\left\{p_{n}(t)\right\}_{t\in\mathbb{T}_{+}}$, we define a sequence of switched Riccati iterates $\left\{P_{n}(t)\right\}$:
\begin{equation}
\label{swP}
P_{n}(t+1)=f_{p_{n}(t)}(P_{n}(t))
\end{equation}
The sequence $\left\{P_{n}(t)\right\}_{t\in\mathbb{T}_{+}}$ can be viewed as an iterated system of Riccati maps, the random switching sequence being governed by the Markov chain $\left\{p_{n}(t)\right\}_{t\in\mathbb{T}_{+}}$. A more intuitive explanation comes from the particle interpretation, precisely the $n$-th sequence may be viewed as a particle originating at node $n$ and hopping around the network as a Markov chain with transition probability $\overline{A}$ whose instantaneous state $P_{n}(t)$ evolves by the application of the Riccati operator corresponding to its current location. In particular, in contrary to the sequence of conditional error covariances at sensor $n$, $\left\{\widehat{P}_{n}(t)\right\}_{t\in\mathbb{T}_{+}}$, the sequence $\left\{P_{n}(t)\right\}_{t\in\mathbb{T}_{+}}$ does not correspond to the error evolution at a particular sensor. The following proposition shows that the sequence $\left\{P_{n}(t)\right\}_{t\in\mathbb{T}_{+}}$ is semi-Markov and establishes its relation to the sequence
$\left\{\widehat{P}_{n}(t)\right\}_{t\in\mathbb{T}_{+}}$ of interest.
\begin{proposition}
\label{prop_semi}
\item \mbox{[i]} The sequence $\left\{P_{n}(t)\right\}_{t\in\mathbb{T}_{+}}$ is semi-Markov, given the Markov switching sequence, i.e.,
{\small
\begin{equation}
\label{swP1} \mathbb{E}\left[\mathbb{I}_{\Gamma}\left(P_{n}(t+1)\right)\left|\left\{P_{n}(s),p_{n}(s)\right\}_{0\leq s\leq t}\right.\right]=\mathbb{E}\left[\mathbb{I}_{\Gamma}\left(P_{n}(t+1)\right)
\left|P_{n}(t),p_{n}(t)\right.\right],~~~\forall t\in\mathbb{T}_{+},~~\Gamma\in\mathcal{B}(\mathbb{S}_{+}^{N})
\end{equation}
}
\item \mbox{[ii]} Consider the sequence of random permutations $\left\{\pi_{t}\right\}_{t\in\mathbb{T}_{+}}$ on $V$, given by
\begin{equation}
\label{swP100}
\left(\pi_{t+1}(1),\cdots,\pi_{t+1}(N)\right)=
\left(\rightarrow\left(\pi_{t}(1),t\right),\cdots,\rightarrow\left(\pi_{t}(N),t\right)\right)
\end{equation}
with initial condition
\begin{equation}
\label{swP101}
\left(\pi_{0}(1),\cdots,\pi_{0}(N)\right)=\left(1,\cdots,N\right)
\end{equation}
(Note that $\pi_{t}(n)=p_{n}(t)$ for every $n$, where $p_{n}(t)$ is defined in eqn.~(\ref{part_n}).) Then, for $t\in\mathbb{T}_{+}$,
\begin{equation}
\label{swP2}
\left(P_{1}(t+1),\cdots,P_{N}(t+1)\right)=\left(\widehat{P}_{\pi_{t}(1)}(t+1),\cdots,\widehat{P}_{\pi_{t}(N)}(t+1)\right)
\end{equation}
\end{proposition}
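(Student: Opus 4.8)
For part [i] the plan is to exploit that the recursion (\ref{swP}) makes $P_{n}(t+1)$ a \emph{deterministic} Borel function of the pair $\left(P_{n}(t),p_{n}(t)\right)$, namely $P_{n}(t+1)=f_{p_{n}(t)}(P_{n}(t))$, where each $f_{n}$ is a fixed continuous Riccati map. Consequently the random variable $\mathbb{I}_{\Gamma}\left(P_{n}(t+1)\right)$ is already measurable with respect to $\sigma\left(P_{n}(t),p_{n}(t)\right)$, which is a sub-$\sigma$-algebra of $\sigma\left(\left\{P_{n}(s),p_{n}(s)\right\}_{0\leq s\leq t}\right)$. Conditioning a $\sigma\left(P_{n}(t),p_{n}(t)\right)$-measurable function on either of these $\sigma$-algebras returns the same value, so both sides of (\ref{swP1}) equal $\mathbb{I}_{\Gamma}\left(f_{p_{n}(t)}(P_{n}(t))\right)$ almost surely. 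This establishes the asserted semi-Markov property with no probabilistic estimate, using only the measurability forced by the deterministic switched recursion.

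For part [ii] the plan is a direct induction on $t$, after first recording the elementary fact (stated parenthetically) that $\pi_{t}(n)=p_{n}(t)$ for every $n$; this follows by a trivial induction comparing the defining recursion (\ref{swP100})--(\ref{swP101}) of $\pi_{t}$ with the particle dynamics (\ref{part_n}), both started at the identity. The target (\ref{swP2}) is then equivalent to showing $P_{n}(t+1)=\widehat{P}_{p_{n}(t)}(t+1)$ for all $n$ and all $t$. For the base case $t=0$ I would use the convention $A(0)=I_{N}$, so that $\rightarrow(n,0)=n$, together with the common initialization $P_{n}(0)=\widehat{P}_{0}=\widehat{P}_{n}(0)$: both $P_{n}(1)=f_{p_{n}(0)}(P_{n}(0))=f_{n}(\widehat{P}_{0})$ and $\widehat{P}_{p_{n}(0)}(1)=\widehat{P}_{n}(1)=f_{n}\left(\widehat{P}_{\rightarrow(n,0)}(0)\right)=f_{n}(\widehat{P}_{0})$ coincide.

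For the inductive step, assume $P_{n}(t+1)=\widehat{P}_{p_{n}(t)}(t+1)$. Applying (\ref{swP}) gives $P_{n}(t+2)=f_{p_{n}(t+1)}(P_{n}(t+1))=f_{p_{n}(t+1)}\left(\widehat{P}_{p_{n}(t)}(t+1)\right)$. On the other hand, instantiating the sensor Riccati recursion (\ref{evolve_P}) at the sensor $m=p_{n}(t+1)$ over the step from $t+1$ to $t+2$ yields $\widehat{P}_{p_{n}(t+1)}(t+2)=f_{p_{n}(t+1)}\left(\widehat{P}_{\rightarrow(p_{n}(t+1),t+1)}(t+1)\right)$. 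The two right-hand sides agree once I verify the key combinatorial identity $\rightarrow(p_{n}(t+1),t+1)=p_{n}(t)$. This is exactly where the matching (involution) structure enters: since $A(t+1)$ is a matching matrix, the map $\rightarrow(\cdot,t+1)$ is its own inverse, so from $p_{n}(t+1)=\rightarrow(p_{n}(t),t+1)$ one obtains $\rightarrow(p_{n}(t+1),t+1)=\rightarrow\left(\rightarrow(p_{n}(t),t+1),t+1\right)=p_{n}(t)$. Substituting closes the induction.

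I expect the main obstacle to be purely bookkeeping: keeping the time index of the matching consistent between the forward particle flow, which uses the matching at the \emph{arrival} time ($p_{n}(t+1)=\rightarrow(p_{n}(t),t+1)$), and the backward-looking sensor recursion, which reads the neighbor at the \emph{departure} time ($\rightarrow(p_{n}(t+1),t+1)$), and ensuring the self-inverse property is invoked at the correct slot. Conceptually the identity simply says that the particle occupying node $p_{n}(t+1)$ at time $t+1$ is precisely the one that swapped in from $p_{n}(t)$, which is the exact content of the state-swap mechanism defining the GIKF; the rest is an immediate consequence of the deterministic Riccati recursions.
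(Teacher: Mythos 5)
Your proof is correct. The paper states Proposition~\ref{prop_semi} without proof, and your argument supplies exactly the routine reasoning it implicitly relies on: part~[i] is immediate because $P_{n}(t+1)=f_{p_{n}(t)}(P_{n}(t))$ is already $\sigma\left(P_{n}(t),p_{n}(t)\right)$-measurable, so both conditional expectations in~(\ref{swP1}) equal $\mathbb{I}_{\Gamma}\left(f_{p_{n}(t)}(P_{n}(t))\right)$ almost surely; and part~[ii] reduces, by induction and the common initialization $\widehat{P}_{n}(0)=\widehat{P}_{0}$, to the involution identity $\rightarrow\left(\rightarrow(m,t),t\right)=m$ of the matching, which is precisely the footnoted state-swap property you invoke. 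The only caveat is notational, not yours: the paper's equations~(\ref{part_n}) and~(\ref{swP100}) are off by one time index in the argument of $\rightarrow(\cdot,\cdot)$ relative to each other (and the proof of Proposition~\ref{prop_int} uses yet another convention), so the asserted identity $\pi_{t}(n)=p_{n}(t)$ requires reading the two recursions with a consistent time-stamping of the matchings, as you in fact do.
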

Part [ii] of the above proposition suggests that the asymptotics of the desired sequence $\left\{\widehat{P}_{n}(t)\right\}_{t\in\mathbb{T}_{+}}$ for every $n$ can be
obtained by studying the same for the sequences
$\left\{P_{n}(t)\right\}_{t\in\mathbb{T}_{+}}$. Also, part~[i] of
Proposition~\ref{prop_semi} demonstrates the nice structure of the
sequence $\left\{P_{n}(t)\right\}_{t\in\mathbb{T}_{+}}$. In the following, in
particular, we will show that the sequences
$\left\{P_{n}(t)\right\}_{t\in\mathbb{T}_{+}}$ reach consensus in a weak
sense, which by part [ii] will establish weak consensus for the
sequences $\left\{\widehat{P}_{n}(t)\right\}_{t\in\mathbb{T}_{+}}$ of
interest. Hence, in the subsequent sections, we will study the
sequences $\left\{P_{n}(t)\right\}_{t\in\mathbb{T}_{+}}$, rather than working
directly on the sequences $\left\{\widehat{P}_{n}(t)\right\}_{t\in\mathbb{T}_{+}}$ of interest, which
involve a much more complicated statistical dependence.

\section{Main Results}
\label{main_res} In this section, we present and discuss the main results of the paper under Assumptions~\textbf{C.1, S.1, D.1}, see page~\pageref{assumptions.1}. The first result does not directly concern the sequences $\left\{\widehat{P}_{n}(t)\right\}$ for $n=1,\cdots,N$, but sets the stage for presenting the key result regarding the convergence of these sequences and is of independent interest.

\begin{theorem}
\label{main1}
For a given $\overline{A}$, let $\left\{\widetilde{p}(t)\right\}_{t\in\mathbb{T}_{+}}$ be a stationary Markov chain on $V$ with transition probability matrix $\overline{A}$, i.e., $\widetilde{p}(0)$ is distributed uniformly on $V$. Let $\mathbb{\nu}$ be a probability measure on $\mathbb{S}_{+}^{N}$ and consider the random process $\left\{\widetilde{P}(t)\right\}_{t\in\mathbb{T}_{+}}$ given by
    \begin{equation}
    \label{main1:2}\widetilde{P}(t+1)=f_{\widetilde{p}(t)}\left(\widetilde{P}(t)\right),~~t\in\mathbb{T}_{+}
    \end{equation}
    where $\widetilde{P}(0)$ is distributed as $\mathbb{\nu}$ and independent of the Markov chain $\left\{\widetilde{p}(t)\right\}$. Then, there exists a probability measure (unique) $\mathbb{\mu}^{\overline{A}}$ (depending on $\overline{A}$ only,) such that, for every $\mathbb{\nu}$, the process $\left\{\widetilde{P}(t)\right\}$ constructed above converges weakly to $\mathbb{\mu}^{\overline{A}}$. In other words, for any $\mathbb{\nu}$ if $\widetilde{P}(0)\sim\mathbb{\nu}$ and independent of $\left\{\widetilde{p}(t)\right\}$, we have as $t\rightarrow\infty$ that the composition of Riccati operators converges in distribution
    \begin{equation}
    \label{main1:1} f_{\widetilde{p}(t)}\circ f_{\widetilde{p}(t-1)}\cdots\circ f_{\widetilde{p}(0)}\left(\widetilde{P}(0)\right)\Longrightarrow\mu^{\overline{A}}
\end{equation}
\end{theorem}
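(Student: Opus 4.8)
The plan is to treat the switched Riccati iteration (\ref{main1:2}) as a random dynamical system (RDS) over the stationary base furnished by the Markov chain $\{\widetilde{p}(t)\}$, and to extract the invariant measure $\mu^{\overline{A}}$ as the law of a pullback (backward) limit. Since $\widetilde{p}(0)$ is distributed according to the invariant (uniform) measure, the chain is stationary and extends to a two-sided stationary chain $\{\widetilde{p}(t)\}_{t\in\mathbb{T}}$ on which the shift $\theta$ is measure-preserving and ergodic (by irreducibility and aperiodicity, Assumption~\textbf{C.1}). Writing the forward cocycle $\Phi_{t}=f_{\widetilde{p}(t-1)}\circ\cdots\circ f_{\widetilde{p}(0)}$, stationarity gives that $\widetilde{P}(t)=\Phi_{t}(\widetilde{P}(0))$ has the same law as the backward product $f_{\widetilde{p}(-1)}\circ\cdots\circ f_{\widetilde{p}(-t)}(\widetilde{P}(0))$. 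I would therefore show that these backward products converge almost surely, as $t\to\infty$, to a single random matrix $Z$ whose value does not depend on the argument $\widetilde{P}(0)$; the law of $Z$ is then the claimed $\mu^{\overline{A}}$, and independence of the argument yields both convergence for every $\nu$ and uniqueness.

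The analytic engine is the contraction theory of the Riccati map. Because $\mathcal{Q}\gg\mathbf{0}$ (Assumption~\textbf{S.1}), one has $f_{n}(X)\succeq\mathcal{Q}\gg\mathbf{0}$ for every $X\in\mathbb{S}_{+}^{N}$, so after a single step all iterates lie in the interior $\mathbb{S}_{++}^{N}$, where the Riemannian (or Thompson part) metric $\delta$ is available. Each $f_{n}$ is order-preserving and non-expansive for $\delta$, while, crucially, along any weakly detectable walk $(n_{1},\cdots,n_{\ell})$ the composition $f_{n_{\ell}}\circ\cdots\circ f_{n_{1}}$ is a \emph{strict} $\delta$-contraction and maps $\mathbb{S}_{+}^{N}$ into a fixed $\delta$-bounded (relatively compact) subset of $\mathbb{S}_{++}^{N}$; this is precisely what the invertibility of the network Grammian in Assumption~\textbf{D.1}, together with stabilizability, guarantees, in the spirit of \cite{Bougerol}. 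Establishing this contraction estimate, with a ratio bounded away from $1$ uniformly over the finitely many detectable walk patterns, is the first technical step.

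I would next invoke recurrence to accumulate contraction. By Assumption~\textbf{C.1} and Proposition~\ref{prop_int}, the chain is positive recurrent and aperiodic, so the detectable walk pattern of Assumption~\textbf{D.1} is realized by consecutive blocks of the switching sequence infinitely often, almost surely, and with positive asymptotic frequency by the ergodic theorem. Since every intermediate $f_{n}$ is $\delta$-non-expansive and each completed detectable block contributes a fixed multiplicative contraction $r<1$, the overall contraction ratio of the backward product tends to $0$ almost surely. Monotonicity then lets me sandwich an arbitrary initial condition between the images of $\mathbf{0}$ and of a large matrix, both driven into the same fixed compact set after one detectable block; the vanishing contraction ratio forces $\delta(\Phi_{t}(X),\Phi_{t}(X'))\to 0$ uniformly on that set, so the backward products form an almost-sure Cauchy sequence with common limit $Z$ independent of the argument. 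Passing back through stationarity, and using that $\delta$-convergence on compacta implies convergence in the matrix norm (hence allows testing against $C_{b}(\mathbb{S}_{+}^{N})$ by bounded convergence), gives $\widetilde{P}(t)\Longrightarrow\mu^{\overline{A}}:=\mathrm{law}(Z)$ for every $\nu$; uniqueness is immediate since $Z$ is built from the chain alone.

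The main obstacle is the contraction step and its interaction with the non-contractive directions. When $\mathcal{C}_{n}=\mathbf{0}$ the map $f_{n}$ degenerates to a Lyapunov operator that need not contract (and may even expand when $\mathcal{F}$ is unstable), so strict contraction is available only for \emph{complete} detectable blocks; the delicate point is to show that the merely non-expansive single steps do not dissipate the contraction contributed by the blocks, and that the per-block ratio $r<1$ can be chosen uniformly using only positive recurrence of the Markov switching rather than independence. A secondary difficulty is the a priori $\delta$-boundedness required to run the sandwich: I must verify that detectability drives the iterates into a common $\delta$-compact subset of $\mathbb{S}_{++}^{N}$ regardless of the (possibly unbounded) starting matrix, which is exactly where the invertibility of the Grammian in Assumption~\textbf{D.1} does the essential work.
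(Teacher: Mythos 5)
Your proposal is correct in outline, but it follows a genuinely different route from the paper's. The paper first establishes Theorem~\ref{conv_aux} (weak convergence from every \emph{deterministic} initial condition) and then obtains Theorem~\ref{main1} by conditioning on $\widetilde{P}(0)=X$, using independence of $\widetilde{P}(0)$ from the chain, and applying dominated convergence. The engine behind Theorem~\ref{conv_aux} is the same two-sided stationary shift and pullback idea you describe, but the convergence itself comes from Chueshov's limit-set dichotomy (Theorem~\ref{LSD}) for order-preserving, \emph{strongly sublinear}, conditionally compact RDS: the blow-up alternative is excluded by a stochastic boundedness lemma (Lemma~\ref{stoch_boundedness}), whose key ingredient is Lemma~\ref{dist_obsRiccati} --- the composition of Riccati maps along the detectable walk is uniformly bounded by $\alpha_{0}I$ irrespective of its argument --- combined with hitting-time estimates for the $\ell$-block lift $\{z(t)\}$ of the switching chain; the resulting almost equilibrium is then shown to dominate $\mathcal{Q}$ (Lemma~\ref{eq}) so that every initial condition lies in its part. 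You instead propose a Bougerol-type argument in the invariant Thompson/Riemannian metric $\delta$ on $\mathbb{S}_{++}^{N}$: the same detectable block, now paired with the controllability supplied by $\mathcal{Q}\gg 0$, gives a strict $\delta$-contraction mapping the whole cone into a $\delta$-bounded set, and recurrence of the block pattern under \textbf{C.1} (precisely the recurrence the paper exploits through $\{z(t)\}$) accumulates the contraction so that backward products converge a.s.\ to a limit independent of the argument. Your route delivers almost-sure pullback convergence, and uniqueness plus independence of $\nu$ fall out at once, with no separate stochastic-boundedness step and no extension argument for general initial conditions; its cost is the contraction estimate you rightly flag as the crux (uniform ratio $r<1$ over the block, requiring both the observability Grammian of \textbf{D.1} and the one-step controllability from $\mathcal{Q}\gg 0$, and care with degenerate $\mathcal{C}_{n}$ or singular $\mathcal{F}$). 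The paper's route avoids that estimate entirely by verifying the cheaper hypotheses of sublinearity and conditional compactness and outsourcing the convergence to a general theorem, at the price of the auxiliary boundedness and equilibrium-domination lemmas.
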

\begin{remark} We stress here that the dependence of the invariant
measure $\mathbb{\mu}^{\overline{A}}$ on the communication policy
$\mathcal{D}$ manifests only through the mean matrix $\overline{A}$.
\end{remark}
We now state the key result characterizing the convergence properties of the sequences $\left\{\widehat{P}_{n}(t)\right\}$.
\begin{theorem}
\label{main2}
\begin{itemize}[\setlabelwidth{[i]}]

\item{\mbox{[i]}} Let $q$ be a uniformly distributed random
variable on $V$ and independent of the sequence of adjacency
matrices $\{A(t)\}_{t\in\mathbb{T}_{+}}$. Then, the sequence
$\left\{\widehat{P}_{q}(t)\right\}_{t\in\mathbb{T}_{+}}$ converges weakly to
$\mathbb{\mu}^{\overline{A}}$ (the latter being defined in
Theorem~\ref{main1}), i.e.,
    \begin{equation}
    \label{main2:1}
    \widehat{P}_{q}(t)\Longrightarrow\mu^{\overline{A}}
\end{equation}
In other words, the conditional error covariance $\left\{\widehat{P}_{q}(t)\right\}$ of any randomly selected sensor (estimator) converges in distribution to $\mathbb{\mu}^{\overline{A}}$.
\item{\mbox{[ii]}} For every $n\in [1,\cdots,N]$, the sequence
$\left\{P_{n}(t)\right\}_{t\in\mathbb{T}_{+}}$ (or the sequence
$\left\{\widehat{P}_{\pi_{t}(n)}(t)\right\}_{t\in\mathbb{T}_{+}}$ is
stochastically dominated by the distribution
$\mathbb{\mu}^{\overline{A}}$ as $t\rightarrow\infty$, i.e., for
every $\alpha>0$, we have
{\small
    \begin{equation}
    \label{main2:2}\limsup_{t\rightarrow\infty}
    \mathbb{P}\left(\left\|P_{n}(t)\right\|\geq\alpha\right)
    \leq\mathbb{\mu}^{\overline{A}}\left(\left\{X\in\mathbb{S}_{+}^{N}\left|\left\|X\right\|
    \geq\alpha\right.\right\}\right)
    \end{equation}
    \begin{equation}
    \label{main2:3}\limsup_{t\rightarrow\infty}\mathbb{P}\left(P_{n}(t)\succeq\alpha I\right)\leq\mathbb{\mu}^{\overline{A}}\left(\left\{X\in\mathbb{S}_{+}^{N}\left|X\succeq\alpha I\right.\right\}\right)
    \end{equation}
    }
    More generally, for a closed set $F$ preserving monotonicity, i.e., $X\in F$ implies $Y\in F$ for all $Y\succeq X$, we have
    \begin{equation}
    \label{main2:4}
    \limsup_{t\rightarrow\infty}\mathbb{P}\left(P_{n}(t)\in F\right)\leq\mathbb{\mu}^{\overline{A}}\left(F\right)
    \end{equation}
    In words, $\forall n$, the pathwise error associated with $\widehat{\mathbf{x}}_{\pi_{t}(n)}(t)$ is stochastically dominated by
    $\mathbb{\mu}^{\overline{A}}$.
\item{\mbox{[iii]}} For each $n$, the sequence of error
covariances $\left\{\widehat{P}_{n}(t)\right\}_{t\in\mathbb{T}_{+}}$
is stochastically bounded,
\begin{equation}
\label{main2:18}
\lim_{J\rightarrow\infty}\sup_{t\in\mathbb{T}_{+}}\mathbb{P}\left(\left\|\widehat{P}_{n}(t)\right\|\geq
J\right)=0
\end{equation}
Specifically, for all closed $F$, we have
\begin{equation}
\label{main2:19}
\limsup_{t\rightarrow\infty}\mathbb{P}\left(\widehat{P}_{n}(t)\in
F\right)\leq N\mathbb{\mu}^{\overline{A}}\left(F\right)
\end{equation}
\end{itemize}
\end{theorem}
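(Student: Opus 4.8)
The plan is to reduce all three parts to Theorem~\ref{main1} together with the exact permutation identity~\eqref{swP2} of Proposition~\ref{prop_semi}, the only genuinely new analytic ingredient being a coupling-plus-forgetting argument that upgrades Theorem~\ref{main1} from the \emph{stationary} switching chain to a chain started deterministically at a single node. For part~[i] I would first observe that, because $\pi_{t}$ is a permutation of $V$ for every $t$, identity~\eqref{swP2} gives the pathwise equality of multisets $\{P_{n}(t)\}_{n=1}^{N}=\{\widehat{P}_{n}(t)\}_{n=1}^{N}$, whence for any Borel $\Gamma$ one has $\sum_{n}\mathbb{P}(P_{n}(t)\in\Gamma)=\sum_{n}\mathbb{P}(\widehat{P}_{n}(t)\in\Gamma)$. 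Since $q$ is uniform and independent of $\{A(t)\}$, averaging over $n$ shows that $\widehat{P}_{q}(t)$ has exactly the law of $P_{Q}(t)$, where $Q$ is an independent uniform index. The point is that $\{P_{Q}(t)\}$ \emph{is} an instance of the process in Theorem~\ref{main1}: by Proposition~\ref{prop_int}[i] its switching sequence $\{p_{Q}(t)\}$ is a Markov chain with kernel $\overline{A}$ whose initial state $Q$ is uniform, hence stationary, while $P_{Q}(0)=\widehat{P}_{0}$ is deterministic (so $\nu=\delta_{\widehat{P}_{0}}$, independent of the chain). Theorem~\ref{main1} then yields $P_{Q}(t)\Longrightarrow\mu^{\overline{A}}$, and equality in distribution gives~\eqref{main2:1}.

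For part~[ii] I would prove the stronger statement that, for \emph{each fixed} $n$, $P_{n}(t)\Longrightarrow\mu^{\overline{A}}$; inequalities~\eqref{main2:2}--\eqref{main2:4} then follow from the closed-set (Portmanteau) direction of weak convergence, the monotonicity hypothesis being needed only to interpret the bound as stochastic domination. The difficulty is that $\{p_{n}(t)\}$ starts at the deterministic node $n$ rather than from the uniform law, so Theorem~\ref{main1} does not apply directly. I would couple $\{p_{n}(t)\}$ to a stationary copy $\{\widetilde{p}(t)\}$ so that, by the positive recurrence and aperiodicity of $\overline{A}$ (Proposition~\ref{prop_int}[ii]), the two chains coalesce at an a.s.\ finite time $\tau$ and agree thereafter. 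Driving both Riccati iterates from the common initial $\widehat{P}_{0}$, one then has, for $t\geq\tau$,
\begin{equation}
\label{proposal:couple}
P_{n}(t)=f_{\widetilde{p}(t-1)}\circ\cdots\circ f_{\widetilde{p}(\tau)}\bigl(P_{n}(\tau)\bigr),\qquad \widetilde{P}(t)=f_{\widetilde{p}(t-1)}\circ\cdots\circ f_{\widetilde{p}(\tau)}\bigl(\widetilde{P}(\tau)\bigr),
\end{equation}
so that $P_{n}(t)$ and $\widetilde{P}(t)$ are the \emph{same} random composition applied to two (random, finite) starting matrices. Invoking the contraction/forgetting property of the Riccati composition under weak detectability~\textbf{D.1} --- a composition spanning enough covering walks contracts the distance between any two PSD initial conditions, and positive recurrence guarantees infinitely many such walks on $[\tau,t]$ as $t\to\infty$ --- I would conclude that the two trajectories in~\eqref{proposal:couple} converge together. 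Since $\widetilde{P}(t)\Longrightarrow\mu^{\overline{A}}$ by Theorem~\ref{main1}, it follows that $P_{n}(t)\Longrightarrow\mu^{\overline{A}}$.

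Part~[iii] would follow by reading~\eqref{swP2} in reverse: since $\pi_{t-1}$ is a permutation, $\widehat{P}_{n}(t)=P_{\pi_{t-1}^{-1}(n)}(t)$ pathwise, and hence for any closed $F$, $\mathbb{I}_{F}(\widehat{P}_{n}(t))\leq\sum_{m=1}^{N}\mathbb{I}_{F}(P_{m}(t))$. Taking expectations, passing to $\limsup_{t}$, and applying the closed-set bound $\limsup_{t}\mathbb{P}(P_{m}(t)\in F)\leq\mu^{\overline{A}}(F)$, valid for \emph{all} closed $F$ because part~[ii] gives full weak convergence of each $P_{m}(t)$, yields~\eqref{main2:19}. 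Specializing to $F=\{X:\|X\|\geq J\}$ and using that $\mu^{\overline{A}}$ is a genuine (hence tight) probability measure on $\mathbb{S}_{+}^{N}$ gives the $\limsup$ control for large $t$; combining this with the a.s.\ finiteness of the finitely many covariances $\widehat{P}_{n}(t)$ on any initial horizon upgrades it to the uniform bound~\eqref{main2:18}.

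The main obstacle is precisely the forgetting property invoked in part~[ii]: the individual operators $f_{n}$ need not be contractive --- when $\mathcal{C}_{n}=0$ and $\mathcal{F}$ is unstable, $f_{n}$ is an expansive Lyapunov map --- so contraction can only be extracted from compositions that traverse a full covering walk, where~\textbf{D.1} supplies an invertible Grammian and uniform boundedness of the iterates. Making this quantitative, namely controlling the post-coalescence composition uniformly in the random starting matrices $P_{n}(\tau)$ and $\widetilde{P}(\tau)$, is where the random-dynamical-systems machinery of the later sections is required; everything else in the theorem is bookkeeping with the permutation identity and Portmanteau's theorem.
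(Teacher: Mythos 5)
Your part [i] is essentially the paper's argument: the permutation identity~(\ref{swP2}) plus the uniformity and independence of $q$ reduce $\widehat{P}_{q}(t)$ in distribution to the stationary-switching process of Theorem~\ref{main1}, which gives~(\ref{main2:1}). The trouble is in part [ii]. You propose to prove the strictly stronger statement $P_{n}(t)\Longrightarrow\mu^{\overline{A}}$ for each fixed $n$, via a coalescent coupling of $\{p_{n}(t)\}$ with a stationary chain followed by a ``forgetting'' property of the post-coalescence Riccati composition. That forgetting property --- that a composition traversing enough covering walks contracts the distance between two arbitrary positive semidefinite initial conditions --- is precisely the missing ingredient: it is not supplied by Lemma~\ref{dist_obsRiccati}, which gives only the uniform bound $g_{w_{0}}(X)\preceq\alpha_{0}I$ and no contraction, nor by the RDS dichotomy (Theorem~\ref{LSD}), which concerns pull-back convergence to an almost equilibrium under a stationary driving sequence and says nothing about pathwise contraction between two trajectories launched from random, noise-correlated matrices at the random coalescence time $\tau$. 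The paper explicitly flags the extension of~(\ref{main2:4}) to all closed $F$ (equivalently, full weak convergence of $\{P_{n}(t)\}$) as unresolved, so your plan rests on exactly the hardest open point. A genuine contraction theorem of Bougerol type in the invariant metric on the positive definite cone could close the gap, but that is a substantial additional result, not bookkeeping.

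The paper's route for [ii] avoids contraction entirely and only yields the one-sided bound for \emph{monotone} closed $F$ --- which is why that hypothesis appears in the statement. It dominates $P_{n}(t)\preceq P_{n}^{s}(t)$, where $P_{n}^{s}$ is restarted at time $s$ from the deterministic Lyapunov iterate $f_{0}^{s}(\widehat{P}(0))$ (using $f_{n}(X)\preceq f_{0}(X)$ and monotonicity of the Riccati maps), bounds the total variation distance between $P_{n}^{s}(t)$ and the corresponding stationary-chain process $Q^{s}(t)$ by $N\,d_{v}(p_{n}(s),q(s))\rightarrow 0$ via geometric mixing, and then sends $s\rightarrow\infty$ and $\varepsilon\rightarrow 0$ in the Prohorov metric. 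Note that the monotonicity of $F$ is what converts the order relation $P_{n}(t)\preceq P_{n}^{s}(t)$ into $\mathbb{P}(P_{n}(t)\in F)\leq\mathbb{P}(P_{n}^{s}(t)\in F)$; it is structural, not merely interpretive as you suggest. Finally, your part [iii] inherits the gap because it invokes the all-closed-$F$ bound on each $P_{m}(t)$ from your part [ii]; it is easily repaired by instead writing $\mathbb{I}_{F}(\widehat{P}_{n}(t))\leq\sum_{m}\mathbb{I}_{F}(\widehat{P}_{m}(t))$, taking expectations to obtain $N\,\mathbb{P}(\widehat{P}_{q}(t)\in F)$, and applying part [i] with Portmanteau's theorem (and Lemma~\ref{stoch_boundedness} for the uniform bound in~(\ref{main2:18})) --- which is what the paper does.
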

We discuss the consequences of Theorem~\ref{main2}. The first part
of the theorem reinforces the weak consensus achieved by the GIKF
algorithm, i.e., the conditional error covariance at a randomly
selected sensor converges in distribution to the invariant measure
$\mathbb{\mu}^{\overline{A}}$. Reinterpreted, it provides an
estimate $\{\widehat{\mathbf{x}}_{q}(t)\}$ (in practice, obtained
by uniformly selecting a sensor $q$ independent of the random
gossip protocol $\{A(t)\}$ and using its estimate
$\widehat{\mathbf{x}}_{q}(t)$ for all time $t$) with
stochastically bounded conditional error covariance under the weak
detectability and connectivity assumptions. Note that the results
provided in this paper pertain to the limiting distribution of the
conditional error covariance and, hence, the pathwise filtering
error. This is a much stronger result than providing moment
estimates of the conditional error covariance, which does not
provide much insight into the pathwise instantiations of the
filter. In this paper, we do not provide analytic
characterizations of the resulting invariant measure
$\mathbb{\mu}^{\overline{A}}$. However, Theorem~\ref{main1} also
provides an efficient numerical characterization of
$\mathbb{\mu}^{\overline{A}}$. In particular, the weak convergence
in eqn.~(\ref{main1:1}) shows that the empirical distribution
obtained by plotting repeated instantiations of the process
$\left\{\widetilde{P}(t)\right\}$ (eqn.~(\ref{main1:2})) would converge to
$\mathbb{\mu}^{\overline{A}}$.

Another class of estimates obtained by the GIKF algorithm is
demonstrated in the second part of Theorem~\ref{main2}. For each
$n$, the estimate $\left\{\widehat{\mathbf{x}}_{\pi_{t}(n)}(t)\right\}$ is
obtained in practice by starting at the node $n$ and then
performing a random walk, $\pi_{t}(n)$, through the graph and
collecting the estimates on the way. Eqns.~(\ref{main2:2}-\ref{main2:4})
show that, in the limit as $t\rightarrow\infty$, these estimates
are at least as good as the estimate
$\left\{\widehat{\mathbf{x}}_{q}(t)\right\}$ obtained by probing a randomly
selected node and using its estimate throughout. For some $n$,
whether the estimate $\left\{\widehat{\mathbf{x}}_{\pi_{t}(n)}(t)\right\}$ is
strictly better than the estimate
$\left\{\widehat{\mathbf{x}}_{q}(t)\right\}$ asymptotically is an interesting
technical question and not resolved in this paper. On the
contrary, another possibility may be an extension of
eqn.~(\ref{main2:4}) to all closed $F$ leading to the weak
convergence of $\left\{\widehat{P}_{\pi_{t}(n)}(t)\right\}$ to
$\mathbb{\mu}^{\overline{A}}$ by Portmanteau's theorem. However,
the inequality in eqn.~(\ref{main2:4}) cannot be strict for all
$n$, as we have for all closed $F$ and $\varepsilon>0$ (see
Subsection~\ref{pr_main2},) \begin{equation}
\label{main2:80}
\frac{1}{N}\sum_{n=1}^{N}\liminf_{t\rightarrow\infty}\mathbb{P}\left(\widehat{P}_{\pi_{t}(n)}(t)\in
F^{\varepsilon}\right)\geq\mathbb{\mu}^{\overline{A}}
\end{equation}

The last part of Theorem~\ref{main2} shows that weak detectability
(which is necessary for the error of a centralized estimator to be stochastic bounded) is sufficient in the distributed gossip setting to lead to sensor estimates with stochastically bounded
errors. The upper bound presented in eqn.~(\ref{main2:19}) is
highly conservative and in fact, we have for all closed $F$ (see
Subsection~\ref{pr_main2},) \begin{equation}
\label{main2:81}
\sum_{n=1}^{N}\limsup_{t\rightarrow\infty}\mathbb{P}\left(\widehat{P}_{n}(t)\in
F\right)\leq N\mathbb{\mu}^{\overline{A}}
\end{equation}
%
%
%
%
%
%
\section{The auxiliary sequence $\{\widetilde{P}_{t}\}$: RDS formulation}
\label{aux}
The asymptotic analysis of the semi-Markov processes $\left\{P_{n}(t)\right\}$
for $n=1,\cdots, N$ does not fall under the purview of standard
approaches based on iterated random systems
(\cite{DiaconisFreedman}) or a random dynamical
system (RDS) (\cite{ArnoldChueshovbook}) as the switching Markov chains
$\left\{p_{n}(t)\right\}$ are non-stationary. In this section, we consider an
auxiliary process $\left\{\widetilde{P}(t)\right\}$ whose evolution is
governed by similar random Riccati iterates, the difference being that
the switching Markov chain is stationary i.e., the switching
Markov chain $\left\{\widetilde{p}(t)\right\}$ is initialized with the
uniform invariant measure on $V$. We analyze the asymptotic
properties of the auxiliary sequence $\left\{\widetilde{P}(t)\right\}$ by
formulating it as a RDS on the space $\mathbb{S}_{+}^{N}$ and then
in subsequent sections we derive the asymptotics of the sequences
$\left\{P_{n}(t)\right\}$ for $n=1,\cdots, N$ through comparison arguments.
We start by formally defining the sequence
$\left\{\widetilde{P}(t)\right\}$\footnote{We are
interested in the distributional properties of the various
processes of concern. The actual pathwise construction is not of
importance as long as the required distributional equivalence
holds. We assume that the measure space
$\left(\Omega,\mathcal{F},\mathbb{P}\right)$ is rich enough (or
suitably extended) to carry out constructions of the various
auxiliary random variables.}:

Consider a Markov chain on the graph $V$, $\left\{\widetilde{p}(t)\right\}_{t\in\mathbb{T}_{+}}$, with transition matrix $\overline{A}$ and uniform initial distribution, i.e.,
\begin{equation}
\label{aux1}
\mathbb{P}\left[\widetilde{p}(0)=n\right]=\frac{1}{N},~~~n=1,\cdots,N
\end{equation}
By Proposition~\ref{prop_int}, the Markov chain $\left\{\widetilde{p}(t)\right\}$ is stationary.

We now define the auxiliary process $\left\{\widetilde{P}(t)\right\}$ as follows:
\begin{equation}
\label{aux2}
\widetilde{P}(t+1)=f_{\widetilde{P}(t)}\left(\widetilde{P}(t)\right)
\end{equation}
with (possibly random) initial condition $\widetilde{P}(0)$.\footnote{Although the sequences $\left\{P_{n}(t)\right\}$ of interest have deterministic initial conditions, it is required for technical reasons (to be made precise later) to allow random initial states $\widetilde{P}(0)$, when studying the auxiliary sequence $\left\{\widetilde{P}(t)\right\}$.}

 Before reading the next two Sections, we refer the reader to Appendix~\ref{RDS} where we review preliminary facts and results from the theory of monotone, sublinear random dynamical systems (RDS) (\cite{Chueshov}) tailored to our needs. We then show in Subsection~\ref{RDS_aux} that the sequence $\left\{\widetilde{P}(t)\right\}$, for each $n$, admits an ergodic RDS formulation evolving on $\mathbb{S}_{+}^{N}$ and establish some of its properties in Subsection~\ref{RDS_aux-2}.
\subsection{RDS formulation of $\left\{\widetilde{P}(t)\right\}$}
\label{RDS_aux} In this subsection, we construct a RDS $(\theta^{R},\varphi^{R})$ on $\mathbb{S}_{+}^{N}$, which is equivalent to the auxiliary sequence $\left\{\widetilde{P}(t)\right\}$ in distribution. To this end, we construct the Markov chain $\left\{\widetilde{p}(t)\right\}$ (in a distributional sense) on the canonical path space. Let $\widetilde{\Omega}$ denote the set $\{1,\cdots,N\}$ with $\widetilde{\mathcal{F}}$ denoting the corresponding Borel algebra on $\widetilde{\Omega}$, which coincides with the power set of $\{1,\cdots,N\}$. Denote by $\Omega^{R}$ the two-sided infinite product of sets $\widetilde{\Omega}$, $\Omega^{R}=\bigotimes_{t=-\infty}^{\infty}\widetilde{\Omega}$, i.e., $\Omega^{R}$ is the space of double-sided sequences of entries in $\{1,\cdots,N\}$, i.e.,
\begin{equation}
\label{RDS_aux1}
\Omega^{R}=\left\{\omega=\left(\cdots,\omega_{-1},\omega_{0},\omega_{1},\cdots\right)\left|\omega_{t}\in\{1,\cdots,N\},~~\forall t\in\mathbb{T}\right.\right\}
\end{equation}
We equip $\Omega^{R}$ with the corresponding product Borel algebra $\mathcal{F}^{R}=\bigotimes_{t=-\infty}^{\infty}\widetilde{\mathcal{F}}$ generated by the cylinder sets. Note that $\left\{\omega_{t}\right\}_{t\in\mathbb{T}_{+}}$ for all $\omega\in\Omega^{R}$ denotes the canonical path space (trajectory) of the Markov chain $\left\{\widetilde{p}(t)\right\}_{t\in\mathbb{T}_{+}}$. The reason for introducing two-sided sequences is a matter of technical convenience and will be evident soon. Consider the unique probability measure $\mathbb{P}^{R}$ on $\mathcal{F}^{R}$, under which the stochastic process (two-sided) $\left\{\omega_{t}\right\}_{t\in\mathbb{T}}$ is a stationary Markov chain on the finite state space $\{1,\cdots,N\}$ with transition probability matrix $\overline{A}$. By the assumption of stationarity and Proposition~\ref{prop_int}, the distribution of $\omega_{t}$ for each $t\in\mathbb{T}$ is necessarily the uniform distribution on $\{1,\cdots,N\}$. In particular, we note that the stochastic processes $\left\{\widetilde{p}(t)\right\}_{t\in\mathbb{T}}$ and $\{\omega_{t}\}_{t\in\mathbb{T}}$ are equivalent in terms of the distribution induced on path space. Define the family of transformations
$\left\{\theta^{R}_{t}\right\}_{t\in\mathbb{T}}$ on $\Omega$ as the family of
left-shifts, i.e,
\begin{equation}
\label{def_RDS3} \theta_{t}^{R}\omega=\omega(t+\cdot),~~\forall
t\in\mathbb{T}
\end{equation}
With this, the space
$\left(\Omega^{R},\mathcal{F}^{R},\mathbb{P}^{R},\left\{\theta_{t}^{R},t\in\mathbb{T}\right\}\right)$
becomes the canonical path space of a two-sided stationary sequence equipped with the left-shift operator and
hence (see, for example, \cite{Kallenberg}) satisfies the Assumptions~\textbf{A.1)-A.3)} in Definition~\ref{defn_RDS} to be a metric dynamical system and, in fact, is also ergodic.

We now set to define the cocycle $\varphi^{R}$, see also Definition~\ref{defn_RDS}, over $\mathbb{S}_{+}^{N}$, which gives the RDS of interest. We define $\varphi^{R}:\mathbb{T}_{+}\times\Omega^{R}\times\mathbb{S}_{+}^{N}\longmapsto\mathbb{S}_{+}^{N}$ by:
{\small
\begin{eqnarray}
\label{def_RDS5} \varphi^{R}(0,\omega,X)&=&X,~~\forall\omega,X
\\
\label{RDS6}
\varphi^{R}(1,\omega,X)&=&f_{\omega_{0}}(X),~~\forall\omega,X
\\
\label{def_RDS7}
\varphi^{R}(t,\omega,X)&=&f_{\theta^{R}_{t-1}\omega(0)}\left(\varphi^{R}(t-1,\omega,X)\right)
=f_{\omega_{t-1}}\left(\varphi^{R}(t-1,\omega,X)\right),~~\forall
t>1,\omega,X
\end{eqnarray}
}
(Note that, by property of the left shift $\theta^{R}$, we have $\theta^{R}_{t-1}\omega(0)=\omega_{t}$, which explains the equality in eqn.~(\ref{def_RDS7}).) The cocycle $\varphi^{R}$ defined satisfies the assumptions of measurability jointly in its arguments, and the continuity of the map $\varphi^{R}(t,\omega,\cdot):\mathbb{S}_{+}^{N}\longmapsto\mathbb{S}_{+}^{N}$ w.r.t.~the phase variable $X$ for each fixed $t,\omega$ follows from the continuity of the corresponding Riccati operator. The pair $\left(\theta^{R},\varphi^{R}\right)$ thus forms a well-defined RDS on the phase space $\mathbb{S}_{+}^{N}$. Now consider the sequence of random variables $\left\{\varphi^{R}(t,\omega,P_{n}(0))\right\}_{t\in\mathbb{T}_{+}}$ (as explained earlier, the randomness is induced by $\omega$,) which can be viewed as successive (random) iterates of the RDS $\left(\theta^{R},\varphi^{R}\right)$ starting with the initial state $P_{n}(0)$. By construction, it follows that the sequence $\left\{\varphi^{R}(t,\omega,P_{n}(0))\right\}_{t\in\mathbb{T}_{+}}$ is distributionally equivalent to the sequence $\left\{\widetilde{P}(t)\right\}_{t\in\mathbb{T}_{+}}$. In particular,
\begin{equation}
\label{def_RDS9}
\varphi^{R}\left(t,\omega,P_{n}(0)\right)\ndtstile{}{d}\widetilde{P}(t),~~~\forall t\in\mathbb{T}_{+}
\end{equation}
Thus, analyzing the asymptotic distributional properties of the sequence $\left\{\widetilde{P}(t)\right\}_{t\in\mathbb{T}_{+}}$ is equivalent to studying the sequence $\left\{\varphi^{R}\left(t,\omega,P_{n}(0)\right)\right\}_{t\in\mathbb{T}_{+}}$, which we undertake in the next subsection.

\subsection{Properties of the RDS $(\theta^{R},\varphi^{R})$}
\label{RDS_aux-2} We establish some basic properties of the RDS $\left(\theta^{R},\varphi^{R}\right)$ representing the auxiliary sequence $\left\{\widetilde{P}(t)\right\}$.
\begin{lemma}
\label{prop_auxRDS}
\item\mbox{[i]} The RDS $\left(\theta^{R},\varphi^{R}\right)$ is conditionally compact.
\item\mbox{[ii]} The RDS $\left(\theta^{R},\varphi^{R}\right)$ is order preserving.
\item\mbox{[iii]} If in addition $\mathcal{Q}$ is positive definite, i.e., $\mathcal{Q}\gg 0$, then $\left(\theta^{R},\varphi^{R}\right)$ is strongly sublinear.
\end{lemma}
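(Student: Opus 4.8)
The plan is to reduce all three assertions about the cocycle to the corresponding properties of the individual Riccati maps $f_{n}$ and then lift them through composition, using that $\varphi^{R}(t,\omega,\cdot)=f_{\omega_{t-1}}\circ\cdots\circ f_{\omega_{0}}$. The one tool that does almost all the work is the variational (completion-of-squares) form of the Riccati operator: writing $\Phi_{n}(X,K)=(\mathcal{F}-K\mathcal{C}_{n})X(\mathcal{F}-K\mathcal{C}_{n})^{T}+\mathcal{Q}+K\mathcal{R}_{n}K^{T}$ for the bracketed expression, one has
\begin{equation}
\label{variational}
f_{n}(X)=\min_{K}\Phi_{n}(X,K),
\end{equation}
the minimum being attained at the Kalman gain $K=\mathcal{F}X\mathcal{C}_{n}^{T}(\mathcal{C}_{n}X\mathcal{C}_{n}^{T}+\mathcal{R}_{n})^{-1}$. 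For part~[ii], I would fix $X\preceq Y$ and let $K_{Y}$ attain $f_{n}(Y)$; since $Z\mapsto(\mathcal{F}-K_{Y}\mathcal{C}_{n})Z(\mathcal{F}-K_{Y}\mathcal{C}_{n})^{T}$ is order preserving, evaluating~(\ref{variational}) at the suboptimal gain $K_{Y}$ gives $f_{n}(X)\preceq\Phi_{n}(X,K_{Y})\preceq\Phi_{n}(Y,K_{Y})=f_{n}(Y)$. Thus each $f_{n}$ is order preserving, and therefore so is every composition, in particular $\varphi^{R}(t,\omega,\cdot)$.

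For part~[iii], fix $\lambda\in(0,1)$ and $X\succeq 0$ and let $K_{\lambda}$ attain $f_{n}(\lambda X)$. Using optimality of $K_{\lambda}$ for $\lambda X$ and suboptimality for $X$,
\begin{equation}
\label{sublin}
f_{n}(\lambda X)-\lambda f_{n}(X)\succeq\Phi_{n}(\lambda X,K_{\lambda})-\lambda\Phi_{n}(X,K_{\lambda})=(1-\lambda)\left(\mathcal{Q}+K_{\lambda}\mathcal{R}_{n}K_{\lambda}^{T}\right)\succeq(1-\lambda)\mathcal{Q}.
\end{equation}
This already shows $f_{n}(\lambda X)\succeq\lambda f_{n}(X)$ (sublinearity, for any $\mathcal{Q}\succeq 0$); when $\mathcal{Q}\gg 0$ the right-hand side lies in $\mathbb{S}_{++}^{N}$, so each $f_{n}$ is strongly sublinear. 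To pass to the cocycle I would take the outermost factor $f_{\omega_{t-1}}$ as the strongly sublinear one and the inner composition $h$ as the (monotone, sublinear) remainder: then $\varphi^{R}(t,\omega,\lambda X)=f_{\omega_{t-1}}(h(\lambda X))\succeq f_{\omega_{t-1}}(\lambda h(X))\gg\lambda\,\varphi^{R}(t,\omega,X)$, where the first step uses monotonicity of $f_{\omega_{t-1}}$ together with sublinearity of $h$, and the second uses strong sublinearity of $f_{\omega_{t-1}}$. Hence $\varphi^{R}(t,\omega,\cdot)$ is strongly sublinear for every $t\geq 1$.

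The substantive part is [i], and its engine is Assumption~\textbf{D.1}. By the remark following \textbf{D.1}, the Grammian $G=\sum_{i=1}^{\ell}(\mathcal{F}^{i-1})^{T}\mathcal{C}_{n_{i}}^{T}\mathcal{R}_{n_{i}}^{-1}\mathcal{C}_{n_{i}}\mathcal{F}^{i-1}$ associated with a node-covering walk $(n_{1},\ldots,n_{\ell})$ is invertible, and the classical information-form (equivalently, a frozen-gain upper bound via~(\ref{variational})) yields a \emph{deterministic} bound $f_{n_{\ell}}\circ\cdots\circ f_{n_{1}}(X)\preceq M$ valid for \emph{every} $X\succeq 0$, i.e.\ independent of $X$. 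I would then invoke recurrence: since $(n_{1},\ldots,n_{\ell})$ is a walk in the support graph of $\overline{A}$, it is a symbol pattern of positive probability for the stationary chain, so for $\mathbb{P}^{R}$-a.e.\ $\omega$ this block occurs among the past coordinates $\omega_{-1},\omega_{-2},\ldots$ at some a.s.\ finite backward position. Splitting the pullback iterate $\varphi^{R}(t,\theta^{R}_{-t}\omega,X)=f_{\omega_{-1}}\circ\cdots\circ f_{\omega_{-t}}(X)$ at that block, the inner composition produces some element of $\mathbb{S}_{+}^{N}$, the covering block sends it below $M$, and the fixed finite outer composition $g_{\omega}$ (monotone) sends $[0,M]$ into the order interval $[0,g_{\omega}(M)]$. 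Consequently, for all large $t$ the pullback orbit lies in the finite-dimensional, closed, bounded—hence compact—order interval $[0,g_{\omega}(M)]$, which serves as the required random compact absorbing set and establishes conditional compactness.

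The hard part is squarely part~[i], and within it two points need care. First is the detectability-to-boundedness estimate $f_{n_{\ell}}\circ\cdots\circ f_{n_{1}}(X)\preceq M$ uniformly in $X$: this is exactly where \textbf{D.1} is indispensable, and I would either cite the information-filter bound from~\cite{Riccati-weakconv} or reprove it by freezing a deadbeat-type gain sequence in~(\ref{variational}) so that $\prod_{i}(\mathcal{F}-K_{i}\mathcal{C}_{n_{i}})$ is contractive over the covering window. Second is the probabilistic bookkeeping: I must verify that the covering block occurs at an a.s.\ finite backward position \emph{and}, because inter-occurrence gaps are a.s.\ finite by positive recurrence, that the outer composition $g_{\omega}$ is a genuine finite product so that $g_{\omega}(M)$ is a bona fide random variable with compact-valued order interval. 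Matching these estimates to the precise notion of conditional compactness in Appendix~\ref{RDS} (Definition~\ref{defn_RDS}) is then routine.
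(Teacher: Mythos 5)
Your treatments of parts [ii] and [iii] are correct and are, in spirit, exactly what the paper does (it simply cites monotonicity of the $f_{n}$ for [ii] and defers the concavity/sublinearity computation to Lemma~21 of~\cite{Riccati-weakconv} for [iii]); your explicit completion-of-squares derivation $f_{n}(\lambda X)-\lambda f_{n}(X)\succeq(1-\lambda)\mathcal{Q}\gg 0$ and the lifting through the composition are fine and make the argument self-contained.

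The problem is part [i], which you have identified as ``the hard part'' but which is in fact the trivial one: you have misread Definition~\ref{cond_comp}. Conditional compactness is \emph{conditional on boundedness of the pull-back orbit}: one is \emph{given} a random variable $C(\omega)$ with $\|\varphi^{R}(t,\theta^{R}_{-t}\omega,a(\theta^{R}_{-t}\omega))\|\leq C(\omega)$ for all $t$ and all $\omega\in U$ (Definition~\ref{boundedness}), and must only produce compact sets $K(\omega)$ that the orbit approaches. Since $\mathbb{S}^{n}$ is finite dimensional, the closed norm-ball of radius $C(\omega)$ intersected with $\mathbb{S}_{+}^{N}$ is already compact and contains the entire orbit, so one may take $K(\omega)$ to be that set and the distance is identically zero. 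This is the paper's one-line proof: conditional compactness is a purely topological property of the phase space and follows from finite dimensionality alone; it requires neither Assumption~\textbf{D.1} nor any recurrence of the covering block. Your argument instead attempts to prove an \emph{unconditional} absorption of every orbit into a compact order interval --- which is essentially the content of Lemma~\ref{dist_obsRiccati} combined with the occurrence of the covering block, i.e.\ the machinery behind Lemma~\ref{stoch_boundedness}, used later to rule out alternative~\textbf{(a)} of Theorem~\ref{LSD}. Besides being unnecessary here, that route only delivers the conclusion for $\mathbb{P}^{R}$-a.e.\ $\omega$ (the block occurs in the past coordinates only almost surely), whereas the definition asks for compact sets for every $\omega$ in the set $U$ on which the orbit is assumed bounded. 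So the lemma itself is salvageable with a one-line replacement for [i], but as written your proof of [i] both imports hypotheses that are not needed and conflates this lemma with a separate, genuinely harder result proved elsewhere in the paper.
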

\begin{proof} The claim in [i] (conditional compactness) is an immediate consequence of the finite dimensionality of the underlying vector space $\mathbb{S}_{+}^{N}$.

The order preserving property [ii] follows from the monotonicity of the individual Riccati operators $f_{n}$ and hence finite compositions of them remain order-preserving.

The strong sublinearity uses the concavity of the Riccati operators and their monotone nature and is a routine extension to an arbitrary number $N$ of Riccati operators, given the development in~\cite{Riccati-weakconv} (see Lemma 21 in~\cite{Riccati-weakconv}) for the case of two Riccati operators.
\end{proof}
\section{Asymptotics of $\left\{\widetilde{P}(t)\right\}$}
\label{RDS_auxproof} The main result here concerns the asymptotic properties of the auxiliary sequences $\left\{\widetilde{P}(t)\right\}_{t\in\mathbb{T}_{+}}$ for each $n\in [1,\cdots,N]$. We have the following:
\begin{theorem}
\label{conv_aux} Under the assumptions~\textbf{C.1,S.1,D.1}, see page~\pageref{assumptions.1}, there exists a unique equilibrium probability measure $\mathbb{\mu}^{\overline{A}}$ on the space of positive semidefinite matrices $\mathbb{S}_{+}^{N}$, such that, for each $n\in [1,\cdots,N]$, the sequence $\left\{\widetilde{P}(t)\right\}_{t\in\mathbb{T}_{+}}$ converges weakly (in distribution) to $\mathbb{\mu}^{\overline{A}}$ from every initial condition $P_{n}(0)$:
\begin{equation}
\label{conv_aux1}\left\{\widetilde{P}(t)\right\}\Longrightarrow\mathbb{\mu}^{\overline{A}},~~~\forall n\in [1,\cdots,N]
\end{equation}
\end{theorem}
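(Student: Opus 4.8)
The plan is to read off Theorem~\ref{conv_aux} from the theory of order-preserving, strongly sublinear random dynamical systems applied to the RDS $\left(\theta^{R},\varphi^{R}\right)$ constructed in Subsection~\ref{RDS_aux}. By the distributional identity~(\ref{def_RDS9}), it suffices to exhibit a unique equilibrium of $\left(\theta^{R},\varphi^{R}\right)$ that globally attracts every initial state, and to identify $\mathbb{\mu}^{\overline{A}}$ with its law. Lemma~\ref{prop_auxRDS} already supplies the three structural ingredients — conditional compactness, order preservation, and (using $\mathcal{Q}\gg 0$ from \textbf{S.1}) strong sublinearity — so that the limit-set trichotomy for such systems (Appendix~\ref{RDS}) applies: every interior trajectory converges either to $\mathbf{0}$, to $\infty$, or to a unique positive equilibrium. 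The two remaining tasks are thus to rule out the degenerate alternatives by producing (a) a strictly positive sub-equilibrium and (b) an almost surely finite super-equilibrium, and then to transfer the resulting pullback convergence to forward convergence in distribution.

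For the sub-equilibrium I would use the information-form identity $f_{n}(X)=\mathcal{F}\left(X^{-1}+\mathcal{C}_{n}^{T}\mathcal{R}_{n}^{-1}\mathcal{C}_{n}\right)^{-1}\mathcal{F}^{T}+\mathcal{Q}$, which gives $f_{n}(X)\succeq\mathcal{Q}\gg\mathbf{0}$ for every $n$ and every $X\succeq\mathbf{0}$; hence $\mathcal{Q}$ is a deterministic sub-equilibrium and every orbit lies in the interior $\mathbb{S}_{++}^{N}$ after one step, ruling out collapse to $\mathbf{0}$. The crux is the super-equilibrium, and this is where Weak Detectability~\textbf{D.1} enters. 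Along the covering walk $\left(n_{1},\dots,n_{\ell}\right)$ of~\textbf{D.1}, the composed map $g=f_{n_{\ell}}\circ\cdots\circ f_{n_{1}}$ is, by the full rank of the associated Grammian, bounded above by a fixed compact set $\left[\mathbf{0},K\right]$ uniformly over all inputs $X\succeq\mathbf{0}$ (the full aggregate observability floods out the prior, while only $\ell$ predict steps accumulate noise). Because each transition of the walk carries positive probability under $\overline{A}$, this exact length-$\ell$ pattern is a recurrent event for the stationary positive-recurrent chain $\left\{\widetilde{p}(t)\right\}$ (Proposition~\ref{prop_int}, Assumption~\textbf{C.1}); looking backward along the shift $\theta^{R}$, and using reversibility of $\overline{A}$ (Remark~\ref{rem1}) so the reversed chain has the same law, such a covering window occurs at an almost surely finite distance $\tau(\omega)<\infty$ from the origin.

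With these two bounds in hand I would run the \emph{pullback} iteration $\varphi^{R}\left(t,\theta^{R}_{-t}\omega,X\right)$: after the most recent backward covering window the state is $\preceq K$, and only finitely many subsequent Riccati maps act before time $0$, so the pullback limit $u^{\ast}(\omega)=\mathcal{F}^{\tau(\omega)}K\left(\mathcal{F}^{T}\right)^{\tau(\omega)}+(\text{finite noise sum})$ is almost surely a proper, finite positive-definite matrix \emph{regardless of the spectral radius of} $\mathcal{F}$ — finitely many steps of an unstable map still give a finite value. Conditional compactness and monotonicity then force the pullback orbit from every $X$ to converge a.s.\ to $u^{\ast}(\omega)$, and strong sublinearity pins this down as the unique interior equilibrium. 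Finally, since $\theta^{R}$ is measure preserving, $\varphi^{R}(t,\omega,X)\overset{d}{=}\varphi^{R}\left(t,\theta^{R}_{-t}\omega,X\right)$, so the forward one-point motion converges in distribution to $\mathbb{\mu}^{\overline{A}}:=\mathrm{law}(u^{\ast})$, independently of $X$; combined with~(\ref{def_RDS9}) this yields $\left\{\widetilde{P}(t)\right\}\Longrightarrow\mathbb{\mu}^{\overline{A}}$ from every initial condition, as claimed.

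The hardest step is the super-equilibrium/boundedness estimate built on \textbf{D.1}, and its delicate point is precisely the one the theorem advertises — boundedness irrespective of the instability of $\mathcal{F}$. The resolution is that weak convergence only needs the pullback limit to be \emph{almost surely finite}, not to have finite moments: the hard, prior-independent reset furnished by a covering window, together with the a.s.\ finiteness of the backward waiting time $\tau(\omega)$, bounds $u^{\ast}(\omega)$ by a finite random quantity even though its law $\mathbb{\mu}^{\overline{A}}$ may carry heavy tails reflecting the instability. Quantifying the uniform bound $K$ from the full-rank Grammian, and verifying that the recurrent covering pattern produces an a.s.\ finite backward waiting time under the time-reversed stationary chain, are the technical points I expect to absorb most of the work.
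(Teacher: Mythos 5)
Your proposal follows essentially the same route as the paper: the RDS formulation of $\{\widetilde{P}(t)\}$, Lemma~\ref{prop_auxRDS} feeding into the Chueshov limit-set dichotomy (Theorem~\ref{LSD}), the lower bound $f_{n}(X)\succeq\mathcal{Q}\gg 0$ to satisfy the interiority hypothesis and handle arbitrary initial conditions, the uniform bound $g_{w_{0}}(X)\preceq\alpha_{0}I$ from the covering walk of~\textbf{D.1} (Lemma~\ref{dist_obsRiccati}) to rule out blow-up, and the pullback-to-forward transfer via the measure-preserving shift (Lemma~\ref{orbit_lemma}). The only variation is cosmetic: you rule out the divergence alternative by bounding the pullback orbit through the a.s.\ finite \emph{backward} waiting time to the most recent covering window, whereas the paper proves stochastic boundedness of the forward process via hitting times of an auxiliary Markov chain on $\ell$-tuples (Lemma~\ref{stoch_boundedness}); both are sound and rest on the same recurrence of the covering pattern.
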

The rest of the subsection is devoted to the proof of the above result. But, before that, we highlight some consequences of Theorem~\ref{conv_aux}.
\begin{remark} It is important to note, as stated in Theorem~\ref{conv_aux}, that the equilibrium measure $\mathbb{\mu}^{\overline{A}}$ does not depend on the index $n$ and the initial state $\widetilde{P}(0)$ of the sequence $\left\{\widetilde{P}(t)\right\}$, but is a functional of the network topology and the particular (randomized) communication protocol captured by the matrix $\overline{A}$. Theorem~\ref{conv_aux}, thus concludes that the sequences $\left\{\widetilde{P}(t)\right\}$ reach consensus \emph{in the weak sense} to the same equilibrium measure irrespective of the initial states.
\end{remark}
The proof of Theorem~\ref{conv_aux} is rather long and technical, which we accomplish in steps.

\begin{lemma}
\label{dist_obsRiccati}
Recall Assumption~\textbf{D.1}, page~\pageref{assumptiond.2}, and let, in particular, $w_{0}=(n_{1},\cdots,n_{\ell})$ be a walk such that, the Grammian
\begin{equation}
\label{unif1}G_{w_{0}}=\sum_{i=1}^{\ell}\left(\mathcal{F}^{i-1}\right)^{T}\mathcal{C}_{n_{i}}^{T}\mathcal{C}_{n_{i}}\mathcal{F}^{i-1}
\end{equation}
is invertible, where $\ell\geq 1$. Define the function $g_{w_{0}}:\mathbb{S}_{+}^{N}\longmapsto\mathbb{S}_{+}^{N}$ by
\begin{equation}
\label{unif3}g_{w_{0}}(X)=f_{n_{\ell}}\circ f_{n_{\ell}-1}\circ\cdots\circ f_{n_{1}}\left(X\right)
\end{equation}
Then, there exists a constant $\alpha_{0}>0$ such that the following uniformity condition holds:
\begin{equation}
\label{unif2}
g_{w_{0}}(X)\leq \alpha_{0} I,~~~\forall X\in\mathbb{S}_{+}^{N}
\end{equation}
In other words the iterate $g_{w_{0}}(\cdot)$ is uniformly bounded irrespective of the value of the argument.
\end{lemma}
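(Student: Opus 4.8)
The plan is to recognize that the composition $g_{w_0}$ in eqn.~(\ref{unif3}) is precisely the one-step prediction error covariance produced by a Kalman filter run over a window of length $\ell$, and then to dominate it by the error covariance of a deliberately suboptimal, prior-free estimator. Comparing eqn.~(\ref{unif3}) with the Riccati recursion~(\ref{est_up2}), if we set $\widehat{P}_0 = X$ as the prior covariance of $\mathbf{x}_0$ in the model~(\ref{sys_model}) and feed in the observations $\mathbf{y}^{n_i}_{i-1} = \mathcal{C}_{n_i}\mathbf{x}_{i-1} + \mathbf{v}^{n_i}_{i-1}$ for $i=1,\dots,\ell$ at the successive states $\mathbf{x}_0,\dots,\mathbf{x}_{\ell-1}$, then the Gaussian derivation of~(\ref{est_up2}) identifies $g_{w_0}(X) = \mathbb{E}[(\mathbf{x}_\ell - \widehat{\mathbf{x}}_{\ell|\ell-1})(\mathbf{x}_\ell - \widehat{\mathbf{x}}_{\ell|\ell-1})^T]$, where $\widehat{\mathbf{x}}_{\ell|\ell-1} = \mathbb{E}[\mathbf{x}_\ell \mid \mathbf{y}^{n_1}_0,\dots,\mathbf{y}^{n_\ell}_{\ell-1}]$ is the conditional mean (under the prior with covariance $X$). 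Because the conditional mean is the minimum mean-square-error estimator, its error covariance is dominated in the semidefinite order by the error covariance of \emph{any} estimator built from the same observations; it therefore suffices to exhibit one such estimator whose error covariance is finite and independent of $X$.

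For that estimator I would discard the prior entirely and invert the Grammian: set
\begin{equation}
\label{plan_est}
\widetilde{\mathbf{x}}_\ell = \mathcal{F}^\ell\, G_{w_0}^{-1}\sum_{i=1}^\ell \left(\mathcal{F}^{i-1}\right)^T\mathcal{C}_{n_i}^T\,\mathbf{y}^{n_i}_{i-1},
\end{equation}
which is well defined exactly because Assumption~\textbf{D.1} makes $G_{w_0}$ invertible. Substituting $\mathbf{y}^{n_i}_{i-1} = \mathcal{C}_{n_i}\mathcal{F}^{i-1}\mathbf{x}_0 + \mathcal{C}_{n_i}\sum_{j=0}^{i-2}\mathcal{F}^{i-2-j}\mathbf{w}_j + \mathbf{v}^{n_i}_{i-1}$ and using the identity $G_{w_0}^{-1}\sum_{i=1}^\ell (\mathcal{F}^{i-1})^T\mathcal{C}_{n_i}^T\mathcal{C}_{n_i}\mathcal{F}^{i-1} = I$, the contribution of $\mathbf{x}_0$ cancels, so that $\mathbf{x}_\ell - \widetilde{\mathbf{x}}_\ell = \mathcal{F}^\ell(\mathbf{x}_0 - \widetilde{\mathbf{x}}_0) + \sum_{j=0}^{\ell-1}\mathcal{F}^{\ell-1-j}\mathbf{w}_j$ is a fixed linear functional of the process and measurement noises alone. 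Consequently its covariance $\Sigma := \mathbb{E}[(\mathbf{x}_\ell - \widetilde{\mathbf{x}}_\ell)(\mathbf{x}_\ell - \widetilde{\mathbf{x}}_\ell)^T]$ is a finite matrix depending only on $\mathcal{F},\mathcal{Q},\{\mathcal{C}_{n_i}\},\{\mathcal{R}_{n_i}\}$ and the walk $w_0$, and is in particular independent of the prior $X$.

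Combining the two steps, the minimum-variance property gives $g_{w_0}(X) \preceq \Sigma$ for every $X\in\mathbb{S}_+^N$, whence the choice $\alpha_0 = \|\Sigma\| = \lambda_{\max}(\Sigma)$ yields the claimed uniform bound $g_{w_0}(X) \preceq \alpha_0 I$. I expect the only delicate points to be bookkeeping rather than conceptual: first, one must check that the identification of $g_{w_0}(X)$ with the conditional-mean error covariance, and hence the optimality comparison, remains valid even when $X$ is singular; this holds because the innovation covariances $\mathcal{C}_{n_i}X\mathcal{C}_{n_i}^T + \mathcal{R}_{n_i} \succeq \mathcal{R}_{n_i}\gg \mathbf{0}$ stay invertible at every step, so the gains in~(\ref{est_up2}) are well defined throughout. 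Second, writing $\Sigma$ out explicitly requires careful tracking of the cross-correlations between the $\mathbf{w}_j$ appearing in the observations and in $\mathbf{x}_\ell$, but this is routine and does not affect finiteness. The order-preserving property of the Riccati operators established in Lemma~\ref{prop_auxRDS}[ii] could additionally be used to reduce to a worst-case prior, but the estimator comparison already furnishes the bound simultaneously for all $X$.
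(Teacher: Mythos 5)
Your proposal is correct and takes essentially the same route as the paper: identify $g_{w_{0}}(X)$ with the $\ell$-step Kalman prediction error covariance and dominate it by the error covariance of a prior-free linear estimator obtained by inverting the observability Grammian, the only difference being that the paper inverts the $\mathcal{R}_{n_{t}}^{-1}$-weighted Grammian $\widetilde{G}_{w_{0}}$ while you invert the unweighted $G_{w_{0}}$ --- both are invertible under Assumption~\textbf{D.1}, so either choice works. (One transcription slip: after the $\mathbf{x}_{0}$ contribution cancels, the residual $\mathbf{x}_{\ell}-\widetilde{\mathbf{x}}_{\ell}$ is a fixed linear functional of $\{\mathbf{w}_{j}\}$ and $\{\mathbf{v}^{n_{i}}_{i-1}\}$ alone; your displayed identity should contain no $\mathcal{F}^{\ell}(\mathbf{x}_{0}-\widetilde{\mathbf{x}}_{0})$ term and should include the measurement-noise terms, but this does not affect the argument.)
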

The proof is provided in Appendix~\ref{RDS_123}. Note that in eqn.~(\ref{unif1}) the observation matrix~$\mathcal{C}_{n_{i}}$ is indexed by~$n_{i}$ the current site visited by the random walk~$w_0$ introduced in Lemma~\ref{dist_obsRiccati}. Also, note that the function~$g_{w_{0}}(X)$ defined in eqn.~(\ref{unif3}) is indexed by the walk~$w_0$.

The following key lemma establishes asymptotic boundedness properties of $\left\{\widetilde{P}(t)\right\}$ and is proved in Appendix~\ref{RDS_123}.
\begin{lemma}
\label{stoch_boundedness} The sequence $\left\{\widetilde{P}(t)\right\}$ is stochastically bounded for each $n$ under the Assumptions of Theorem~\ref{conv_aux}, i.e.,
\begin{equation}
\label{sb1}
\lim_{J\rightarrow\infty}\sup_{t\in\mathbb{T}_{+}}\mathbb{P}\left(\left\|\widetilde{P}(t)\right\|>J\right)=0
\end{equation}
\end{lemma}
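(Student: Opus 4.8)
The plan is to turn the uniform bound of Lemma~\ref{dist_obsRiccati} into a \emph{regeneration} mechanism. Whenever the stationary switching sequence $\{\widetilde{p}(t)\}$ happens to trace out the covering walk $w_{0}=(n_{1},\dots,n_{\ell})$, the iterate is instantaneously reset into the fixed ball $\{X\preceq\alpha_{0}I\}$, no matter how large it had become; between two such resets the iterate can grow only in a controlled, deterministic way. Tightness will follow once I show that resets recur with a geometric tail that is uniform in $t$.

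First I would record the one-step domination $f_{n}(X)\preceq\mathcal{F}X\mathcal{F}^{T}+\mathcal{Q}=:L(X)$, valid for every $n$ and $X$ because the subtracted gain term in~(\ref{def_Riccati}) is positive semidefinite; moreover $L$ and every $f_{n}$ are order preserving (Lemma~\ref{prop_auxRDS}[ii]). Note next that if $\widetilde{p}(s-\ell)=n_{1},\dots,\widetilde{p}(s-1)=n_{\ell}$ then, by~(\ref{aux2}) and Lemma~\ref{dist_obsRiccati}, $\widetilde{P}(s)=g_{w_{0}}(\widetilde{P}(s-\ell))\preceq\alpha_{0}I$. Let $T(t)$ be the largest $s\le t$ at which such a $w_{0}$-block completes, with $T(t)=0$ if none does (taking for the moment a deterministic initial condition with $\widetilde{P}(0)\preceq\beta_{0}I$ for some $\beta_{0}\ge\alpha_{0}$). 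Combining the domination with monotonicity gives, by induction, $\widetilde{P}(t)\preceq L^{\,t-T(t)}(\beta_{0}I)$, whence $\|\widetilde{P}(t)\|\le C(t-T(t))$ where $C(k):=\|L^{k}(\beta_{0}I)\|$ is deterministic, nondecreasing, and grows at most geometrically in $k$ (like $\|\mathcal{F}\|^{2k}$). Therefore $\{\|\widetilde{P}(t)\|>J\}\subseteq\{t-T(t)>k_{J}\}$ with $k_{J}:=\max\{k:C(k)\le J\}\to\infty$ as $J\to\infty$, which reduces the lemma to controlling the tail of the gap $t-T(t)$ uniformly in $t$.

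To bound $\mathbb{P}(t-T(t)>m)$, i.e.\ the probability that no $w_{0}$-block completes in the last $m$ steps, I would invoke Assumption~\textbf{C.1}. Since $w_{0}$ is a walk in the graph of $\overline{A}$, the transition product $\rho:=\prod_{i=1}^{\ell-1}\overline{A}_{n_{i},n_{i+1}}$ is strictly positive; and since $\overline{A}$ is irreducible and aperiodic, all entries of $\overline{A}^{r}$ are strictly positive for some $r$, so from any state the chain reaches $n_{1}$ within $r$ steps with probability at least $\delta:=\min_{x}(\overline{A}^{r})_{x,n_{1}}>0$. Tiling the relevant index window into disjoint blocks of length $r+\ell$, each block produces a completed $w_{0}$ (regardless of the chain's state at the block's start) with probability at least $q_{0}:=\delta\rho>0$; applying the Markov property blockwise yields the uniform geometric estimate $\mathbb{P}(t-T(t)>m)\le(1-q_{0})^{\lfloor m/(r+\ell)\rfloor}$. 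Hence $\sup_{t}\mathbb{P}(\|\widetilde{P}(t)\|>J)\le(1-q_{0})^{\lfloor k_{J}/(r+\ell)\rfloor}\to 0$ as $J\to\infty$, which is exactly~(\ref{sb1}).

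The main obstacle is this uniform-in-$t$ block estimate: one must eliminate any dependence on the (possibly worst-case) chain state at the start of each block, and it is precisely the aperiodicity-plus-irreducibility detour of length $r$ into $n_{1}$ that buys a state-independent lower bound $q_{0}$ on each block's success probability. A secondary technical point is accommodating a genuinely random $\widetilde{P}(0)$ as permitted in~(\ref{aux2}): I would condition on $\widetilde{P}(0)$ and split off the event $\{T(t)=0\}$ that no reset occurs up to time $t$, whose probability is at most $(1-q_{0})^{\lfloor t/(r+\ell)\rfloor}$; since on $\{T(t)>0\}$ the bound uses only the deterministic $\alpha_{0}$, the initial transient contributes a term that is bounded for each fixed $t$ and vanishes as $t\to\infty$, so it does not affect tightness.
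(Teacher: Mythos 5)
Your proof is correct, and its skeleton coincides with the paper's: both use Lemma~\ref{dist_obsRiccati} as a ``reset'' that forces the iterate into $[0,\alpha_{0}I]$ whenever the switching chain traces the covering walk $w_{0}$, both use the Lyapunov domination $f_{n}(X)\preceq\mathcal{F}X\mathcal{F}^{T}+\mathcal{Q}$ together with monotonicity to control the growth since the last reset, and both define the same threshold $k(J)$ to reduce~(\ref{sb1}) to showing that the probability of \emph{no} $w_{0}$-block completing in a window of length $k(J)$ vanishes uniformly in $t$. Where you diverge is in that last step: the paper lifts $\{\widetilde{p}(t)\}$ to a stationary Markov chain $z(t)=(\widetilde{p}(t-\ell+1),\dots,\widetilde{p}(t))$ on $\ell$-tuples and appeals softly to positive recurrence of this finite irreducible chain, so that the hitting-time tail $\mathbb{P}_{z}(\tau_{0}>k(J)+1)\to 0$ uniformly over the finite state space; you instead tile the window into disjoint blocks of length $r+\ell$ (with $\overline{A}^{r}\gg 0$ from irreducibility plus aperiodicity) and chain the Markov property to get the explicit geometric bound $(1-q_{0})^{\lfloor k(J)/(r+\ell)\rfloor}$ with $q_{0}=\delta\rho>0$. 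Your version is more elementary --- it avoids the lifted chain entirely --- and is quantitative, yielding an explicit (doubly logarithmic in $J$, via $k(J)\sim\log J/\log\|\mathcal{F}\|^{2}$) decay rate for the tail, whereas the paper's hitting-time formulation is less constructive but generalizes more readily to settings where an explicit minorization constant is awkward to extract. Your handling of a random $\widetilde{P}(0)$ by splitting off $\{T(t)=0\}$ matches the paper's reduction to $t\geq t_{0}$ via the regularity of each marginal, so no gap there either.
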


We now complete the proof of Theorem~\ref{conv_aux}.
From Lemma~\ref{RDS_aux} we note that
$\left(\theta^{R},\varphi^{R}\right)$ is strongly sublinear, conditionally
compact and order-preserving. Also, the cone $\mathbb{S}_{+}^{N}$
satisfies the conditions required in the hypothesis of
Theorem~\ref{LSD}. We note for
$t>0$
\begin{equation}
\label{step2} \varphi^{R}\left(t,\omega,0\right) =
f_{\omega(t-1)}\left(\varphi(t-1,\omega,0)\right)
\succeq \mathcal{Q} \gg  0
\end{equation}
Thus the hypotheses of Theorem~\ref{LSD} are satisfied, and
precisely one of the assertions~\textbf{a)} and~\textbf{b)} holds.
By an argument similar to Lemma~23 in~\cite{Riccati-weakconv}, we can show that assertion~\textbf{a)} cannot hold in the face of stochastic boundedness of the sequence $\left\{\widetilde{P}(t)\right\}_{t\in\mathbb{T}_{+}}$ (Lemma~\ref{stoch_boundedness}). Thus assertion~\textbf{b)}
holds, and, as a direct consequence of Theorem~\ref{LSD}, we establish the existence of a unique almost equilibrium
$u^{\overline{A}}(\omega)\gg 0$ defined on a
$\theta^{R}$-invariant set $\Omega^{\ast}\in\mathcal{F}^{R}$ with
$\mathbb{P}\left(\Omega^{\ast}\right)=1$ such
that, for any random variable $v(\omega)$ possessing the property
$0\preceq v(\omega)\preceq\alpha u^{\overline{A}}(\omega)$
for all $\omega\in\Omega^{\ast}$ and deterministic $\alpha>0$, the
following holds:
\begin{equation}
\label{step1}
\lim_{t\rightarrow\infty}\varphi\left(t,\theta_{-t}\omega,v(\theta_{-t}\omega)\right)=
u^{\overline{A}}(\omega),~~\omega\in\Omega^{\ast}
\end{equation}
From the distributional equivalence of pull-back and forward orbits, Lemma~\ref{orbit_lemma} establishes the existence of a unique almost equilibrium $u^{\overline{A}}$, i.e., a unique equilibrium measure for the process $\left\{\widetilde{P}(t)\right\}$ from the distributional equivalence of pull-back and forward orbits. However, to show that the measure induced by $u^{\overline{A}}$ on $\mathbb{S}^{N}_{+}$ is attracting for $\left\{\widetilde{P}(t)\right\}$,
eqn.~(\ref{step1}) must hold for all initial $v$, whereas
Lemma~\ref{orbit_lemma} establishes convergence for a restricted class of
initial conditions $v$. We need the following result to extend it
to general initial conditions.

\begin{lemma}
\label{eq} Under the assumptions of Theorem~\ref{conv_aux}, let
$u^{\overline{A}}$ be the unique almost equilibrium of the RDS
$\left(\theta^{R},\varphi^{R}\right)$. Then
\begin{equation}
\label{eq1}
\mathbb{P}\left(\omega:u^{\overline{A}}(\omega)\succeq
\mathcal{Q}\right)=1
\end{equation}
\end{lemma}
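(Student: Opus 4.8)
The plan is to exploit the fact that every Riccati operator $f_n$ maps the whole cone $\mathbb{S}_+^N$ into the order interval lying above $\mathcal{Q}$, and then to transport this one-step bound onto the equilibrium itself by means of the equivariance (cocycle) identity together with the measure-preservation of the shift. No properties of $u^{\overline{A}}$ beyond its being an equilibrium are needed.

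First I would record the elementary but decisive monotonicity estimate. Evaluating the Riccati operator at the origin gives, for every $n$, $f_{n}(0)=\mathcal{F}\,0\,\mathcal{F}^{T}+\mathcal{Q}-0=\mathcal{Q}$. Since each $f_{n}$ is order preserving (the monotonicity of the individual Riccati operators underlying Lemma~\ref{prop_auxRDS}[ii]), for every $X\succeq 0$ we obtain $f_{n}(X)\succeq f_{n}(0)=\mathcal{Q}$. In particular, recalling~(\ref{RDS6}),
\[
\varphi^{R}(1,\omega,X)=f_{\omega_{0}}(X)\succeq\mathcal{Q},\qquad\forall\,\omega\in\Omega^{R},\ \forall\,X\in\mathbb{S}_{+}^{N}.
\]

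Next I would invoke the defining equivariance property of the equilibrium $u^{\overline{A}}$. On the $\theta^{R}$-invariant, full-measure set $\Omega^{\ast}$ on which $u^{\overline{A}}$ is the almost equilibrium of $(\theta^{R},\varphi^{R})$, the cocycle identity at time one reads
\[
u^{\overline{A}}(\theta_{1}^{R}\omega)=\varphi^{R}\!\left(1,\omega,u^{\overline{A}}(\omega)\right)=f_{\omega_{0}}\!\left(u^{\overline{A}}(\omega)\right),\qquad\omega\in\Omega^{\ast}.
\]
Because $u^{\overline{A}}(\omega)\gg 0\succeq 0$ on $\Omega^{\ast}$, the one-step bound of the previous step applies with $X=u^{\overline{A}}(\omega)$, giving $u^{\overline{A}}(\theta_{1}^{R}\omega)\succeq\mathcal{Q}$ for all $\omega\in\Omega^{\ast}$.

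Finally I would transfer this bound from the shifted argument $\theta_{1}^{R}\omega$ back to $\omega$. Writing $B=\{\omega:u^{\overline{A}}(\omega)\succeq\mathcal{Q}\}$, the previous step says $\Omega^{\ast}\subseteq(\theta_{1}^{R})^{-1}B$, so $\mathbb{P}\!\left((\theta_{1}^{R})^{-1}B\right)=1$; since $\theta_{1}^{R}$ is measure preserving (being the shift of the ergodic metric dynamical system constructed in Subsection~\ref{RDS_aux}), $\mathbb{P}(B)=\mathbb{P}\!\left((\theta_{1}^{R})^{-1}B\right)=1$, which is exactly~(\ref{eq1}). The argument is short, and the only point requiring care—the ``obstacle'' such as it is—is the bookkeeping of arguments: the equivariance identity delivers the bound at $\theta_{1}^{R}\omega$ rather than at $\omega$, and it is the stationarity/measure-preservation of the shift (not any further feature of the Riccati dynamics) that yields the bound almost surely for $u^{\overline{A}}(\omega)$ itself.
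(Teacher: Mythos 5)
Your proposal is correct and follows essentially the same route as the paper, which simply cites the key fact $f_{n}(X)\succeq\mathcal{Q}$ for all $n$ and $X\in\mathbb{S}_{+}^{N}$ and calls the rest routine; you have filled in that routine part correctly via the equilibrium identity $u^{\overline{A}}(\theta^{R}_{1}\omega)=f_{\omega_{0}}\left(u^{\overline{A}}(\omega)\right)$ and measure preservation of the shift. (A marginally shorter finish: since $\Omega^{\ast}$ is $\theta^{R}$-invariant you may apply the identity at $\theta^{R}_{-1}\omega$ to get $u^{\overline{A}}(\omega)\succeq\mathcal{Q}$ directly on $\Omega^{\ast}$, but your measure-preservation step is equally valid.)
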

\begin{proof}
The proof uses the fact that, for all $n$, $f_{n}(X)\succeq \mathcal{Q}$,
and is routine given the corresponding development in Lemma~24 of~\cite{Riccati-weakconv}.
\end{proof}

We now complete the proof of Theorem~\ref{conv_aux}.

\begin{proof}[Proof of Theorem~\ref{conv_aux}]: Let $\mu^{\overline{A}}$
be the distribution of the unique almost equilibrium in
eqn.~(\ref{step1}). By Lemma~\ref{eq} we have $\mu^{\overline{A}}\left(\mathbb{S}_{++}^{N}\right)=1$.
Let $P_{0}\in\mathbb{S}_{+}^{N}$ be an arbitrary initial state. By
construction of the RDS $\left(\theta^{R},\varphi^{R}\right)$, the sequences
$\left\{P_{t}\right\}_{t\in\mathbb{T}_{+}}$ and
$\left\{\varphi^{R}\left(t,\omega,P_{0}\right)\right\}_{t\in\mathbb{T}_{+}}$
are distributionally equivalent, i.e., $P_{t}\ndtstile{}{d}\varphi^{R}\left(t,\omega,P_{0}\right)$.
Recall $\Omega^{\ast}$ as the $\theta^{R}$-invariant set with
$\mathbb{P}^{\overline{\gamma}}\left(\Omega^{\ast}\right)=1$ in
eqn.~(\ref{step1}) on which the almost equilibrium
$u^{\overline{A}}$ is defined. By Lemma~\ref{eq}, there
exists $\Omega_{1}\subset\Omega^{\ast}$ with
$\mathbb{P}^{\overline{\gamma}}\left(\Omega_{1}\right)=1$, such that
\begin{equation}
\label{main_res5} u^{\overline{A}}(\omega)\succeq
\mathcal{Q},~~\omega\in\Omega_{1}
\end{equation}
Define the random variable
$\widetilde{X}:\Omega\longmapsto\mathbb{S}_{+}^{N}$ by
\begin{equation}
\label{main_res6} \left\{ \begin{array}{ll}
                    P_{0} & \mbox{if $\omega\in\Omega_{1}$} \\
                    0 & \mbox{if $\omega\in\Omega_{1}^{c}$}
                   \end{array}
          \right.
\end{equation}
Now choose $\alpha>0$ sufficiently large such that
\begin{equation}
\label{main_res7} P_{0}\preceq\alpha \mathcal{Q}
\end{equation}
This is possible because $\mathcal{Q}\gg 0$. Then
\begin{equation}
\label{main_res8} 0\preceq\widetilde{X}(\omega)\preceq\alpha
u^{\overline{A}}(\omega),~~\omega\in\Omega^{\ast}
\end{equation}
Indeed, we have
\begin{equation}
\label{main_res9} 0\preceq
P_{0}=\widetilde{X}(\omega)\preceq\alpha \mathcal{Q}\preceq\alpha
u^{\overline{A}}(\omega),~~\omega\in\Omega_{1}
\end{equation}
and
\begin{equation}
\label{main_res10} 0=\widetilde{X}(\omega)\preceq\alpha
u^{\overline{A}}(\omega),~~\omega\in\Omega\backslash\Omega_{1}
\end{equation}
We then have by the discussion preceding eqn.~(\ref{step1})
\begin{equation}
\label{main_res11}
\lim_{t\rightarrow\infty}\varphi^{R}\left(t,\theta_{-t}\omega,\widetilde{X}\left(\theta_{-t}\omega\right)\right)
=u^{\overline{A}}(\omega),~~\omega\in\Omega^{\ast}
\end{equation}
Since convergence $\mathbb{P}^{\overline{\gamma}}$~a.s. implies convergence in distribution, we have
\begin{equation}
\label{main_res12}
\varphi^{R}\left(t,\theta_{-t}\omega,\widetilde{X}\left(\theta_{-t}\omega\right)\right)\Longrightarrow\mu^{\overline{A}}
\end{equation}
as $t\rightarrow\infty$, where $\Longrightarrow$ denotes weak convergence or convergence in distribution. Then, by
Lemma~\ref{orbit_lemma}, the sequence $\left\{\varphi^{R}\left(t,\omega,\widetilde{X}(\omega)\right)\right\}_{t\in\mathbb{T}_{+}}$
also converges in distribution to the unique stationary distribution $\mu^{\overline{A}}$, i.e., as
$t\rightarrow\infty$
\begin{equation}
\label{main_res13}
\varphi^{R}\left(t,\omega,\widetilde{X}(\omega)\right)\Longrightarrow\mu^{\overline{A}}
\end{equation}
Now, since $\mathbb{P}^{\overline{\gamma}}\left(\Omega_{1}\right)=1$, by
eqn.~(\ref{main_res6})
\begin{equation}
\label{main_res14}
\varphi^{R}\left(t,\omega,P_{0}\right)=\varphi^{R}\left(t,\omega,\widetilde{X}(\omega)\right),~~\mathbb{P}^{\overline{\gamma}}~a.s.,~t\in\mathbb{T}_{+}
\end{equation}
which implies
\begin{equation}
\label{main_res15}
\varphi^{R}\left(t,\omega,P_{0}\right)\ndtstile{}{d}\varphi^{R}\left(t,\omega,\widetilde{X}(\omega)\right),~~t\in\mathbb{T}_{+}
\end{equation}
From eqns.~(\ref{main_res13},\ref{main_res15}), we then have $\varphi^{R}\left(t,\omega,P_{0}\right)\Longrightarrow\mu^{\overline{A}}$,
which together with the distributional equivalence
$P_{t}\ndtstile{}{d}\varphi^{R}\left(t,\omega,P_{0}\right)$ noted above
 implies, as $t\rightarrow\infty$, $P_{t}\Longrightarrow\mu^{\overline{A}}$.
\end{proof}

\section{Proofs of main results}
\label{proof_main_res}
\subsection{Proof of Theorem~\ref{main1}}
\label{pr_main1}
\begin{proof}
By Theorem~\ref{conv_aux} we know that such a sequence $\left\{\widetilde{P}(t)\right\}$ converges weakly to $\mathbb{\mu}^{\overline{A}}$ when started from a deterministic initial condition. In the case, $\widetilde{P}(0)$ is distributed as $\mathbb{\nu}$, we note that, by the independence of $\widetilde{P}(0)$ and the Markov chain $\{q(t)\}$,
\begin{equation}
\label{pr_main1:1}
\mathbb{E}\left[g\left(\widetilde{P}(t)\right)\right]=\int_{\mathbb{S}_{+}^{N}}
\mathbb{E}\left[\left(\widetilde{P}(t)\right)\left|\widetilde{P}(0)=X\right.\right]d\mathbb{\nu}(X)
\end{equation}
for any $g\in C_{b}(\mathbb{S}_{+}^{N})$. Now, the distribution of the sequence $\left\{\widetilde{P}(t)\right\}$ conditioned on the event $\widetilde{P}(0)=X$ is the same as that when the sequence starts with the deterministic initial condition $X$ (this is true because $\widetilde{P}(0)$ is independent of $\{q(t)\}$.) Hence by Theorem~\ref{conv_aux}
\begin{equation}
\label{pr_main1:2}
\lim_{t\rightarrow\infty}\mathbb{E}\left[\left(\widetilde{P}(t)\right)\left|\widetilde{P}(0)=X\right.\right]=
\int_{\mathbb{S}_{+}^{N}}g(y)d\mathbb{\mu}^{\overline{A}}(Y)
\end{equation}
for all $X$. Since $g$ is bounded, the dominated convergence theorem and eqn.~(\ref{pr_main1:1}) lead to
\begin{equation}
\label{pr_main1:3}
\lim_{t\rightarrow\infty}\mathbb{E}\left[\left(\widetilde{P}(t)\right)\right]=
\int_{\mathbb{S}_{+}^{N}}g(y)d\mathbb{\mu}^{\overline{A}}(Y)
\end{equation}
for all $g\in C_{b}(X)$, and hence the required weak convergence follows.
\end{proof}

\subsection{Proof of Theorem~\ref{main2}}
\label{pr_main2}
\begin{proof} We prove Theorem~\ref{main2} in the order \textbf{1)},\textbf{3)} and \textbf{2)}.

Consider any $\Gamma\in\mathcal{B}(\mathbb{S}_{+}^{N})$. We
estimate the probability
$\mathbb{P}\left(\widehat{P}_{q}(t)\in\Gamma\right)$. To this end, we note that
\begin{equation}
\label{pr_main2:1}
\mathbb{P}\left(\widehat{P}_{q}(t)\in\Gamma\right) =  \sum_{n=1}^{N}\mathbb{P}\left(\widehat{P}_{n}(t)\in\Gamma\right)\mathbb{P}\left(q=n\right)=\frac{1}{N}\sum_{n=1}^{N}\mathbb{P}\left(\widehat{P}_{n}(t)\in\Gamma\right)
\end{equation}
The first step holds because $q$ is independent of the sequences $\left\{\widehat{P}_{n}(t)\right\}$ for all $n$ and subsequently we use that $q$ is uniformly distributed on $V$. Denoting by $\pi^{-1}_{t}$ the inverse of the permutation $\pi_{t}$, we have
\begin{equation}
\label{pr_main2:2}
\mathbb{P}\left(\widehat{P}_{n}(t)\in\Gamma\right) = \mathbb{P}\left(P_{\pi^{-1}_{t}(n)}(t)\in\Gamma\right) = \sum_{l=1}^{N}\mathbb{P}\left(\left\{P_{l}(t)\in\Gamma\right\}\bigcap\left\{\pi^{-1}_{t}(n)=l\right\}\right)
\end{equation}
Note, here, unlike in eqn.~(\ref{pr_main2:1}), we may not gain much by splitting the probabilities in the last step as the events $\left\{P_{l}(t)\in\Gamma\right\}$ and $\left\{\pi^{-1}_{t}(n)=l\right\}$ are not independent. Combining eqns.~(\ref{pr_main2:1},\ref{pr_main2:2}), we have
{\small
\begin{eqnarray}
\label{pr_main2:3}
\mathbb{P}\left(\widehat{P}_{q}(t)\in\Gamma\right) & = & \frac{1}{N}\sum_{n=1}^{N}\sum_{l=1}^{N}\mathbb{P}\left(\left\{P_{l}(t)\in\Gamma\right\}
\bigcap\left\{\pi^{-1}_{t}(n)=l\right\}\right)\nonumber \\ & = & \frac{1}{N}\sum_{l=1}^{N}\sum_{n=1}^{N}\mathbb{P}\left(\left\{P_{l}(t)\in\Gamma\right\}
\bigcap\left\{\pi^{-1}_{t}(n)=l\right\}\right)\nonumber \\ & = & \frac{1}{N}\sum_{l=1}^{N}\mathbb{P}\left(P_{l}(t)\in\Gamma\right)
\end{eqnarray}
}
Note the last step follows from the fact that
\begin{equation}
\label{pr_main2:4}
\sum_{n=1}^{N}\mathbb{P}\left(\left\{P_{l}(t)\in\Gamma\right\}
\bigcap\left\{\pi^{-1}_{t}(n)=l\right\}\right)=\mathbb{P}\left(P_{l}(t)\in\Gamma\right)
\end{equation}
because the events $\left\{\pi^{-1}_{t}(n)=l\right\}$, $n=1,\cdots,N$ are mutually exclusive and exhaustive, $\pi^{-1}(t)$ being a permutation.

Now consider a stationary Markov chain $\left\{\widetilde{p}(t)\right\}$ on $V$ with transition probability $\overline{A}$ and let $\left\{\widetilde{P}(t)\right\}$ be the sequence defined by
\begin{equation}
\label{pr_main2:5}
\widetilde{P}(t+1)=f_{\widetilde{p}(t)}\left(\widetilde{P}(t)\right),~~t\in\mathbb{T}_{+}
\end{equation}
with initial condition $\widetilde{P}(0)=\widehat{P}(0)$. Then,
\begin{equation}
\label{pr_main2:6}\mathbb{P}\left(\widetilde{P}(t)\in\Gamma\right)  = \sum_{l=1}^{N}\mathbb{P}\left(\left.\widetilde{P}(t)\in\Gamma\right|\widetilde{p}(0)=
l\right)\mathbb{P}\left(\widetilde{p}(0)=l\right) =  \frac{1}{N}\mathbb{P}\left(\left.\widetilde{P}(t)\in\Gamma\right|\widetilde{p}(0)=l\right)
\end{equation}
By construction, the distribution of the sequence $\left\{\widetilde{P}(t)\right\}$ conditioned on the event $\{\widetilde{p}(0)=l\}$ is equivalent to that of the sequence $\{P_{l}(t)\}$ and hence
\begin{equation}
\label{pr_main2:7}
\mathbb{P}\left(\left.\widetilde{P}(t)\in\Gamma\right|\widetilde{p}(0)=l\right)=\mathbb{P}\left(P_{l}(t)\in\Gamma\right)
\end{equation}
Hence by eqns.~(\ref{pr_main2:3},\ref{pr_main2:7}) we obtain
\begin{equation}
\label{pr_main2:8}\mathbb{P}\left(\widehat{P}_{q}(t)\in\Gamma\right) =\mathbb{P}\left(\widetilde{P}(t)\in\Gamma\right)
\end{equation}
Thus, for all $t$, $P_{q}(t)\ndtstile{}{d}\widetilde{P}(t)$.
By Theorem~\ref{conv_aux}, we then have the weak convergence of the sequence $\{P_{q}(t)\}$ to $\mathbb{\mu}^{\overline{A}}$.

For the third part, we note that for any
$\Gamma\in\mathbb{B}(\mathbb{S}_{+}^{N})$
\begin{equation}
\label{pr_main2:203}
\frac{1}{N}\sum_{n=1}^{N}\mathbb{P}\left(\widehat{P}_{n}(t)\in\Gamma\right)=
\mathbb{P}\left(\widehat{P}_{q}(t)\in\Gamma\right)
\end{equation}
due to the independence of $q$ from $\{A(t)\}$. Taking the $\limsup$
and noting the non-negativity of the terms, we have for closed
$F$,
{\small
\begin{eqnarray}
\label{pr_main2:204}
\limsup_{t\rightarrow\infty}\mathbb{P}\left(\widehat{P}_{n}(t)\in
F\right) & \leq &
\limsup_{t\rightarrow\infty}\sum_{n=1}^{N}\mathbb{P}\left(\widehat{P}_{n}(t)\in
F\right)\nonumber
\\ & = & N\limsup_{t\rightarrow\infty}\sum_{n=1}^{N}\left[\frac{1}{N}\mathbb{P}\left(\widehat{P}_{n}(t)\in F\right)\right]\nonumber
\\ & = & N\limsup_{t\rightarrow\infty}\mathbb{P}\left(\widehat{P}_{q}(t)\in
F\right)\nonumber  \leq  N\mathbb{\mu}^{\overline{A}}
\end{eqnarray}
}
\noindent
The proof of the second part involves an auxiliary construction and approximation arguments to relate the limit properties of the sequences $\left\{P_{n}(t)\right\}$ to similar processes, where the underlying switching Markov chain is stationary. To this end consider any strictly positive $s\in\mathbb{T}_{+}$. Recall the Markov chains $\left\{p_{n}(t)\right\}$ for $n=1,\cdots,N$ with transition probability matrix $\overline{A}$ and initial state $p_{n}(0)=n$. The corresponding sequence of interacting particle processes $\left\{P_{n}(t)\right\}$ are constructed, for each $n$ as:
\begin{equation}
\label{pr_main2:10}
P_{n}(t+1)=f_{p_{n}(t)}\left(P_{n}(t)\right)
\end{equation}
with initial condition $P_{n}(0)=\widehat{P}(0)$. Let $f_{0}:\mathbb{S}_{+}^{N}\longmapsto\mathbb{S}_{+}^{N}$ denote the Lyapunov operator
\begin{equation}
\label{pr_main2:11}
f_{0}(X)=\mathcal{F}X\mathcal{F}^{T}+\mathcal{Q}
\end{equation}
and note that the following ordering holds:
\begin{equation}
\label{pr_main2:12}
f_{n}(X)\preceq f_{0}(X),~~~\forall n~\mbox{and}~X\in\mathbb{S}_{+}^{N}
\end{equation}
For a given $s>0$ chosen above and for all $n$, define the processes $\left\{P_{n}^{s}(t)\right\}_{t\geq s}$ by
\begin{equation}
\label{pr_main2:13}
P_{n}^{s}(t+1)=f_{p_{n}(t)}\left(P_{n}^{s}(t)\right)
\end{equation}
with deterministic initial value $P_{n}^{s}(s)=f_{0}^{s}\left(\widehat{P}(0)\right)$.
By eqn.~(\ref{pr_main2:12}), for any $s$ tuple $\left(i_{0},i_{1},\cdots,i_{s-1}\right)$ with $i_{r}\in [1,\cdots,N]$ for $r=0,\cdots,s-1$, we note that
\begin{equation}
\label{pr_main2:15}
f_{i_{s-1}}\circ f_{i_{s-2}}\circ\cdots\circ f_{i_{0}}\left(\widehat{P}(0)\right)\preceq f_{0}^{s}\left(\widehat{P}(0)\right)
\end{equation}
and, hence, by the monotonicity of the Riccati operators, we conclude that for all $n$
\begin{equation}
\label{pr_main2:16}
P_{n}(t)\preceq P_{n}^{s}(t),~~~t\geq s
\end{equation}
Also consider a stationary Markov chain $\{q(t)\}_{t\geq s}$ with transition probability $\overline{A}$, i.e., $q(0)$ is uniformly distributed on $V$, and define the process $\left\{Q^{s}(t)\right\}_{t\geq s}$ by
\begin{equation}
\label{pr_main2:17}
Q^{s}(t+1)=f_{q(t)}\left(Q^{s}(t)\right)
\end{equation}
with deterministic initial value
\begin{equation}
\label{pr_main2:18}
Q^{s}(s)=f_{0}^{s}\left(\widehat{P}(0)\right)
\end{equation}
It is to be noted that by Theorem~\ref{conv_aux}, the process $\left\{Q^{s}(t)\right\}$ converges weakly to $\mathbb{\mu}^{\overline{A}}$, i.e.,
\begin{equation}
\label{pr_main2:19}
\lim_{t\rightarrow\infty}d_{P}\left(Q^{s}(t),\mathbb{\mu}^{\overline{A}}\right)=0
\end{equation}
where $d_{P}$ denotes the Prohorov metric. We now set to relate the limit properties of $\left\{P_{n}^{s}(t)\right\}$ to those of $\left\{Q^{s}(t)\right\}$. For $t\geq s$ define the total variation distance between $P_{n}^{s}(t)$ and $Q^{s}(t)$ by
\begin{equation}
\label{pr_main2:20}
d_{v}\left(P_{n}^{s}(t),Q^{s}(t)\right)=\sup_{\Gamma\in\mathcal{B}(\mathbb{S}_{+}^{N})}\left|\mathbb{P}\left(P_{n}^{s}(t)\in\Gamma\right)-\mathbb{P}\left(Q^{s}(t)\in\Gamma\right)\right|
\end{equation}
Since for any $t$, the two sequences considered above assume values in a finite set, we define a set of $(t-s)$ tuples $\Lambda(\Gamma)$ by
\begin{equation}
\label{pr_main2:21}
\Lambda(\Gamma)=\left\{(i_{1},\cdots,i_{t-s})~|~i_{r}\in [1,\cdots,N]~\mbox{for all r and}~f_{i_{t-s}}\circ\cdots\circ f_{i_{1}}\left(f_{0}^{s}(\widehat{P}(0))\right)\right\}
\end{equation}
It is clear that
\begin{eqnarray}
\label{pr_main2:22}
\left\{P_{n}^{s}(t)\in\Gamma\right\}
&\Longleftrightarrow&\left\{(p_{n}(s),\cdots,p_{n}(t-1))\in\Lambda(\Gamma)\right\}\\
\label{pr_main2:23}
\left\{Q^{s}(t)\in\Gamma\right\}&\Longleftrightarrow&\left\{(q(s),\cdots,q(t-1))\in\Lambda(\Gamma)\right\}
\end{eqnarray}
We then have
{\small
\begin{eqnarray}
\label{pr_main2:24}
\mathbb{P}\left(P_{n}^{s}(t)\in\Gamma\right)-\mathbb{P}\left(Q^{s}(t)\in\Gamma\right) & = & \sum_{i_{1}}\mathbb{P}\left(p_{n}(s)=i_{1}\right)\sum_{\left(i_{1},\cdots,i_{t-s}\right)\in\Lambda(\Gamma)}
\prod_{r=1}^{t-s-1}\overline{A}_{i_{r}i_{r+1}}\nonumber \\ & & -\sum_{i_{1}}\mathbb{P}\left(q(s)=i_{1}\right)\sum_{\left(i_{1},\cdots,i_{t-s}\right)\in\Lambda(\Gamma)}
\prod_{r=1}^{t-s-1}\overline{A}_{i_{r}i_{r+1}}\nonumber \\ & = &
\nonumber
\sum_{i_{1}}\left[\mathbb{P}\left(p_{n}(s)=i_{1}\right)-\mathbb{P}\left(q(s)=i_{1}\right)\right]
\sum_{\left(i_{1},\cdots,i_{t-s}\right)\in\Lambda(\Gamma)}\prod_{r=1}^{t-s-1}\overline{A}_{i_{r}i_{r+1}}
\end{eqnarray}
}
and hence
{\small
\begin{eqnarray}
\label{pr_main2:25}
\left|\mathbb{P}\left(P_{n}^{s}(t)\in\Gamma\right)-\mathbb{P}\left(Q^{s}(t)\in\Gamma\right)\right| & \leq & \sum_{i_{1}}\left|\mathbb{P}\left(p_{n}(s)=i_{1}\right)-\mathbb{P}\left(q(s)=i_{1}\right)\right|
\sum_{(i_{1},\cdots,i_{t-s})\in\Lambda(\Gamma)}\prod_{r=1}^{t-s-1}\overline{A}_{i_{r}i_{r+1}}\nonumber \\ & \leq & \sum_{i_{1}}\left|\mathbb{P}\left(p_{n}(s)=i_{1}\right)-\mathbb{P}\left(q(s)=i_{1}\right)\right|\nonumber \\ & \leq & \sum_{i_{1}}d_{v}(p_{n}(s),q(s))\nonumber  \leq Nd_{v}\left(p_{n}(s),q(s)\right)
\end{eqnarray}
}
\noindent
where we have used the fact that
\begin{eqnarray}
\label{pr_main2:26}
\sum_{\left(i_{1},\cdots,i_{t-s}\right)\in\Lambda(\Gamma)}\prod_{r=1}^{t-s-1}\overline{A}_{i_{r}i_{r+1}} & = & \mathbb{P}\left(\left(p_{n}(s+1),\cdots,p_{n}(t-1)\right)=
\left(i_{2},\cdots,i_{t-s}\right)\left|p_{n}(s)=i_{1}\right.\right)\nonumber \leq  1
\end{eqnarray}
We thus obtain
\begin{equation}
\label{pr_main2:27}
d_{v}\left(P_{s}^{n}(t),Q^{s}(t)\right)\leq Nd_{v}\left(p_{n}(s),q(s)\right),~~\forall t\geq s
\end{equation}
It is well known that the finite state Markov chain $\{p_{n}(s)\}$ converges weakly at a geometric rate to the uniform measure, i.e., the measure induced by $q(s)$ for each $s$ and hence in variation. In other words,
\begin{equation}
\label{pr_main2:28}
\lim_{s\rightarrow\infty}d_{v}(p_{n}(s),q(s))=0
\end{equation}
Thus, by eqn.~(\ref{pr_main2:27}), we have
\begin{equation}
\label{pr_main2:29}
\lim_{s\rightarrow\infty}\sup_{t\geq s}d_{v}\left(P_{s}^{n}(t),Q^{s}(t)\right)=0
\end{equation}
and, since convergence in total variation implies weak convergence (\cite{Ethier-Kurtz}), we have
\begin{equation}
\label{pr_main2:30}
\lim_{s\rightarrow\infty}\sup_{t\geq s}d_{P}\left(P_{s}^{n}(t),Q^{s}(t)\right)=0
\end{equation}
Now consider $\varepsilon>0$. Then there exists $s(\varepsilon)$, such that,
\begin{equation}
\label{pr_main2:31}
d_{P}\left(P_{s}^{n}(t),Q^{s}(t)\right)\leq\varepsilon/2,~~~s\geq s(\varepsilon),t\geq s
\end{equation}
Since the sequence $\left\{Q^{s}(t)\right\}$ converges weakly to $\mathbb{\mu}^{\overline{A}}$ for all $s$ (in particular for $s=s(\varepsilon)$,) there exists $t(\varepsilon)\geq s(\varepsilon)$ sufficiently large, such that,
\begin{equation}
\label{pr_main2:32}
d_{P}\left(Q^{s(\varepsilon)}(t),\mathbb{\mu}^{\overline{A}}\right)\leq\varepsilon/2,~~~t\geq t(\varepsilon)
\end{equation}
Then, an application of the triangle inequality for the metric $d_{P}$ leads to
\begin{equation}
\label{pr_main2:33}
d_{P}\left(P_{n}^{s(\varepsilon)}(t),\mathbb{\mu}^{\overline{A}}\right) \leq d_{P}\left(P_{n}^{s(\varepsilon)}(t),Q^{s(\varepsilon)}(t)\right)+d_{P}\left(Q^{s(\varepsilon)}(t),\mathbb{\mu}^{\overline{A}}\right)\leq\varepsilon
\end{equation}
for all $t\geq t(\varepsilon)$. Now, by definition,
\begin{equation}
\label{pr_main2:34}
d_{P}\left(P_{n}^{s(\varepsilon)}(t),\mathbb{\mu}^{\overline{A}}\right)=
\inf\left\{\delta>0\left|\mathbb{P}\left(P_{n}^{s(\varepsilon)}(t)\in F\right)\leq\mathbb{\mu}^{\overline{A}}\left(F^{\delta}\right)+\delta~\mbox{for all closed $F\in\mathbb{S}_{+}^{N}$}\right.\right\}
\end{equation}
where $F^{\delta}$ is defined as
\begin{equation}
\label{pr_main2:35-a}
F^{\delta}=\left\{X\in\mathbb{S}_{+}^{N}\left|\inf_{Y\in F}\left\|X-Y\right\|<\delta\right.\right\}
\end{equation}
Since, by eqn.~(\ref{pr_main2:33}), $d_{P}\left(P_{n}^{s(\varepsilon)}(t),\mathbb{\mu}^{\overline{A}}\right)\leq\varepsilon$ for all $t\geq t(\varepsilon)$, we have, for any closed set $F$,
\begin{equation}
\label{pr_main2:35}
\mathbb{P}\left(P_{n}^{s(\varepsilon)}(t)\in F\right)\leq\mathbb{\mu}^{\overline{A}}\left(F^{\varepsilon}\right)+\varepsilon,~~t\geq t(\varepsilon)
\end{equation}

In addition to $F$ being closed, let us assume that $F$ satisfies monotonicity, i.e., $X\in F$ implies $Y\in F$ for all $Y\succeq X$. By eqn.~(\ref{pr_main2:16}) we have
\begin{equation}
\label{pr_main2:36}
P_{n}(t)\preceq P_{n}^{s(\varepsilon)}(t),~~~t\geq t(\varepsilon)\geq s(\varepsilon)
\end{equation}
and hence
\begin{equation}
\label{pr_main2:37}
\mathbb{P}\left(P_{n}(t)\in F\right)\leq\mathbb{P}\left(P_{n}^{s(\varepsilon)}(t)\in F\right),~~~t\geq t(\varepsilon)
\end{equation}
We then have from eqn.~(\ref{pr_main2:35}) for all $t\geq t(\varepsilon)$
\begin{equation}
\label{pr_main2:38}
\mathbb{P}\left(P_{n}(t)\in F\right) \leq\mathbb{\mu}^{\overline{A}} \left(F^{\varepsilon}\right)+\varepsilon
\end{equation}
Taking the limit as $t\rightarrow\infty$, we have
\begin{equation}
\label{pr_main2:39}
\limsup_{t\rightarrow\infty}\mathbb{P}\left(P_{n}(t)\in F\right)\leq\mathbb{\mu}^{\overline{A}}\left(F^{\varepsilon}\right)+\varepsilon
\end{equation}
The L.H.S. above is now independent of $t$ and, hence, $\varepsilon$ through $t(\varepsilon)$. Since the above holds for arbitrary $\varepsilon>0$, moving to the limit as $\varepsilon\rightarrow 0$ yields
\begin{equation}
\label{pr_main2:40}
\limsup_{t\rightarrow\infty}\mathbb{P}\left(P_{n}(t)\in F\right)\leq\lim_{\varepsilon\rightarrow 0}\mathbb{\mu}^{\overline{A}}\left(F^{\varepsilon}\right)=\mathbb{\mu}^{\overline{A}}\left(F\right)
\end{equation}
The last step follows from the continuity of the probability measure $\mathbb{\mu}^{\overline{A}}$ and the fact that
\begin{equation}
\label{pr_main2:41}
\bigcap_{\varepsilon>0}F^{\varepsilon}=F
\end{equation}
for closed $F$. This establishes the result for general order preserving $F$. The result for sets of the form $\left\{\left.X\in\mathbb{S}_{+}^{N}\right|X\succeq\alpha I\right\}$ or $\left\{\left.X\in\mathbb{S}_{+}^{N}\right|\|X\|\geq\alpha\right\}$ for $\alpha>0$ follow, as they satisfy the general hypothesis on $F$.
\end{proof}

\section{Concluding remarks}
\label{conclusion} The paper develops the gossip interactive Kalman filter (GIKF) for distributed Kalman filtering in sensor networks, when observation sampling and inter-sensor communication occur at the same time scale. Inter-sensor collaboration is achieved by intermittent exchange of filtering states. A traveling particle interpretation of the filtering states leads to a random dynamical system (RDS) formulation of the sequence of conditional error covariances. Under a weak detectability assumption, the estimation error process at each sensor stays stochastically bounded (irrespective of the instability in signal dynamics,) provided the network satisfies some weak connectivity conditions. Also, the network achieves weak consensus, i.e., the conditional error covariance (or the pathwise filtering error) at a randomly selected sensor converges in distribution to a unique invariant measure $\mathbb{\mu}^{\overline{A}}$. The invariant measure $\mathbb{\mu}^{\overline{A}}$ depends on the network connectivity process (the MAC protocol) through the mean  $\overline{A}$ of the random adjacency matrix~$A$.

The characterization of the invariant measure $\mathbb{\mu}^{\overline{A}}$ as a functional of the matrix is  interesting to study the sensitivity of the mapping, $\overline{A}\longrightarrow\mathbb{\mu}^{\overline{A}}$. This would lead to understanding the robustness of the above filtering approach to perturbations in the communication policy, i.e., whether a small change in the MAC protocol (a perturbation of $\overline{A}$) leads to a negligible change of $\mathbb{\mu}^{\overline{A}}$, or the filtering performance changes dramatically. Exploring such comparison principles for the mapping would lead to understanding the more complicated problem of characterizing the invariant measure $\mathbb{\mu}^{\overline{A}}$. Such a characterization, in general, is difficult as there seems to be no direct way of obtaining a functional mapping $\overline{A}$ to $\mathbb{\mu}^{\overline{A}}$. In fact, a much simpler situation (Kalman filtering with intermittent observations) involving a single sensor with observation packet losses demands the machinery of moderate deviations (\cite{Riccati-moddev}) and large random matrix theory (\cite{vh08}) for a characterization of the invariant measures.


\appendices

\renewcommand{\baselinestretch}{1.16}

\section{Random Dynamical Systems: Facts and Results}
\label{RDS}
{\small
We start by defining a random dynamical system (RDS). In the sequel, we follow the notation in~\cite{ArnoldChueshov,Chueshov}.
\begin{definition}[RDS]\label{defn_RDS} A RDS with (one-sided) time
$\mathbb{T}_{+}$ and state space $\mathcal{X}$ is a pair
$(\theta,\varphi)$ with the following properties:
\begin{itemize}
\item[\textbf{A)}] A metric dynamical system
$\theta=\left(\Omega,\mathcal{F},\mathbb{P},\{\theta_{t},t\in\mathbb{T}\}\right)$
with two-sided time $\mathbb{T}$, i.e., a probability space
$(\Omega,\mathcal{F},\mathbb{P})$ with a family of transformations
$\left\{\theta_{t}:\Omega\longmapsto\Omega\right\}_{t\in\mathbb{T}}$ such
that
\begin{itemize}
\item[\textbf{A.1)}]$\theta_{0}=\mbox{id}_{\Omega},~~~\theta_{t}\circ\theta_{s}=\theta_{t+s},~~~\forall
t,s\in\mathbb{T}$
\item[\textbf{A.2)}]$(t,\omega)\longmapsto\theta_{t}\omega$ is
measurable.
\item[\textbf{A.3)}]$\theta_{t}\mathbb{P}=\mathbb{P}~~\forall
t\in\mathbb{T}$, i.e., $\mathbb{P}\left(\theta_{t}B\right)=\mathbb{P}$ for all $B\in\mathcal{F}$ and all $t\in\mathbb{T}$.
\end{itemize}
\item[\textbf{B)}] A cocycle $\varphi$ over $\theta$ of continuous
mappings of $\mathcal{X}$ with time $\mathbb{T}_{+}$, i.e., a
measurable mapping
\begin{equation}
\label{def_RDS} \varphi:\mathbb{T}_{+}\times\Omega\times
\mathcal{X},~~(t,\omega,X)\longmapsto\varphi(t,\omega,X)
\end{equation}
such that
\begin{itemize}
\item[\textbf{B.1)}] The mapping
$X\longmapsto\varphi(t,\omega,X)\equiv\varphi(t,\omega)X$ is
continuous in $X$ for every $t\in\mathbb{T}_{+}$ and
$\omega\in\Omega$.
\item[\textbf{B.2)}] The mappings
$\varphi(t,\omega)\doteq\varphi(t,\omega,\cdot)$ satisfy the
cocycle property:
\begin{equation}
\label{def_RDS1}
\varphi(0,\omega)=\mbox{id}_{\mathcal{X}},~~\varphi(t+s,\omega)=\varphi\left(t,\theta_{s}\omega\right)
\circ\varphi(s,\omega)
\end{equation}
for all $t,s\in\mathbb{T}_{+}$ and $\omega\in\Omega$.
\end{itemize}
\end{itemize}
\end{definition}
Although we consider in this paper discrete time RDS, the general
notion of RDS, as defined in~\cite{Chueshov}, applies equally well
to dynamical systems with continuous time. In the above
definition, the randomness is captured by the probability space
$(\Omega,\mathcal{F},\mathbb{P})$ and iterates indexed by $\omega$
indicates pathwise construction. For example, if $X_{0}$ is the
deterministic initial state of the system of interest at time
$t=0$, the random state at time $t\in\mathbb{T}_{+}$ is given by
\begin{equation}
\label{def_RDS2} X_{t}(\omega)=\varphi\left(t,\omega,X_{0}\right)
\end{equation}
The measurability assumptions in the definition above, guarantee
that the random state $X_{t}$ is a well-defined random variable.
Also, note that the iterates are defined for non-negative
(one-sided) time, however, the family of transformations
$\left\{\theta_{t}\right\}$ is two-sided, which is purely for technical
convenience, as will be seen later.

\textbf{Some results from RDS theory}
We summarize terminology and notions used in the RDS literature (see~\cite{ArnoldChueshovbook,Chueshov} for details.)

Consider a generic RDS $(\theta,\varphi)$ with state space
$\mathcal{X}$ as in Definition~\ref{defn_RDS}. In the following we
assume that $\mathcal{X}$ is a non-empty subset of a real Banach
space $V$ with a closed, convex, solid, normal (w.r.t.~the Banach
space norm,) minihedral cone $V_{+}$. We denote by $\preceq$ the
partial order induced by $V_{+}$ in $\mathcal{X}$ and $<<$ denotes
the corresponding strong order. Although the development that
follows may hold for arbitrary $\mathcal{X}\subset V$, in the
sequel we assume $\mathcal{X}=V_{+}$ (which is true for the RDS
$\left(\theta^{R},\varphi^{R}\right)$ modeling the RARE.)

\begin{definition}[Order-Preserving RDS]
\label{Order} An RDS $(\theta,\varphi)$ with state space $V_{+}$
is called order-preserving if
\begin{equation}
X\preceq
Y~\Longrightarrow~\varphi(t,\omega,X)\preceq\varphi(t,\omega,Y),~~\forall
t\in\mathbb{T}_{+}~,\omega\in\Omega~,X,Y\in V_{+}
\end{equation}
\end{definition}
\begin{definition}[Sublinearity]\label{sublinearity} An order-preserving RDS $(\theta,\varphi)$ with state space $V_{+}$ is
called sublinear if for every $X\in V_{+}$ and $\lambda\in (0,1)$
we have
\begin{equation}
\label{prop_RDS}
\lambda\varphi(t,\omega,X)\preceq\varphi(t,\omega,\lambda
X),~~\forall t>0,~\omega\in\Omega
\end{equation}
The RDS is said to be strictly sublinear if strict inequality in
eqn.~(\ref{prop_RDS}) holds for $X\in\mbox{int}V_{+}$, i.e. for
$X\in\mbox{int}V_{+}$,
\begin{equation}
\label{prop_RDS1}
\lambda\varphi(t,\omega,X)\prec\varphi(t,\omega,\lambda
X),~~\forall t>0,~\omega\in\Omega
\end{equation}
and strongly sublinear if in addition to eqn.~(\ref{prop_RDS}), we
have
\begin{equation}
\label{prop_RDS2}
\lambda\varphi(t,\omega,X)\ll\varphi(t,\omega,\lambda X),~~\forall
t>0,~\omega\in\Omega,~X\in\mbox{int}V_{+}
\end{equation}
\end{definition}
\begin{definition}[Equilibrium]\label{equilibrium} A random
variable $u:\Omega\longmapsto V_{+}$ is called an equilibrium
(fixed point, stationary solution) of the RDS $(\theta,\varphi)$
if it is invariant under $\varphi$, i.e.,
\begin{equation}
\label{prop_RDS3}
\varphi\left(t,\omega,u(\omega)\right)=u\left(\theta_{t}\omega\right),~~\forall
t\in\mathbb{T}_{+}~,\omega\in\Omega
\end{equation}
In case, eqn.~(\ref{prop_RDS3}) holds for all $\omega\in\Omega$,
except on a set of $\mathbb{P}$ measure zero, we call $u$ an
almost equilibrium.
\end{definition}
Since, the transformations $\left\{\theta_{t}\right\}$ are
measure-preserving, i.e.,
$\theta_{t}\mathbb{P}=\mathbb{P},~~\forall t$, we have
\begin{equation}
\label{prop_RDS4} u\left(\theta_{t}\omega\right)\ndtstile{}{d}
u(\omega),~~\forall t
\end{equation}
Thus eqn.~(\ref{prop_RDS3}), in particular, implies that, for an
almost equilibrium $u$, the sequence of iterates
$\left\{\varphi\left(t,\omega,u(\omega)\right)\right\}_{t\in\mathbb{T}_{+}}$
have the same distribution, which is the distribution of $u$.
\begin{definition}[Part]\label{part} The equivalence classes in
$V_{+}$ under the equivalence relation defined by $X\sim Y$ if
there exists $\alpha_{0}\geq 1$ such that $\alpha_{0}^{-1}X\preceq
Y\preceq\alpha_{0} X$ are called parts of $V_{+}$.

We call the part $C_{v}$ generated by a random variable
$v:\Omega\longmapsto V_{+}$ as the collection of random variables
$u:\Omega\longmapsto V_{+}$ such that there exists deterministic
$\alpha_{u}\geq 1$ with
\begin{equation}
\label{part1} \alpha_{u}^{-1}v(\omega)\preceq
u(\omega)\preceq\alpha_{u}v(\omega),~~\forall\omega\in\Omega
\end{equation}
\end{definition}
\begin{definition}[Orbit]\label{orbit} For a random variable $u:\Omega\longmapsto
V_{+}$ we define the \emph{forward} orbit $\eta^{f}_{u}(\omega)$
emanating from $u(\omega)$ as the random set
$\left\{\varphi\left(t,\omega,u(\omega)\right)\right\}_{t\in\mathbb{T}_{+}}$.
The forward orbit gives the sequence of iterates of the RDS
starting at $u$.

Although $\eta^{f}_{u}$ is the object of practical interest,
for technical convenience (will be seen later,) we also define the
\emph{pull-back} orbit $\eta^{b}_{u}(\omega)$ emanating from $u$
as the random set
$\left\{\varphi\left(t,\theta_{-t}\omega,u(\theta_{-t}\omega)\right)\right\}_{t\in\mathbb{T}_{+}}$.
\end{definition}
The reason for defining the pull-back orbit is that it is
comparatively convenient to establish asymptotic properties for
$\eta^{b}_{u}$. However, analyzing $\eta_{u}^{b}$ leads to
understanding asymptotic distributional properties for
$\eta^{f}_{u}$, because the random sequences
$\left\{\varphi\left(t,\omega,u(\omega)\right)\right\}_{t\in\mathbb{T}_{+}}$
and
$\left\{\varphi\left(t,\theta_{-t}\omega,u(\theta_{-t}\omega)\right)\right\}_{t\in\mathbb{T}_{+}}$
are equivalent in distribution. In other words,
\begin{equation}
\label{orbit1}
\varphi\left(t,\omega,u(\omega)\right)\ndtstile{}{d}\varphi\left(t,\theta_{-t}\omega,u(\theta_{-t}\omega)\right),~~\forall
t\in\mathbb{T}_{+}
\end{equation}
This follows from the fact that
$\theta_{t}\mathbb{P}=\mathbb{P},~~\forall t\in\mathbb{T}$. Thus,
in particular, we have the following assertion.
\begin{lemma}
\label{orbit_lemma} Let the sequence
$\left\{\varphi\left(t,\theta_{-t}\omega,u\left(\theta_{-t}\omega\right)\right)\right\}_{t\in\mathbb{T}_{+}}$
converge in distribution to a measure $\mu$ on $V_{+}$, where
$u:\Omega\longmapsto V_{+}$ is a random variable. Then the
sequence
$\left\{\varphi\left(t,\omega,u(\omega)\right)\right\}_{t\in\mathbb{T}_{+}}$
also converges in distribution to the measure $\mu$.
\end{lemma}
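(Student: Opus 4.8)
The plan is to let eqn.~(\ref{orbit1}) do essentially all the work, since the entire content of the lemma is a term-by-term distributional identity together with the elementary fact that weak convergence depends only on the sequence of induced laws. First I would record the general principle: if $\{X_{t}\}_{t\in\mathbb{T}_{+}}$ and $\{Y_{t}\}_{t\in\mathbb{T}_{+}}$ are two sequences of $V_{+}$-valued random variables with $X_{t}\ndtstile{}{d}Y_{t}$ for every fixed $t$, then $X_{t}$ and $Y_{t}$ induce the same element of $\mathcal{P}(V_{+})$ for each $t$, and consequently $X_{t}\Longrightarrow\mu$ holds if and only if $Y_{t}\Longrightarrow\mu$ holds. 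This is immediate from the characterization of weak convergence through test functions: $\langle f,\mathcal{L}(X_{t})\rangle=\langle f,\mathcal{L}(Y_{t})\rangle$ for all $f\in C_{b}(V_{+})$ and all $t$.

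Next I would instantiate this with the forward orbit $X_{t}=\varphi\left(t,\omega,u(\omega)\right)$ and the pull-back orbit $Y_{t}=\varphi\left(t,\theta_{-t}\omega,u(\theta_{-t}\omega)\right)$. Eqn.~(\ref{orbit1}) supplies precisely the hypothesis $X_{t}\ndtstile{}{d}Y_{t}$ for every $t\in\mathbb{T}_{+}$; this identity is itself a direct consequence of the measure-preserving property $\theta_{t}\mathbb{P}=\mathbb{P}$ (Assumption~\textbf{A.3}) of the underlying metric dynamical system, and is already justified in the text preceding the lemma. By assumption the pull-back sequence $\{Y_{t}\}$ converges in distribution to $\mu$, so the term-by-term equality of laws forces the forward sequence $\{X_{t}\}$ to converge in distribution to the same limit $\mu$, which is the assertion of the lemma.

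There is no genuine obstacle here: once eqn.~(\ref{orbit1}) is in hand the conclusion is automatic, and the only point requiring care is conceptual rather than technical, namely that the distributional equivalence in (\ref{orbit1}) is asserted only for each fixed $t$ (not jointly across $t$ or pathwise), which is exactly what is needed, since convergence in distribution of a sequence is a statement about the marginal laws $\mathcal{L}(X_{t})$ alone. To be fully explicit one writes, for arbitrary $f\in C_{b}(V_{+})$, the chain $\mathbb{E}\left[f\left(\varphi(t,\omega,u(\omega))\right)\right]=\mathbb{E}\left[f\left(\varphi(t,\theta_{-t}\omega,u(\theta_{-t}\omega))\right)\right]\longrightarrow\langle f,\mu\rangle$ as $t\rightarrow\infty$, the equality from (\ref{orbit1}) and the limit from the hypothesis, and concludes by the definition of weak convergence.
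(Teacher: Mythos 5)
Your proposal is correct and follows essentially the same route as the paper: the text immediately preceding the lemma establishes the term-by-term distributional identity of eqn.~(\ref{orbit1}) from the measure-preserving property $\theta_{t}\mathbb{P}=\mathbb{P}$, and the lemma is then asserted as an immediate consequence, exactly as in your reduction to the equality of marginal laws $\mathbb{E}\left[f\left(\varphi(t,\omega,u(\omega))\right)\right]=\mathbb{E}\left[f\left(\varphi(t,\theta_{-t}\omega,u(\theta_{-t}\omega))\right)\right]$ for $f\in C_{b}(V_{+})$. Your explicit remark that weak convergence depends only on the marginal laws (not on any joint or pathwise structure) is precisely the point the paper leaves implicit, so there is nothing to add.
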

We now introduce notions of boundedness of RDS, which will be
used in the sequel.
\begin{definition}[Boundedness]\label{boundedness} Let
$a:\Omega\longmapsto V_{+}$ be a random variable. The pull-back
orbit $\eta_{a}^{b}(\omega)$ emanating from $a$ is said to be
bounded on $U\in\mathcal{F}$ is there exists a random variable $C$
on $U$ such that
\begin{equation}
\label{boundedness1}
\left\|\varphi\left(t,\theta_{-t}\omega,a\left(\theta_{-t}\omega\right)\right)\right\|\leq
C(\omega),~~\forall t\in\mathbb{T}_{+},~\omega\in U
\end{equation}
\end{definition}
\begin{definition}[Conditionally Compact RDS]\label{cond_comp} An
RDS $(\theta,\varphi)$ in $V_{+}$ is said to be conditionally
compact if for any $U\in\mathcal{F}$ and pull-back orbit
$\eta_{a}^{b}(\omega)$ which is bounded on $U$ there exists a
family of compact sets $\left\{K(\omega)\right\}_{\omega\in U}$ such that
\begin{equation}
\label{cond_comp1}
\lim_{t\rightarrow\infty}\mbox{dist}\left(\varphi\left(t,\theta_{-t}\omega,a\left(\theta_{-t}\omega\right)
\right),K(\omega)\right)=0,~~\omega\in U
\end{equation}
\end{definition}
It is to be noted that conditionally compact is a topological
property of the space $V_{+}$.
%


We now state a limit set dichotomy result for a class of
sublinear, order-preserving RDS.
\begin{theorem}[Corollary 4.3.1. in~\cite{Chueshov}]\label{LSD}
Let $V$ be a separable Banach space with a normal solid cone
$V_{+}$. Assume that $(\theta,\varphi)$ is a strongly sublinear
conditionally compact order-preserving RDS over an ergodic metric
dynamical system $\theta$. Suppose that $\varphi(t,\omega,0)\gg 0$
for all $t>0$ and $\omega\in\Omega$. Then precisely one of the
following applies:
\begin{itemize}
\item[\textbf{(a)}] For any $X\in V_{+}$ we have
\begin{equation}
\label{LSD1}
\mathbb{P}\left(\lim_{t\rightarrow\infty}\left\|\varphi\left(t,\theta_{-t}\omega,X\right)\right\|=\infty\right)=1
\end{equation}
\item[\textbf{(b)}] There exists a unique almost equilibrium
$u(\omega)\gg 0$ defined on a $\theta$-invariant set\footnote{A
set $A\in\mathcal{F}$ is called $\theta$-invariant if
$\theta_{t}A=A$ for all $t\in\mathbb{T}$.}
$\Omega^{\ast}\in\mathcal{F}$ with
$\mathbb{P}\left(\Omega^{\ast}\right)=1$ such that for any random
variable $v(\omega)$ possessing the property $0\preceq
v(\omega)\preceq\alpha_{0} u(\omega)$ for all $\omega\in\Omega^{\ast}$
and deterministic $\alpha_{0}>0$, the following holds:
\begin{equation}
\label{LSD2}
\lim_{t\rightarrow\infty}\varphi\left(t,\theta_{-t}\omega,v(\theta_{-t}\omega)\right)=u(\omega),~~\omega\in\Omega^{\ast}
\end{equation}
\end{itemize}
\end{theorem}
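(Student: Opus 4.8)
The plan is to reduce the dichotomy to the behavior of a single distinguished orbit, namely the pull-back orbit emanating from $0$, and then to propagate its behavior to all initial data using order-preservation, sublinearity, conditional compactness, and the ergodicity of $\theta$. Throughout I write $X_{t}(\omega):=\varphi(t,\theta_{-t}\omega,0)$. First I would show that $\{X_{t}\}$ is monotone non-decreasing in $t$ for $\mathbb{P}$-a.e.\ $\omega$: expanding $t+1$ through the cocycle property with base point $\theta_{-(t+1)}\omega$ gives $X_{t+1}(\omega)=\varphi(t,\theta_{-t}\omega,\varphi(1,\theta_{-(t+1)}\omega,0))\succeq\varphi(t,\theta_{-t}\omega,0)=X_{t}(\omega)$, where the last step uses $\varphi(1,\theta_{-(t+1)}\omega,0)\succeq 0$ and order-preservation. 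The event $\{\,\sup_{t}\|X_{t}\|<\infty\,\}$ is $\theta$-invariant, so by ergodicity it has probability $0$ or $1$; this is precisely the origin of the dichotomy.

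Next I would resolve the two cases. If $\sup_{t}\|X_{t}\|=\infty$ a.s., then for any $X\in V_{+}$ order-preservation gives $\varphi(t,\theta_{-t}\omega,X)\succeq X_{t}(\omega)$, and normality of $V_{+}$ yields $\|X_{t}(\omega)\|\le c\,\|\varphi(t,\theta_{-t}\omega,X)\|$, so the right-hand orbit also diverges, establishing assertion \textbf{(a)}. If instead $\sup_{t}\|X_{t}\|<\infty$ a.s., then $\{X_{t}\}$ is a bounded monotone pull-back orbit, so conditional compactness together with monotonicity and normality of the cone forces it to converge; I define $u(\omega)=\lim_{t}X_{t}(\omega)$ on the $\theta$-invariant full-measure set $\Omega^{\ast}$. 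That $u$ is an almost equilibrium follows by combining continuity of $\varphi(1,\omega,\cdot)$ with the cocycle identity $\varphi(1,\theta_{t}\eta,\varphi(t,\eta,0))=\varphi(t+1,\eta,0)$ at $\eta=\theta_{-t}\omega$, which rewrites $\varphi(1,\omega,u(\omega))$ as $\lim_{t}\varphi(t+1,\theta_{-(t+1)}(\theta_{1}\omega),0)=u(\theta_{1}\omega)$; induction and the cocycle property extend this to all $t$. Moreover $u(\omega)\succeq X_{1}(\omega)=\varphi(1,\theta_{-1}\omega,0)\gg 0$ by hypothesis, so $u(\omega)\gg 0$.

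It remains to establish the attracting property and the uniqueness of $u$ in case \textbf{(b)}, which I expect to be the main obstacle. For a random variable $v$ with $0\preceq v\preceq\alpha_{0}u$ on $\Omega^{\ast}$, order-preservation gives the lower sandwich $\varphi(t,\theta_{-t}\omega,v(\theta_{-t}\omega))\succeq X_{t}(\omega)\to u(\omega)$, while sublinearity with $\lambda=1/\alpha_{0}$ (the case $\alpha_{0}\le 1$ being trivial) combined with the equilibrium identity $\varphi(t,\theta_{-t}\omega,u(\theta_{-t}\omega))=u(\omega)$ gives the upper sandwich $\varphi(t,\theta_{-t}\omega,v(\theta_{-t}\omega))\preceq\alpha_{0}u(\omega)$. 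The real work is to tighten the upper bound from $\alpha_{0}u$ down to $u$: I would track the minimal factor $\beta_{t}(\omega)=\inf\{\beta\ge 1:\varphi(t,\theta_{-t}\omega,v(\theta_{-t}\omega))\preceq\beta\,u(\omega)\}$ and show, using \emph{strong} sublinearity, that one step of the composition strictly contracts $\beta_{t}$ whenever $\beta_{t}>1$, the strict gain being realized on an event of positive measure. The delicate points are that the gain from strong sublinearity is strict only on $\mbox{int}\,V_{+}$ and is itself random, so converting the a.s.\ monotone decrease of $\beta_{t}$ into $\beta_{t}\to 1$ requires normality of the cone (to turn the multiplicative squeeze into norm convergence to $u$) and an ergodic-theoretic argument ruling out a limiting factor strictly above $1$. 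The same part-metric contraction, applied to two candidate equilibria lying in a common part, simultaneously yields uniqueness of $u\gg 0$. This is essentially the content of Corollary~4.3.1 in~\cite{Chueshov}, whose proof we follow.
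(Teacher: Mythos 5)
The paper offers no proof of Theorem~\ref{LSD} at all: it is imported verbatim as Corollary~4.3.1 of~\cite{Chueshov}, so there is no internal argument to compare yours against. Your sketch correctly reconstructs the standard proof of that corollary, and in the same order the monotone-RDS literature does it: monotonicity of the pull-back orbit $\varphi(t,\theta_{-t}\omega,0)$ via the cocycle identity and order preservation, a zero--one law from ergodicity of $\theta$ for the boundedness of that orbit (which is the source of the dichotomy, with normality upgrading $\sup_{t}\|X_{t}\|=\infty$ to divergence of $\lim_{t}\|\varphi(t,\theta_{-t}\omega,X)\|$ for every $X\in V_{+}$), conditional compactness plus monotonicity for convergence to an almost equilibrium $u\gg 0$ in the bounded case, and the sandwich $X_{t}(\omega)\preceq\varphi(t,\theta_{-t}\omega,v(\theta_{-t}\omega))\preceq\alpha_{0}u(\omega)$ from sublinearity with $\lambda=1/\alpha_{0}$. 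The one substantive step you do not carry out is the part-metric contraction ($\beta_{t}\rightarrow 1$ via strong sublinearity and an ergodic argument ruling out a limiting factor above $1$), which you explicitly flag and defer to~\cite{Chueshov}; as a blind proof that is the missing core, but since the paper itself treats the entire theorem as a black-box citation, your proposal matches the paper's stance while additionally exposing the correct proof architecture.
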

}

\section{Proofs in Section~\ref{RDS_auxproof}}
\label{RDS_123}
{\small

\textbf{Proof of Lemma~\ref{dist_obsRiccati}}
The proof is obtained by constructing an approximate filter with suboptimal performance, and then bounding its error by using the rank condition on the Grammian $G_{w_{0}}$. We detail such a construction now.

Consider $\ell$ steps $t=1,\cdots,\ell$ of the
 linear time-varying signal/observation model given by~\ref{sys_model}
 and~\ref{obs_n}, where in~\ref{obs_n} we index the observation matrix
 as~$\mathcal{C}_{n_{t}}$ where $n(t)$ indicates the current state of the walk~$w_0$.
 The signal vector $\mathbf{x}_{t}\in\mathbb{R}^{M}$ with initial state $\mathbf{x}_{1}$ is a Gaussian random variable with known mean $\overline{\mathbf{x}}_{1}$ and variance $X\in\mathbb{S}_{+}^{N}$. The system noise process $\{\mathbf{w}_{t}\}$ is uncorrelated zero mean Gaussian with covariance $\mathcal{Q}$. The observation noise process $\{\mathbf{v}_{t}\}_{t=1}^{\ell}$ is uncorrelated zero mean Gaussian with time varying error covariance $\mathcal{R}_{n_{t}}$ and independent of the initial signal state and the system noise process. By the above construction, the optimal estimate of the signal state $\mathbf{x}_{t}$ at time $t$, based on observations till that time, is given by the Kalman filter initialized with $X$ as the predicted conditional error covariance at time $t=1$. In other words, the optimal m.m.s.e.~state estimator (predictor form)
\begin{equation}
\label{unif6}\widehat{\mathbf{x}}_{w_{0}}(t)=\mathbb{E}\left[\mathbf{x}_{t}\left|\{\mathbf{y}_{s}\}_{1\leq s<t}\right.\right]
\end{equation}
of $\mathbf{x}_{t}$ based on observations $\left\{\mathbf{y}_{s}\right\}_{1\leq s<t}$ for $1\leq t\leq\ell +1$ can be recursively constructed through the Kalman filter and the corresponding predicted conditional error covariance sequence $\left\{P_{w_{0}}(t)\right\}_{1\leq t\leq\ell+1}$ satisfies the recursion:
\begin{equation}
\label{unif7} P_{w_{0}}(t+1)=\mathcal{F}P_{w_{0}}(t)\mathcal{F}^{T}+\mathcal{Q}-\mathcal{F}P_{w_{0}}(t)\mathcal{C}_{n_{t}}^{T}
\left(\mathcal{C}_{n_{t}}P_{w_{0}}(t)\mathcal{C}_{n_{t}}^{T}+\mathcal{R}_{n_{t}}\right)^{-1}\mathcal{C}_{n_{t}}P_{w_{0}}(t)\mathcal{F}^{T}
\end{equation}
with initial condition $P_{w_{0}}(1)=X$. We then have
\begin{equation}
\label{unif8}P_{w_{0}}(\ell+1)=f_{n_{\ell}}\circ\cdots\circ f_{n_{1}}(X)
\end{equation}
the R.H.S.~being the desired functional form in eqn.~(\ref{unif3}), i.e., $P_{w_{0}}(\ell+1)=g_{w_{0}}(X)$. Since for a Kalman filter with deterministic system/observation matrices, the conditional error covariance is equal to the unconditional one and the fact that the Kalman filter minimizes any positive definite form of the estimation error, for a generic estimator $\widehat{\mathbf{h}}$ of $\mathbf{x}_{\ell+1}$ based on $\{\mathbf{y}_{s}\}_{1\leq s\leq\ell}$ we have
\begin{equation}
\label{unif100}
P_{w_{0}}(\ell+1)\preceq\mathbb{E}\left[\left(\mathbf{x}_{\ell+1}-\widehat{\mathbf{h}}\right)
\left(\mathbf{x}_{\ell+1}-\widehat{\mathbf{h}}\right)^{T}\right]
\end{equation}
where $\preceq$ refers to the partial order on $\mathbb{S}_{+}^{N}$.
In order to upper bound the functional $g_{w_{0}}$, we now construct a suboptimal state estimator with a guaranteed estimation performance. To this end, define the modified Grammian
\begin{equation}
\label{unif10}\widetilde{G}_{w_{0}}=\sum_{t=1}^{\ell}\left(\mathcal{F}^{t-1}\right)^{T}
\mathcal{C}_{n_{t}}^{T}\mathcal{R}_{n_{t}}^{-1}\mathcal{C}_{n_{t}}\mathcal{F}^{t-1}
\end{equation}
We note that $\widetilde{G}_{w_{0}}$ is invertible by the invertibility of $G_{w_{0}}$ and the noise covariances $\mathcal{R}_{n_{t}}$. Define the suboptimal estimator of $\mathbf{x}_{\ell+1}$ by:
\begin{equation}
\label{unif11}\overline{\mathbf{x}}_{w_{0}}(\ell+1)=\mathcal{F}^{\ell}\widetilde{G}_{w_{0}}^{-1}
\sum_{t=1}^{\ell}\left(\mathcal{F}^{t-1}\right)^{T}\mathcal{C}_{n_{t}}\mathcal{R}_{n_{t}}^{-1}\mathbf{y}_{t}
\end{equation}
based on observations $\left\{\mathbf{y}_{s}\right\}_{1\leq s\leq\ell}$. Using the fact, that,
\begin{equation}
\label{unif12}\mathbf{x}_{t}=\mathcal{F}^{t-1}\mathbf{x}_{1}+
\sum_{s=1}^{t-1}\mathcal{F}^{t-1-s}\mathbf{w}_{s},~~~1\leq t\leq\ell +1
\end{equation}
we have from eqn.~(\ref{unif11})
\begin{eqnarray}
\label{unif13}\overline{\mathbf{x}}_{w_{0}}(\ell+1)&=&\mathcal{F}^{\ell}\widetilde{G}_{w_{0}}^{-1}
\sum_{t=1}^{\ell}\left(\mathcal{F}^{t-1}\right)^{T}\mathcal{C}_{n_{t}}\mathcal{R}_{n_{t}}^{-1}\mathcal{C}_{n_{t}}
\mathcal{F}^{t-1}\mathbf{x}_{1}+
\mathcal{F}^{\ell}\widetilde{G}_{w_{0}}^{-1}\sum_{t=1}^{\ell}\left(\mathcal{F}^{t-1}\right)^{T}
\mathcal{C}_{n_{t}}\mathcal{R}_{n_{t}}^{-1}\mathcal{C}_{n_{t}}\sum_{s=1}^{t-1}\mathcal{F}^{t-1-s}\mathbf{w}_{s}
\nonumber \\ & & +\mathcal{F}^{\ell}\widetilde{G}_{w_{0}}^{-1}\sum_{t=1}^{\ell}
\left(\mathcal{F}^{t-1}\right)^{T}\mathcal{C}_{n_{t}}\mathcal{R}_{n_{t}}^{-1}\mathbf{v}_{n_{t}}
\nonumber \\
 & = & \mathcal{F}^{\ell}\mathbf{x}_{1}+
\mathcal{F}^{\ell}\widetilde{G}_{w_{0}}^{-1}\sum_{t=1}^{\ell}\left(\mathcal{F}^{t-1}\right)^{T}
\mathcal{C}_{n_{t}}\mathcal{R}_{n_{t}}^{-1}\mathcal{C}_{n_{t}}\sum_{s=1}^{t-1}\mathcal{F}^{t-1-s}\mathbf{w}_{s}
\nonumber 
+\mathcal{F}^{\ell}\widetilde{G}_{w_{0}}^{-1}\sum_{t=1}^{\ell}\left(\mathcal{F}^{t-1}\right)^{T}
\mathcal{C}_{n_{t}}\mathcal{R}_{n_{t}}^{-1}\mathbf{v}_{n_{t}}
\end{eqnarray}
The filtering error is then given by
\begin{eqnarray}
\label{unif14}
\mathbf{e}_{w_{0}}(\ell+1)& = & \mathbf{x}_{\ell+1}-\overline{\mathbf{x}}_{w_{0}}(\ell+1)
\nonumber \\
 & = & -\mathcal{F}^{\ell}\widetilde{G}_{w_{0}}^{-1}\sum_{t=1}^{\ell}\left(\mathcal{F}^{t-1}\right)^{T}
 \mathcal{C}_{n_{t}}\mathcal{R}_{n_{t}}^{-1}\mathcal{C}_{n_{t}}
 \sum_{s=1}^{t-1}\mathcal{F}^{t-1-s}\mathbf{w}_{s}-\mathcal{F}^{\ell}\widetilde{G}_{w_{0}}^{-1}
 \sum_{t=1}^{\ell}\left(\mathcal{F}^{t-1}\right)^{T}\mathcal{C}_{n_{t}}\mathcal{R}_{n_{t}}^{-1}\mathbf{v}_{n_{t}}
\end{eqnarray}
We note that the error above is independent of the initial state $\mathbf{x}_{1}$ (and hence the covariance $X$) of the system, and the mean square boundedness of the process noise $\{\mathbf{w}_{t}\}$ and observation noise $\{\mathbf{v}_{t}\}$ imply the existence of a constant $\alpha_{0}>0$, such that,
\begin{equation}
\label{unif15}\mathbb{E}\left[\mathbf{e}_{w_{0}}(\ell+1)\mathbf{e}_{w_{0}}(\ell+1)^{T}\right]\preceq\alpha_{0} I
\end{equation}
The Lemma then follows by the optimality of the Kalman filter, as stated in eqn.~(\ref{unif100}).

\textbf{Proof of Lemma~\ref{stoch_boundedness}}
 In case $\mathcal{F}$ is stable, the claim is obvious, as the suboptimal estimate of 0 at each sensor for all time is stochastically bounded. So, in the sequel we assume $\mathcal{F}$ is unstable.

The proof is somewhat technical and mainly uses the uniform boundedness of the composition of Riccati operators in Lemma~\ref{dist_obsRiccati} and the ergodicity of the underlying switching Markov chain $\left\{\widetilde{p}(t)\right\}_{t\in\mathbb{T}_{+}}$. From Lemma~\ref{dist_obsRiccati} it follows that a successive application of $\ell$ Riccati maps (in the composition order $f_{n_{l}}\circ\cdots\circ f_{n_{1}}$) reduces the iterate in the conic interval $[0,\alpha_{0}I]$ irrespective of its initial value. The approach is to relate the probability of large exceedance of $\widetilde{P}_{t}$ to the hitting time statistics of a modified Markov chain. We detail it below.

First, we note that the regularity of the distributions of $\widetilde{P}(t)$ for every $t$, implies that it suffices to show
\begin{equation}
\label{sb200}
\lim_{J\rightarrow\infty}\sup_{t\geq t_{0}}\mathbb{P}\left(\left\|\widetilde{P}(t)\right\|>J\right)=0
\end{equation}
for some arbitrarily large $t_{0}\in\mathbb{T}_{+}$.
For every $n$, the Riccati update is upper bounded by the Lyapunov operator, i.e.,
\begin{equation}
\label{sb2}f_{n}(X)\preceq \mathcal{F}X\mathcal{F}^{T}+\mathcal{Q},~~~\forall X\in\mathbb{S}_{+}^{N}
\end{equation}
For sufficiently large $J>0$, define
\begin{equation}
\label{sb3}k(J)=\max_{k}\left\{k\in\mathbb{T}_{+}\left|\alpha^{2k}\alpha_{0}
+\frac{\alpha^{2k}-1}{\alpha^{2}-1}\|\mathcal{Q}\|\leq J\right.\right\}
\end{equation}
where $\alpha=\|\mathcal{F}\|$. Since $\mathcal{F}$ is unstable ($\alpha>1$), we note that $k(J)\rightarrow\infty$ as $J\rightarrow\infty$.

We introduce additional notation here. For integers $t_{0},t_{1}\geq\ell$, the phrase ``there exists a $(n_{1},n_{2},\cdots,n_{\ell})$ cycle in the interval $[t_{0},t_{1}]$'' indicates the existence of an integer $t_{0}\leq\acute{t}\leq t_{1}$, such that,
\begin{equation}
\label{sb4}\widetilde{p}(\acute{t}-\ell+s)=n_{s},~~~1\leq s\leq\ell
\end{equation}
where $\left\{\widetilde{p}(t)\right\}_{t\in\mathbb{T}_{+}}$ is the switching Markov chain.

We now make the following claim for relating the probabilities of interest for sufficiently large $t$:
\begin{equation}
\label{sb5}
\mathbb{P}\left(\left\|\widetilde{P}(t)\right\|>J\right)\leq\mathbb{P}\left(\mbox{no $(n_{1},n_{2},\cdots,n_{\ell})$ exists in $[t-k(J),t]$}\right)
\end{equation}
Indeed, assume on the contrary that a $(n_{1},\cdots,n_{\ell})$ cycle exists in the interval $[t-k(J),t]$. Then there exists $\acute{t}\in[t-k(J),t]$, such that,
\begin{equation}
\label{sb6}\widetilde{p}(\acute{t}-\ell+s)=n_{s},~~~1\leq s\leq\ell
\end{equation}
This implies
\begin{equation}
\label{sb7}
\widetilde{P}\left(\acute{t}\right)=f_{n_{\ell}}\circ\cdots\circ f_{n_{1}}\left(\widetilde{P}\left(\acute{t}-\ell+1\right)\right)
\end{equation}
and hence by Lemma~\ref{dist_obsRiccati}
\begin{equation}
\label{sb8} \widetilde{P}\left(\acute{t}\right)\preceq\alpha_{0}I
\end{equation}
which holds irrespective of the value of $\widetilde{P}\left(\acute{t}-\ell+1\right)$. By eqn.~(\ref{sb2}) we note that
\begin{equation}
\label{sb9}\widetilde{P}(s)\preceq \mathcal{F}\widetilde{P}(s-1)\mathcal{F}^{T}+\mathcal{Q},~~~\forall s
\end{equation}
Continuing the recursion and noting $\widetilde{P}\left(\acute{t}\right)\preceq\alpha_{0}I$
\begin{eqnarray}
\label{sb10}\left\|\widetilde{P}(t)\right\| & \leq & \alpha^{2\left(t-\acute{t}\right)}\left\|\widetilde{P}\right\|+
\frac{\alpha^{2\left(t-\acute{t}\right)}-1}{\alpha^{2}-1}\|\mathcal{Q}\|\nonumber 
=\alpha^{2\left(t-\acute{t}\right)}\alpha_{0}+\frac{\alpha^{2\left(t-\acute{t}\right)}-1}{\alpha^{2}-1}\|\mathcal{Q}\|
\end{eqnarray}
Since $\left(t-\acute{t}\right)\leq k(J)$, it follows from the above
\begin{eqnarray}
\label{sb11}
\left\|\widetilde{P}(t)\right\| & \leq & \alpha^{2\left(t-\acute{t}\right)}\alpha_{0}+\frac{\alpha^{2\left(t-\acute{t}\right)}-1}{\alpha^{2}-1}\|\mathcal{Q}\|\nonumber
 \leq\alpha^{2k(J)}\alpha_{0}+\frac{\alpha^{2k(J)}-1}{\alpha^{2}-1}\|\mathcal{Q}\|\nonumber
 \leq J
\end{eqnarray}
where the last step follows from the definition of $k(J)$ (eqn.~(\ref{sb3})).
We thus note that the existence of a $\left(n_{1},\cdots,n_{\ell}\right)$ cycle in $[t-k(J),t]$ implies $\left\|\widetilde{P}(t)\right\|\leq J$. i.e., we have the event inclusion:
\begin{equation}
\label{sb12}
\left\{\mbox{there exists a $\left(n_{1},\cdots,n_{\ell}\right)$ cycle in $[t-k(J),t]$}\right\}\subset\left\{\left\|\widetilde{P}(t)\right\|\leq J\right\}
\end{equation}
The claim in eqn.~(\ref{sb5}) follows. Thus estimating the probability on the L.H.S. of eqn.~(\ref{sb5}) reduces to estimating the probability of a $\left(n_{1},\cdots,n_{\ell}\right)$ cycle in $[t-k(J),t]$. To this end we construct another Markov chain $\{z(t)\}_{t\geq\ell}$. The state space $\mathcal{Z}$ is a subset of $V^{\ell}$ given by:
\begin{equation}
\label{sb13}\mathcal{Z}=\left\{z=\left(i_{1},i_{2},\cdots,i_{\ell}\right)~|~\overline{A}_{i_{j},i_{j+1}}>0,~~~1\leq j<\ell\right\}
\end{equation}
The dynamics of the Markov chain $\{z(t)\}_{t\geq \ell}$ is given in terms of the Markov chain $\left\{\widetilde{p}(t)\right\}_{t\in\mathbb{T}_{+}}$ as follows:
\begin{equation}
\label{sb14} z(t)=(\widetilde{p}(t-\ell+1),\widetilde{p}(t-\ell+2),\cdots,\widetilde{p}(t))
\end{equation}
From the dynamics of $\left\{\widetilde{p}(t)\right\}_{t\in\mathbb{T}_{+}}$ it follows that $\{z(t)\}_{t\geq \ell}$ is a Markov chain with transition probability $\overline{A}_{nl}$ between allowable states $\left(i_{1},i_{2},\cdots,i_{\ell-1},n\right)$ and $\left(i_{2},\cdots,\i_{\ell},n,l\right)$. With state space $\mathcal{Z}$, the Markov chain $\{z(t)\}$ inherits irreducibility and aperiodicity from that of $\left\{\widetilde{p}(t)\right\}$. Also, $\{z(t)\}$ is stationary from the stationarity of $\left\{\widetilde{p}(t)\right\}$ with invariant distribution:
\begin{equation}
\label{sb15}
\mathbb{P}\left(z(t)=\left(i_{1},i_{2},\cdots,i_{\ell}\right)\right)=
\frac{1}{N}\prod_{j=1}^{\ell-1}\overline{A}_{j,j+1},~~\left(i_{1},i_{2},\cdots,i_{\ell}\right)\in\mathcal{Z},~t\geq \ell,~t\in\mathbb{T}_{+}
\end{equation}
Denote the hitting time $\tau_{0}$ of $\{z(t)\}$ to the state $\left(n_{1},\cdots,n_{\ell}\right)$ by:
\begin{equation}
\label{sb16}
\tau_{0}=\min\left\{t>\ell\left|z(t)=\left(n_{1},\cdots,n_{\ell}\right)\right.\right\}
\end{equation}
and for all $z\in\mathcal{Z}$ define
\begin{equation}
\label{sb17}
\mathbb{P}_{z}\left(\tau_{0}>s\right)=\mathbb{P}\left(\tau_{0}>s~|~z(\ell)=z\right)
\end{equation}
Also, for each $t\geq\ell$ and $J$ sufficiently large, define the stopping times
\begin{equation}
\label{sb18}
\tau_{t}^{J}=\min\left\{t\geq t-k(J)\left|z(t)=\left(n_{1},\cdots,n_{\ell}\right)\right.\right\}
\end{equation}
From the Markov property it then follows
\begin{equation}
\label{sb19}
\mathbb{P}\left(\tau_{t}^{J}>t~|~z(t-k(J)-1)=z\right)=\mathbb{P}_{z}\left(\tau_{0}>k(J)+1\right)
\end{equation}
It then follows successively
\begin{eqnarray}
\label{sb20}
\mathbb{P}\left(\mbox{no $\left(n_{1},\cdots,n_{\ell}\right)$ exists in $[t-k(J),t]$}\right)&=&\mathbb{P}\left(\tau_{t}^{J}>t\right)\nonumber \\ & = & \sum_{z\in\mathcal{Z}}\left[\mathbb{P}\left(z(t-k(J)-1)=z\right)\right.\nonumber \\ & &
\nonumber
\left.\mathbb{P}\left(\tau_{t}^{J}>t~|~z(t-k(J)-1)=z\right)\right]\\
\label{sb21}
&=&\sum_{z\in\mathcal{Z}}\mathbb{P}\left(z(t-k(J)-1)=z\right)\mathbb{P}_{z}\left(\tau_{0}>k(J)+1\right)
\end{eqnarray}
Since the above development holds for all $t\geq t_{0}$ for some sufficiently large $t_{0}$, we conclude from eqn.~(\ref{sb5})
\begin{equation}
\label{sb25}
\sup_{t\geq t_{0}}\mathbb{P}\left(\left\|\widetilde{P}(t)\right\|>J\right)\leq \sum_{z\in\mathcal{Z}}\mathbb{P}\left(z(t-k(J)-1)=z\right)\mathbb{P}_{z}\left(\tau_{0}>k(J)+1\right)
\end{equation}
The recurrence (in fact positive recurrence) of the finite state Markov chain $\{z(t)\}$ and the fact that $k(J)\rightarrow\infty$ as $J\rightarrow\infty$ imply, for all $z\in\mathcal{Z}$,
\begin{equation}
\label{sb22}\lim_{J\rightarrow\infty}\mathbb{P}_{z}\left(\tau_{0}>k(J)+1\right)=0
\end{equation}
Since $\mathcal{Z}$ is finite, letting $J\rightarrow\infty$ in eqn.~(\ref{sb25}) leads to
\begin{equation}
\label{sb23}
\lim_{J\rightarrow\infty}\sup_{t\geq t_{0}}\mathbb{P}\left(\left\|\widetilde{P}(t)\right\|>J\right)=0
\end{equation}
by the dominated convergence theorem and the Lemma follows.
}

\bibliographystyle{IEEEtran}
\bibliography{IEEEabrv,CentralBib}

\begin{thebibliography}{10}
\providecommand{\url}[1]{#1}
\csname url@samestyle\endcsname
\providecommand{\newblock}{\relax}
\providecommand{\bibinfo}[2]{#2}
\providecommand{\BIBentrySTDinterwordspacing}{\spaceskip=0pt\relax}
\providecommand{\BIBentryALTinterwordstretchfactor}{4}
\providecommand{\BIBentryALTinterwordspacing}{\spaceskip=\fontdimen2\font plus
\BIBentryALTinterwordstretchfactor\fontdimen3\font minus
  \fontdimen4\font\relax}
\providecommand{\BIBforeignlanguage}[2]{{%
\expandafter\ifx\csname l@#1\endcsname\relax
\typeout{** WARNING: IEEEtran.bst: No hyphenation pattern has been}%
\typeout{** loaded for the language `#1'. Using the pattern for}%
\typeout{** the default language instead.}%
\else
\language=\csname l@#1\endcsname
\fi
#2}}
\providecommand{\BIBdecl}{\relax}
\BIBdecl

\bibitem{raowhyte-91}
B.~Rao and H.~Durrant-Whyte, ``Fully decentralized algorithm for multisensor
  {K}alman filtering,'' in \emph{Inst.~Electr.~Eng.-Control Theory Appl.}, vol.
  138, September 1991, pp. 413--420.

\bibitem{olfati-saber:2005}
R.~Olfati-Saber, ``Distributed {K}alman filter with embedded consensus
  filters,'' in \emph{CDC-ECC'05, 44th IEEE Conference on Decision and Control
  and 2005 European Control Conference}, 2005, pp. 8179--8184.

\bibitem{Khan-Moura}
U.~A. Khan and J.~M.~F. Moura, ``Distributing the {K}alman filter for
  large-scale systems,'' \emph{IEEE Transactions on Signal Processing},
  vol.~56, no.~10, pp. 4919--4935, October 2008.

\bibitem{Schenato-Kalman}
R.~Carli, A.~Chiuso, L.~Schenato, and S.~Zampieri, ``Distributed {K}alman
  filtering using consensus strategies,'' \emph{IEEE Journal on Selected Areas
  of Communications}, vol.~26, no.~4, pp. 622--633, September 2008.

\bibitem{Boyd-GossipInfTheory}
S.~Boyd, A.~Ghosh, B.~Prabhakar, and D.~Shah, ``Randomized gossip algorithms,''
  \emph{IEEE Transactions on Information Theory}, vol.~52, pp. 2508 -- 2530,
  June 2006.

\bibitem{Mckeown}
N.~McKeown, A.~Mekkittikul, V.~Anantharam, and J.~Walrand, ``Achieving 100~\%
  throughput in an input-queued switch,'' \emph{IEEE Transactions on
  Communications}, vol.~47, no.~8, pp. 1260--1267, August 1999.

\bibitem{batchelor-73}
G.~K. Batchelor, \emph{An introduction to fluid dynamics}, ser. Cambridge
  Mathematical Library.\hskip 1em plus 0.5em minus 0.4em\relax Cambridge
  University Press, 1973.

\bibitem{ckvs89}
H.~Chen, P.~Kumar, and J.~van Schuppen, ``On {K}alman filtering for
  conditionally {G}aussian systems with random matrices,'' \emph{Systems \&
  Control Letters}, vol.~13, pp. 397--404, 1989.

\bibitem{wg99}
Y.~Wang and L.~Guo, ``On stability of random {R}iccati equations,''
  \emph{Science in China Series E: Technological Sciences}, vol.~42, no.~2, pp.
  136--148, April 1999.

\bibitem{ms03}
A.~Matveev and A.~Savkin, ``The problem of state estimation via asynchronous
  communication channels with irregular transmission times,'' \emph{IEEE
  Transactions on Automatic Control}, vol.~48, pp. 670--676, 2006.

\bibitem{Bruno}
B.~Sinopoli, L.~Schenato, M.~Franceschetti, K.~Poolla, M.~Jordan, and
  S.~Sastry, ``Kalman filtering with intermittent observations,'' \emph{IEEE
  Transactions of Automatic Control}, vol.~49, no.~9, pp. 1453--1464, 2004.

\bibitem{Liu:04}
X.~Liu and A.~Goldsmith, ``Kalman filtering with partial observation losses,''
  in \emph{Proceedings of IEEE Conference on Decision and Control}, vol.~4,
  Bahamas, December 2004, pp. 4180--4186.

\bibitem{Gupta:05}
V.~Gupta, T.~Chung, B.~Hassibi, and R.~M. Murray, ``On a stochastic sensor
  selection algorithm with applications in sensor scheduling and sensor
  coverage,'' \emph{Automatica}, vol.~42, no.~2, pp. 251--260, February 2006.

\bibitem{Xu:05}
Y.~Xu and J.~Hespanha, ``Estimation under controlled and uncontrolled
  communications in networked control systems,'' in \emph{IEEE Conference on
  Decision and Control}, Sevilla, Spain, December 2005, pp. 842--847.

\bibitem{Minyi:07}
M.~Huang and S.~Dey, ``Stability of {K}alman filtering with {M}arkovian packet
  losses,'' \emph{Automatica}, vol.~43, no.~4, pp. 598--607, 2007.

\bibitem{kp-fb:07j}
K.~Plarre and F.~Bullo, ``On {K}alman filtering for detectable systems with
  intermittent observations,'' \emph{IEEE Transactions on Automatic Control},
  vol.~54, no.~2, pp. 386--390, February 2009.

\bibitem{Craig:07}
C.~Robinson and P.~R. Kumar, ``Sending the most recent observation is not
  optimal in networked control: Linear temporal coding and towards the design
  of a control specific transport protocol,'' in \emph{46th IEEE Decision and
  Control Conference}, New Orleans, LA, Dec 2007, pp. 334--339.

\bibitem{xx07}
L.~Xie and L.~Xie, ``Stability of a random {R}iccati equation with {M}arkovian
  binary switching,'' in \emph{46th IEEE Decision and Control Conference}, New
  Orleans, LA, Dec 2007, pp. 1565--1570.

\bibitem{Riccati-weakconv}
\BIBentryALTinterwordspacing
S.~Kar, B.~Sinopoli, and J.~M.~F. Moura, ``Kalman filtering with intermittent
  observations: Weak convergence to a stationary distribution,'' August 2009,
  accepted for publication in the IEEE Transactions on Automatic Control with
  mandatory revisions. [Online]. Available:
  \url{http://arxiv.org/abs/0903.2890}
\BIBentrySTDinterwordspacing

\bibitem{Riccati-moddev}
\BIBentryALTinterwordspacing
S.~Kar and J.~M.~F. Moura, ``Kalman filtering with intermittent observations:
  Weak convergence and moderate deviations,'' October 2009, submitted for
  publication. [Online]. Available: \url{http://arxiv.org/abs/0910.4686}
\BIBentrySTDinterwordspacing

\bibitem{censi09}
A.~Censi, ``On the performance of {K}alman filtering with intermittent
  observations: a geometric approach with fractals,'' in \emph{American Control
  Conference (ACC)}, 2009, pp. 3806--3812.

\bibitem{vh08}
A.~Vakili and B.~Hassibi, ``A {S}tieltjes transform approach for studying the
  steady-state behavior of random {L}yapunov and {R}iccati recursions,'' in
  \emph{47th IEEE Conference on Decision and Control, CDC 2008}, Canc\'un,
  Mexico, Dec 2008, pp. 453--458.

\bibitem{Bougerol}
P.~Bougerol, ``Kalman filtering with random coefficients,'' \emph{SIAM Journal
  on Control and Optimization}, vol.~31, no.~4, pp. 942 -- 959, July 1993.

\bibitem{Jacod-Shiryaev}
J.~Jacod and A.~Shiryaev, \emph{Limit theorems for stochastic processes}.\hskip
  1em plus 0.5em minus 0.4em\relax Berlin Heidelberg: Springer-Verlag, 1987.

\bibitem{DiaconisFreedman}
P.~Diaconis and D.~Freedman, ``Iterated random functions,'' \emph{SIAM REVIEW},
  vol.~41, no.~1, pp. 45--76, 1999.

\bibitem{ArnoldChueshovbook}
L.~Arnold, \emph{Random Dynamical Systems}.\hskip 1em plus 0.5em minus
  0.4em\relax Springer, 1998.

\bibitem{Chueshov}
I.~Chueshov, \emph{Monotone Random Systems - Theory and Applications}.\hskip
  1em plus 0.5em minus 0.4em\relax Springer, 2002.

\bibitem{Kallenberg}
O.~Kallenberg, \emph{Foundations of Modern Probability}, 2nd~ed.\hskip 1em plus
  0.5em minus 0.4em\relax Springer Series in Statistics., 2002.

\bibitem{Ethier-Kurtz}
S.~Ethier and T.~Kurtz, \emph{Markov Processes: Characterization and
  Convergence}.\hskip 1em plus 0.5em minus 0.4em\relax John Wiley and Sons,
  Inc., 1986.

\bibitem{ArnoldChueshov}
L.~Arnold and I.~Chueshov, ``Order-preserving random dynamical systems:
  Equilibria, attractors, applications,'' \emph{Dynamics and Stability of
  Systems}, vol.~13, pp. 265--280, 1998.

\end{thebibliography}

\end{document}